\newtheorem{thm}{Theorem}[section]
\newtheorem{cor}[thm]{Corollary}
\newtheorem{lem}[thm]{Lemma}
\newtheorem{prop}[thm]{Proposition}
\newtheorem{assumption}[thm]{Assumption}
\theoremstyle{definition}
\newtheorem{defn}[thm]{Definition}
\theoremstyle{remark}
\newtheorem{rmk}[thm]{\textbf{Remark}}
\numberwithin{equation}{section}
\newcommand{\E}{\mathbb{E}}
\renewcommand{\P}{\mathbb{P}}
\newcommand{\Q}{\mathbb{Q}}
\newcommand{\dif}{\,\mathrm{d}}
\newcommand{\df}{\mathrm{d}}
\newcommand{\wti}{\widetilde}
\newcommand{\poten}[1]{\mathrm{U}_{#1}}
\newcommand{\iden}{\boldsymbol{1}}
\newcommand{\half}{\dfrac{1}{2}}
\newcommand{\pian}[2]{\dfrac{\partial #1}{\partial #2}}
\newcommand{\R}{\mathbb{R}}
\newcommand{\as}{a.s.}
\newcommand{\ie}{i.e.}
\newcommand{\eg}{e.g.}
\newcommand{\dx}{\mathrm{d}x}
\newcommand{\dy}{\mathrm{d}y}
\newcommand{\ds}{\mathrm{d}s}
\newcommand{\dt}{\mathrm{d}t}
\newcommand{\dz}{\mathrm{d}z}
\newcommand{\pd}[2]{\frac{\partial #1}{\partial #2}}
\newcommand{\pdd}[3]{\frac{\partial^{#1} #2}{\partial #3^{#1}}}
\newcommand{\F}{\mathcal{F}}
\newcommand{\Fc}{\mathcal{F}}
\newcommand{\eps}{\varepsilon}
\newcommand{\indic}[1]{\boldsymbol{1}_{\{\ensuremath{#1}\}}}
\newcommand{\me}{\mathrm{e}}
\DeclareMathOperator{\supp}{supp}
\newcommand{\SEP}{{\bf{SEP}${}^*(\sigma,\nu,\mu)$}}
\newcommand{\OPT}{{\bf{OPT}${}^*(\sigma,\nu,\mu)$}}
\newcommand{\Ep}[1]{\E\left[#1\right]}
\renewcommand{\Pr}{\P}
\renewcommand{\vec}[1]{\mathbf{#1}}
\newcommand{\tds}{\bar{\tau}_D}
\title{Optimal robust bounds for variance options}
\author{Alexander M.~G.~Cox\thanks{Department of Mathematical
    Sciences, University of Bath, Bath, U.~K..
    e-mail: \texttt{a.m.g.cox@bath.ac.uk}, web:
    \texttt{http://www.maths.bath.ac.uk/$\sim$mapamgc/}}
  \and
  Jiajie Wang\thanks{Dipartimento di Metodi e Modelli per l’Economia, il Territorio e la Finanza,
  Universit\`a di Roma `La Sapienza', Italy. e-mail: \texttt{jiajie.wang@uniroma1.it}}}
\date{\today}
\begin{document}
\maketitle

\begin{abstract}
  Robust, or model-independent properties of the variance swap are
  well-known, and date back to \citet{Dupire:93} and
  \citet{Neuberger:94}, who showed that, given the price of
  co-terminal call options, the price of a variance swap was exactly
  specified under the assumption that the price process is
  continuous. In \citet{CoxWang:11} we showed that a lower bound on
  the price of a variance call could be established using a solution
  to the Skorokhod embedding problem due to \citet{Root:69}. In this
  paper, we provide a construction, and a proof of optimality of the
  upper bound, using results of \citet{Rost:71} and \citet{Chacon:85},
  and show how this proof can be used to determine a super-hedging
  strategy which is model-independent. In addition, we outline how the
  hedging strategy may be computed numerically. Using these methods,
  we also show that the Heston-Nandi model is `asymptotically extreme'
  in the sense that, for large maturities, the Heston-Nandi model gives
  prices for variance call options which are approximately the lowest
  values consistent with the same call price data.

\end{abstract}

\section{Introduction} \label{sec:Introduction}

The classical approach to derivative pricing problems is to hypothesise a
certain model for the underlying asset, and to invoke the Fundamental Theorem
of Asset Pricing to identify arbitrage-free prices with discounted expectations
under risk-neutral measures. In this setting, market information, for example
in the form of traded `vanilla' options, may be incorporated by choosing a
parametrised class of models, and determining the parameters by finding the
best fit to the observed prices.

An alternative approach to incorporating market information is to use the
traded options as part of a hedging strategy. This approach can be particularly
beneficial in the presence of model-risk, since, if carefully chosen, the
hedging properties of the strategy may still hold under a wide
class of models. The archetypal example of this is the hedging of a variance
swap using a $\log$-contract due to \citet{Dupire:93} and
\citet{Neuberger:94}. Suppose a (discounted) asset price has dynamics under the
risk-neutral measure:
\begin{equation} \label{eq:BasicModel}
\frac{\dif S_t}{S_t} = \sigma_t \, \dif W_t,
\end{equation}
where the process $\sigma_t$ is not necessarily known. A variance swap is a
contract where the payoff depends on the realised quadratic variation of the
log-price process: $\langle \ln S\rangle_T = \int_0^T \sigma^2_t \,
\dt$. Dupire and Neuberger observed that
\[
\dif(\ln S_t) = \sigma_t \, \dif W_t - \half \sigma_t^2 \, \dt
\]
from which we conclude that
\begin{equation*}
  \int_0^T \sigma^2_t \, \dt = 2\ln(S_0) - 2\ln(S_T) + 2 \int \frac{1}{S_t} \,
  \dif S_t.
\end{equation*}
It follows that one can replicate (up to a constant) the payoff of a
variance swap by shorting two log-contracts (that is, contracts which
pay $\ln(S_T)$ at maturity), and dynamically trading in the asset so
as to always hold $2/S_t$ units of the asset. The resulting portfolio
will hedge the variance swap under essentially {\it any} model of the
form \eqref{eq:BasicModel} (subject to very mild measurability and
integrability conditions on $\sigma_t$).

It follows that the price of a variance swap is essentially determined once one
observes the price of a log-contract. In practice, call options are more
liquidly traded, but the log contract can be statically replicated given a
continuum of call options, and so this information is more commonly used to
determine the price.

In this paper, we consider what can be said about the price of options
which pay the holder a function of the realised variance in the
presence of co-terminal call options. Two important examples of these
contracts are the variance call, which has payoff $\left(\langle \ln
  S\rangle_T -K\right)_+$ and the volatility swap, which has payoff
$\sqrt{\langle \ln S\rangle_T} - K$. Unlike in the case of the
variance swap, it is no longer possible to provide a trading strategy
which exactly replicates the payoff in any model, however, we are able
to provide both super- and sub-hedging strategies which work for a
large class of models, and therefore provide model-independent bounds
on the prices of such options. Our methods rely on techniques from the
theory of Skorokhod embeddings, and in particular, we need a novel
proof of optimality of some existing constructions. In one direction,
the bounds have been established in a preceding paper,
\citet{CoxWang:11}, and some of the methods used in this paper for the
other direction follow similar approaches, although there remain
substantial technical differences.

In addition to establishing theoretical bounds on the prices of these
options, we provide some numerical investigation of the bounds we
obtain, and a justification of our numerical techniques. Finally, we
are also able to establish, using the optimality techniques of
\cite{CoxWang:11}, a result relating to the extremality of the
Heston-Nandi model (the classical stochastic volatility model of
Heston, where the volatility and asset processes are perfectly
anti-correlated). Essentially, we show that for large maturities,
under the Heston-Nandi model, the price of a variance call option will
closely approximate the lowest price possible in the class of models
which produce identical call prices. Given that the Heston model is
commonly used for pricing options on variance, and that calibration
can often lead to values of the correlation parameter close to $-1$,
this suggests that using this model for pricing variance options
amounts to taking a strong `bet' on which model most accurately
reflects reality.

The theme of model-independent, or robust, pricing is one that has
received a great deal of attention in recent years. The approach we
take in this paper can be traced back to \citet{Hobson:98}, and more
recent work, closely related to variance options, includes
\citet{Dupire:05,CarrLee:10,DavisOblojRaval:10,CoxWang:11} and
\citet{OberhauserDosReis:13}. In addition, \citet{HobsonKlimmek:11}
consider variance swaps where the model may include jumps --- a case
which we exclude. An alternative, related approach is based on the
{\it uncertain volatility} models of \citet{Avellaneda:1995aa}. Recent
papers which take this approach include
\citet{Galichon:2011aa,possamai_robust_2013} and
\citet{Neufeld:2013aa}. We explain how our results may be interpreted
in this framework in Remark~\ref{rmk:uncertainvolatility}. Other
recent connected work in this direction includes
\citet{Beiglbock:2013aa} and \citet{Dolinsky:2013aa}, where
connections with optimal transport are established.

We proceed as follows: in Section~\ref{sec:financial-motivation} we
introduce our financial setup, and explain why the financial problem
of interest can be related to the Skorokhod embedding problem. This
motivates Section~\ref{sec:constr-rosts-barr}, where we provide a
characterisation of the solution of \citet{Rost:71} and
\citet{Chacon:85} to the Skorokhod embedding problem. In
Section~\ref{sec:optRost} we give a novel proof of the optimality of
these barriers, and explain how these constructions may be used to
derive superhedges for certain options. In
Section~\ref{sec:numerical-results} we show how the solutions may be
computed numerically, and provide some graphical evidence of the
behaviour of the hedging strategies. In
Section~\ref{sec:extr-hest-nandi} we prove our optimality result
for the Heston-Nandi model.

\section{Financial Motivation} \label{sec:financial-motivation}

To motivate our financial models, we begin with a fairly classical
setup: we suppose that there is a market which consists of a traded
asset, with price $S_t$ defined on a probability space $\left(\Omega,
  \Fc, \left( \Fc_t\right)_{t \ge 0}, \P\right)$, satisfying the usual
conditions, with:
\begin{equation} \label{eq:Sdefn} \frac{dS_t}{S_t} = r_t \dif t +
  \sigma_t \dif W_t
\end{equation}
under some probability measure $\Q \sim \P$, where $\P$ is the
objective probability measure, and $W_t$ a $\Q$-Brownian motion. In
addition, we suppose $r_t$ is the risk-free rate which we require to
be known, but which need not be constant. In particular, let $r_t,
\sigma_t$ be locally bounded, progressively measurable processes so that the
integral in \eqref{eq:Sdefn} is well defined, and so $S_t$ is an It\^o
process. We suppose that the process $\sigma_t$ is not known (or more
specifically, we aim to produce conclusions which hold for all
$\sigma_t$ in the class described). Specifically, we shall, at least
initially, suppose:
\begin{assumption} \label{ass:Price} The asset price process, under
  some probability measure $\Q \sim \P$, is the solution to
  \eqref{eq:Sdefn}, where $r_t$ and $\sigma_t$ are locally bounded,
  predictable processes.
\end{assumption}
We shall later see that some relaxation of this condition is possible,
using pathwise approaches to stochastic integration.

In addition, we need to make the following assumptions regarding the
set of call options which are initially traded:
\begin{assumption} \label{ass:Calls} We suppose that call options with
  maturity $T$, and at all strikes $\left\{K: K \ge 0\right\}$ are
  traded at time $0$, and the prices, $C(K)$, are assumed to be
  known. In addition, we suppose call-put parity holds, so that the
  price of a put option with strike $K$ is $P(K) = e^{-\int_0^T r_s
    \dif s}K -S_0 + C(K)$. We make the additional assumptions that
  $C(K)$ is a continuous, decreasing and convex function, with $C(0)=
  S_0$, $C_+'(0) = -e^{-\int_0^T r_s \dif s}$ and $C(K) \to 0$ as $K
  \to \infty$.
\end{assumption}
Many of these notions can be motivated by arbitrage concerns (see
\eg{} \citet{CoxObloj:11}). That there are plausible situations in
which these assumptions do not hold can be seen by considering models
with bubbles (\eg{} \cite{CoxHobson:05}), in which call-put parity
fails, and $C(K) \not\to 0$ as $K \to \infty$. Let us define $B_t =
e^{\int_0^t r_s \dif s}$, and make the assumptions above. Since
(classically) prices correspond to expectations under $\Q$, the
implied law of $B_T^{-1}S_T$ (which we will denote $\mu$) can be
recovered by the Breeden-Litzenberger formula
\citep{BreedenLitzenberger:78}:
\begin{equation} \label{eq:BL} \mu((K,\infty)) = \Q^*(B_T^{-1}S_T \in
  (K,\infty)) = - B_T C_+'(B_TK).
\end{equation}
Here we have used $\Q^*$ to emphasise the fact that this is only an
{\it implied} probability, and not necessarily the distribution under
the actual measure $\Q$.  It can now also be seen that the assumption
that $C_+'(0) = -B_T^{-1}$ is equivalent to assuming that there is no
atom at 0 --- \ie{} $\mu((0,\infty)) = 1$. In general, the equality
here could be replaced with an inequality ($C_+'(0) \ge -B_T^{-1}$) if
one wished to consider models with a positive probability of the asset
being worthless at time $T$. We do not impose the condition that the
law of $B_T^{-1} S_T$ under $\Q$ is $\mu$, we merely note that this is
the law implied by the traded options. We also do not assume anything
about the price paths of the call options: our only assumptions are
their initial prices, and that they return the usual payoff at
maturity. Finally, it follows from the assumptions that $\mu$ is an
integrable measure with mean $S_0$.

Our goal is to now to use the knowledge of the call prices to find a
lower or upper bound on the price of an option which has payoff
\begin{equation*}
  F\left(  \int_0^T \sigma_t^2 \, \dt\right) = F\left(
    \left< \ln S\right>_T \right).
\end{equation*}
The term $\int_0^T \sigma_t^2 \, \dt$ is commonly referred to as the
{\it realised variance}.  There are a number of pertinent examples
which motivate us: the most common case, and where the answer is well
known, is the case of a variance swap, where the payoff of the option
is $(\left< \ln S\right>_T - K)$. An obvious modification of this is
the {\it variance call} which has payoff $(\left< \ln S\right>_T -
K)_+$, and the corresponding variance put. In addition, {\it
  volatility swaps} are traded, where the payoff is $\sqrt{\left< \ln
    S\right>_T} - K$. As well as options written on the realised
variance, there are classes of options which trade on various forms of
weighted realised variance: define \begin{equation*} RV^w_T = \int_0^T
  w(S_t) \, \df \left< \ln S\right>_t = \int_0^T w(S_t) \sigma_t^2 \,
  \dt
\end{equation*}
then many of the above options can be recast in terms of their
weighted versions. Common examples of these include options on
corridor variance, where $w(x) = \indic{x \in [a,b]}$, and the gamma
swap \citep{Lee:2010aa}, where $w(x)= x$. In fact, for a
simplified exposition, we will assume that the weight depends not on
the spot price, but rather on the discounted spot price (or
equivalently, the forward price). In the case of most interest, where
$w(x) = 1$, this makes no difference, as it would, for example in
an equity setting where the dividend yield and the interest rate were
the same; we also refer to \citet{Lee:2010ac}, where it is indicated
how such an approximation may be accounted for using a
model-independent hedge involving calls of {\it all} maturities,
although this is beyond the general methodology described in the
article, where we will generally assume that only calls of one
maturity are observed.

Note we assume that our underlying price process, $S_t$, has
continuous paths. This is an important assumption, and our conclusions
will not generally hold otherwise. \citet{HobsonKlimmek:11} consider
related questions in the case where the underlying asset may jump. We
also assume that the payoff of the option is exactly the realised
quadratic variation, wheras in reality, financial contracts will be
written on a discretised version of the quadratic variation (for
example, the sum of squared daily log-returns); the effects of this
approximation are also considered by, for example,
\citet{HobsonKlimmek:11}.

Our approach is motivated by the following heuristics. Consider the
discounted stock price: \begin{equation*} X_t = e^{-\int_0^t r_s \dif
    s} S_t = B_t^{-1} S_t.
\end{equation*}
Under Assumption~\ref{ass:Price}, $X_t$ satisfies the SDE:
\begin{equation*}
  \df X_t = X_t \sigma_t \dif W_t.
\end{equation*}
Let $\lambda(x)$ be a strictly positive, continuous function, and
define a time change $\tau_t = \int_0^t \lambda(X_s) \sigma_s^2 \dif
s$. Writing $A_t$ for the right-continuous inverse, so that
$\tau_{A_t} = t$, we note that $\wti{W}_t = \int_0^{A_t} \sigma_s
\lambda(X_s)^{1/2} \dif W_s$ is a Brownian motion with respect to the
filtration $\wti{\F}_t = \Fc_{A_t}$, and if we set $\wti{X}_t =
X_{A_t}$, we have:
\begin{equation}\label{eq:Xdiffusiondef}
  \df \wti{X}_t = \wti{X}_t \lambda(\wti{X}_t)^{-1/2} \dif \wti{W}_t.
\end{equation}
In particular, under mild assumptions on $\lambda$, $\wti{X}_t$ is now
a diffusion on natural scale, and we note also that $\wti{X}_0 = S_0$
and $\wti{X}_{\tau_T} = X_T = B_T^{-1} S_T$. It follows that
$(\wti{X}_{\tau_T},\tau_T) = (B_T^{-1} S_T,\int_0^T \lambda(X_s)
\sigma_s^2 \, \ds)$, and therefore that \eqref{eq:BL}, which implies
knowledge of the law of $B_T^{-1} S_T$, also tells us the law of
$\wti{X}_{\tau_T}$. The key observation is that there is now a
correspondence between the possible joint laws of a stopped diffusion
and its stopping time, and the joint laws of the (discounted) asset
price at a fixed time, and the weighted realised variance at that
time. Since we wish to find the extremal possible prices of options
whose payoff is $F\left(\int_0^T w(X_t)\sigma^2_t \, \dt\right)$, if
we take $\lambda(x) \equiv w(x)$, the problem would appear to be
equivalent to that of finding a stopping time which maximises or
minimises $\E F(\tau)$ subject to $\mathcal{L}(\wti{X}_\tau) = \mu$,
where $\mu$ is the law of $B_T^{-1}S_T$ inferred by the market call
prices. The general problem of finding a stopping time for a process
which has a given distribution is known as the Skorokhod Embedding
problem, and solutions with given optimality properties have been well
studied in recent years
\citep{AzemaYor:79,Hobson:98,CoxHobsonObloj:08,CoxObloj:11b} --- for a
survey of these results, we refer the reader to
\citet{HobsonSurvey}. In \citet{CoxWang:11}, we established that a
construction of a Skorokhod embedding due to \citet{Root:69}
corresponded to minimising payoffs of the above form where $F(\cdot)$
is a convex increasing function. In this paper, we show that a related
construction, which can be traced back to work of \citet{Rost:71}, and
\citet{Chacon:85}, maximises such payoffs. In addition, we shall show
how these constructions may be used to derive trading strategies,
which will super- or sub-hedge in any of the models under
consideration, and are a hedge in the extremal model.

Finally, we will also briefly consider a similar problem where the
option is {\it forward starting} --- so the payoff depends on the
realised variance accumulated between future dates $T_0$ and $T_1$. In
this case, we assume that we observe call prices at times $T_0$ and
$T_1$, which correspond to distributions at times $T_0$ and $T_1$. The
implied distribution at time $T_0$, $\nu$ say, can be interpreted as
the law of $X_{T_0}$. Taking $\wti{X}_{t} = X_{T_0+\tau_t}$, we get a
problem similar to above, but with $\wti{X}_{0} \sim \nu$ for some new
probability measure $\nu$. This setting will be considered further in
Remark~\ref{rmk:ForwardStart}.

\section{Construction of Rost's Barrier}\label{sec:constr-rosts-barr}

\subsection{Background}

In this section, we recall some important results due to Rost and
Chacon concerning the construction of solutions to the Skorokhod
embedding problem. These results are established under fairly general
assumptions on the underlying process $\wti{X}_t$. One of the main
goals of this section is to provide conditions under which we can
apply their results in our setting.  We begin by recalling the notion
of a reversed barrier.
\begin{defn}[Reversed Barrier
  (\citep{Obloj:04})]\label{defn:rostbarrier}
  A closed subset $B$ of $[-\infty,+\infty]\times[0,+\infty]$ is a
  \emph{reversed barrier} if
  \begin{enumerate}[(i). ]
  \item\label{rostbarrier1} $(x,0)\in B$ for all
    $x\in[-\infty,+\infty]$;
  \item\label{rostbarrier2} $(\pm\infty,t)\in B$ for all
    $t\in[0,\infty]$;
  \item\label{rostbarrier3} if $(x,t)\in B$ then $(x,s)\in B$ whenever
    $s<t$.
  \end{enumerate}
\end{defn}
Given a reversed barrier, we can construct a stopping time of a
process $\wti{X}_t$ as $\tau = \inf\{t > 0: (\wti{X}_t,t) \in
B\}$. Then it is known that, given a measure $\mu$ satisfying certain
conditions, there exists a reversed barrier which embeds the law
$\mu$, that is, such that $\wti{X}_\tau \sim \mu$. Moreover, in the
case where $\wti{X}_t$ is a diffusion, the reversed barrier has the
property that the corresponding stopping time minimises the {\it
  capped expectation} $\E\left[ \tau \wedge t\right]$ over solutions
to the Skorokhod embedding problem, for all $t \ge 0$. Using the
observation that
\begin{equation*}
  F(\tau) = F(0) + \tau F'(0) + \int_0^\infty F''(t) \left( \tau -t
  \right)_+ \, \dt
\end{equation*}
and $(\tau-t)_+ = \tau - \tau \wedge t$, and the fact that $\E \sigma$
depends only on the law of $X_{\sigma}$ for `nice' stopping times (a
point we will elaborate on shortly), then it is immediate that $\E
F(\tau)$ is maximised over such stopping times for all convex
functions with $F'(0) = 0$.

These observations are essentially due to work of Rost and Chacon. In
\citet{Rost:71}, the notion of a {\it filling scheme} stopping time
was introduced for a general Markov process, and this was shown to
embed and have the optimality property described. Later,
\citet{Chacon:85}\footnote{We note that Chacon calls what we refer to
  a reversed barrier as a barrier. We follow the terminology
  established in \citet{Obloj:04}.} proved that in many cases, the
filling scheme stopping time would actually be almost-surely equal to
a stopping time generated by a reversed barrier. More recent results
concerning these constructions can be found in \cite{CoxPeskir:12}.

Before proving our results characterising the reversed barriers, we
recall some important background. Given a probability distribution
$\mu$ and a Markov process $\wti{X}$, the Skorokhod embedding problem is to
find a stopping time $\tau$ such that $\wti{X}_\tau\sim\mu$. Motivated
by the financial setting, we consider the case
that \begin{equation}\label{eq:positivemu} \mu\text{ is a probability
    distribution with } \mu((0,\infty))=1,
\end{equation}
and $\wti{X}$ is a regular (see \citet{RogersWilliams:00b} for
terminology relating to one-dimensional diffusions) diffusion on $I =
(0,\infty)$, which is a solution to \eqref{eq:Xdiffusiondef}, with
initial distribution $\wti{X}_0 \sim \nu$, for some given distribution
$\nu$, and a continuous function $\lambda$ on $I$ which is strictly
positive. Since $0 \not\in I$, $0$ is inaccessible for $\wti{X}_t$.

Recall that we wish to include the case of forward-starting options,
in which case $\wti{X}_0$ is assumed to have the law inferred from the
call prices at the start date of the contract. In the theory of
Skorokhod embeddings, it is usually natural to restrict to the class
of {\it minimal} stopping times, however since $\wti{X}_t$ is
transient, any embedding will be minimal. Moreover, for example by
considering $\wti{X}_t$ as a time change of a Brownian motion stopped
on hitting $0$, if we restrict ourselves to laws $\mu$ and $\nu$ which
have the same mean, then any embedding of $\mu$ must have $(\wti{X}_{t
  \wedge \tau})_{t \ge 0}$ a uniformly integrable (UI) process, and
moreover a necessary and sufficient condition for the existence of an
embedding is that
\begin{equation}\label{eq:potcond}
  \poten{\nu}(x) := - \int_{\R} |y-x|
  \, \nu(\df y) \ge - \int_{\R} |y-x|
  \, \mu(\df y) =: \poten{\mu}(x) > -\infty,
\end{equation}
for all $x\in\R$. By Jensen's inequality, such a constraint is clearly
necessary for the existence of an embedding; further, using
time-change arguments, and reducing to the Brownian case, it is the
only restriction that is required.  To understand this notion in the
financial setting, note that from \eqref{eq:BL} we deduce that
$\poten{\mu}(x) = S_0 - 2C(B_Tx) - x$, giving an affine mapping
between the function $\poten{\mu}(x)$ and the call prices.

For any given reversed barrier, we define $D:=(\R\times I)\backslash
B$, and then one can show that there exists a unique upper
semi-continuous function $R:I\rightarrow[0,\infty]$ such that
\begin{equation*}
  D = \{(x,t): t> R(x)\}\ \ \text{ and }\ \ B = \{(x,t): 0 \leq t\leq R(x)\}.
\end{equation*}
Then the stopping time of interest is:
\begin{equation*}
  \tau_D = \inf\{t>0:\,(\wti{X}_t,t)\notin D\} = \inf\{t>0: t \leq R(\wti{X}_t)\},
\end{equation*}
and we will call such an embedding a {\it Rost stopping time}, or a
{\it Rost embedding}. We may also refer to $D$ as a reversed barrier,
with the meaning intended to be inferred from the notation.
Note that multiple barriers may solve the same stopping problem: for
example, if the target distribution contains atoms, the barrier
between atoms may be unspecified away from the starting point,
provided it is never beyond the `spikes' of the atoms.

We can put together the work of Rost and Chacon in the specific case
that the underlying process is a regular diffusion on $I$, satisfying
a certain regularity condition needed to ensure Chacon's result holds.
Specifically, we introduce the set:
\begin{equation}
  \label{eq:LambdaCond}
  \mathcal{D} = \left\{ \lambda \in C(I;\R)
    : \parbox{8cm}{$\lambda$ is strictly positive and
      the solution to \eqref{eq:Xdiffusiondef} defines a 
      regular diffusion on $I$, with transition density $p(t,x,y)$
      with respect to Lebesgue such that, for any $x_0 \in I$,
      $c>0$,  open set
      $A$ containing $x_0$ and $\eps >0$, there exists $\delta
      >0$ such that $|\left(p(x,x_0,t)-p(x,x_0,s)\right)x_0^2
      \lambda(x_0)^{-1}|<\eps$ whenever $|s-t| < 
      \delta$ and either $x_0 \not\in A$ or $t>c$.}\right\}
\end{equation}
Then we can prove the following result:
\begin{thm} \label{thm:RostChacon} Suppose $\mu$ and $\nu$ are
  probability measures on $I$ such that \eqref{eq:potcond} holds, and
  suppose $\wti{X}_t$ solves \eqref{eq:Xdiffusiondef} for some
  $\lambda \in \mathcal{D}$ with $\wti{X}_0 \sim \nu$:
  \begin{enumerate}
  \item if $\mu$ and $\nu$ have disjoint support, then there exists a
    reversed barrier $D$ such that $\wti{X}_{\tau_D} \sim \mu$;
  \item if $\mu$ and $\nu$ do not have disjoint support, then (on a
    possibly enlarged probability space) there exists a random
    variable $S \in \{0,\infty\}$, and reversed barrier $D$, such that
    $\wti{X}_{\tau_D \wedge S} \sim \mu$.
  \end{enumerate}
  Moreover, in both cases, the resulting embedding maximises $\E
  F(\sigma)$ over all stopping times $\sigma$ with $\wti{X}_\sigma
  \sim \mu$ and $\E \sigma = \E \tau_D < \infty$, for any convex
  function $F$ on $[0,\infty)$.
\end{thm}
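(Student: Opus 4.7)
The statement is really a translation between our diffusion setting and the abstract Rost/Chacon machinery: once one checks that $\wti X$ fits into Chacon's framework, both the barrier representation and the optimality property follow with little further work. I would therefore organise the proof into three parts: (a) produce the (possibly randomised) embedding via Rost's filling scheme; (b) upgrade it to a hitting time of a reversed barrier via Chacon's theorem; and (c) deduce optimality for convex payoffs from optimality of capped expectations.

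\textbf{Existence and barrier representation.} Under the potential ordering \eqref{eq:potcond}, Rost's filling-scheme construction produces a stopping time $\tau^*$ (on a suitable enlargement of the probability space) with $\wti X_{\tau^*}\sim\mu$ and with the extremal property that $\E[\tau^*\wedge t]$ is minimal, over all embeddings of $\mu$, for every $t\ge 0$. I would then invoke Chacon's theorem to identify $\tau^*$, almost surely, with the hitting time $\tau_D$ of a reversed barrier $D$. The hypothesis needed for Chacon's theorem is an equicontinuity statement on the transition density of $\wti X$, tilted by the factor $x^2\lambda(x)^{-1}$; the class $\mathcal{D}$ defined in \eqref{eq:LambdaCond} is built precisely so that this hypothesis is satisfied, so this step amounts to unpacking the definition. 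When $\supp\mu\cap\supp\nu=\emptyset$, the filling scheme retains no mass at time $0$ and $\tau_D$ itself is the embedding. In the overlapping case, the instantaneous mass retained at time $0$ is naturally encoded by an external $\{0,\infty\}$-valued random variable $S$, independent of the future of $\wti X$ given $\wti X_0$ and with conditional law on $\supp\mu\cap\supp\nu$ prescribed by the filling scheme, so that $\wti X_{\tau_D\wedge S}\sim\mu$.

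\textbf{Optimality for convex $F$.} For any convex $F:[0,\infty)\to\R$ one has the representation
\begin{equation*}
  F(x) = F(0) + F'_+(0)\,x + \int_0^\infty (x-t)_+\,F''(\dif t),
\end{equation*}
where $F''$ denotes the positive second-derivative measure. Applying this to a generic competitor $\sigma$ with $\wti X_\sigma\sim\mu$ and $\E\sigma=\E\tau_D<\infty$, and swapping expectation and integral by Tonelli, one obtains
\begin{equation*}
  \E F(\sigma) = F(0) + F'_+(0)\,\E\sigma + \int_0^\infty \bigl(\E\sigma - \E[\sigma\wedge t]\bigr)\,F''(\dif t).
\end{equation*}
The first two terms are identical for every such $\sigma$, so maximising $\E F(\sigma)$ is equivalent to minimising $\E[\sigma\wedge t]$ pointwise in $t$; and this is exactly the extremal property delivered by the filling scheme in the previous step.

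\textbf{Main obstacle.} The only genuinely analytic piece of work is the verification that $\wti X$ falls into Chacon's framework, i.e.\ that the equicontinuity condition baked into $\mathcal{D}$ really does govern our diffusion uniformly enough for the filling scheme to coincide a.s.\ with a reversed-barrier hitting time; once this is in place the remainder is essentially bookkeeping. The other mildly delicate point is the overlapping-support case, where one must verify that $S$ can be chosen to be a stopping time for an enlarged filtration under which $\wti X$ retains its law and Markov structure, so that $\tau_D\wedge S$ is legitimately a stopping time that inherits the Rost optimality.
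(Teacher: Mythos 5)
Your outline matches the paper's: Rost's filling scheme for existence, Chacon's theorem for the barrier representation, and then optimality. Your treatment of the optimality step is in fact more explicit than the paper's, which simply cites Chacon's Proposition~2.2; you unpack it via the representation $F(x)=F(0)+F'_+(0)x+\int_0^\infty (x-t)_+\,F''(\dif t)$ together with the identity $\E(\sigma-t)_+=\E\sigma-\E[\sigma\wedge t]$ and the constraint $\E\sigma=\E\tau_D$, which is precisely the heuristic the paper gives before the theorem. That part is fine.

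Where you leave a real gap is the overlapping-support case, and you half-acknowledge it yourself. It is not enough to say that the mass retained at time $0$ is ``naturally encoded'' by a $\{0,\infty\}$-valued $S$: one needs (i) an explicit construction of $S$, (ii) an identification of which barrier to use once some initial mass has been stripped off, and (iii) a verification that the resulting time $\tau_D\wedge S$ is itself a filling-scheme stopping time, so that the Rost/Chacon optimality transfers to it. The paper does all three: it applies the Hahn--Jordan decomposition to $\mu-\nu$ to write $\mu-\nu=\mu_0-\nu_0$ with $\mu_0,\nu_0$ mutually singular, sets $\nu\wedge\mu=\nu-\nu_0=\mu-\mu_0$, enlarges the space by an independent uniform $Z$ and defines $S=0$ iff $Z\le \frac{\dif(\nu\wedge\mu)}{\dif\nu}(\wti X_0)$, and then constructs the reversed barrier for the normalised residual measures $\nu^*=\nu_0/\nu_0(I)$ and $\mu^*=\mu_0/\mu_0(I)$ (checking $\poten{\nu_0}-\poten{\mu_0}=\poten{\nu}-\poten{\mu}$ so the construction is legitimate), and observes that this first randomised step is exactly the first step of Rost's filling-scheme construction. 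Without that last observation you have produced \emph{an} embedding with a barrier, but you have not shown it inherits the minimality of capped expectations, which is the hinge of the whole optimality argument. Also, a small imprecision: $S$ has no ``law on $\supp\mu\cap\supp\nu$''; the object with a law there is $\wti X_0$ restricted to $\{S=0\}$, namely $\nu\wedge\mu$.

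Finally, to make step (b) airtight you should also state, as the paper does, that the diffusion is self-dual with respect to its speed measure (so the duality hypothesis of Chacon's Theorem~3.24 holds) and that the transition density exists; the equicontinuity built into $\mathcal D$ is only the third of Chacon's three hypotheses.
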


That the condition $\E \sigma = \E \tau_D$ is reasonable can be seen
by considering
\begin{equation}\label{eq:Qvers1}
  Q(\wti{X}_t) := \int_{x_0}^{\wti{X}_t} \int_{x_0}^y \lambda(z) z^{-2} \, \dz \dy =
  \int_0^t \int_{x_0}^{x_s} \lambda(y) y^{-2} \, \dy \df \wti{X}_s + \half t + Q(\wti{X}_0).
\end{equation}
Noting that $Q(x)$ is convex, we can take expectations along a
localising sequence and apply Fatou's Lemma to see that, for any
stopping time $\tau$, $\E \tau \ge \E Q(\wti{X}_\tau)$, the second
term depending only on the law of $\wti{X}_\tau$. For well behaved
stopping times (specifically, where $\E Q(\wti{X}_{\tau\wedge N}) \to
\E Q(\wti{X}_{\tau})$), we would in fact expect equality here.

We also observe that there is a trivial extension of this result when
$F(\cdot)$ is a concave function, then $-F(\cdot)$ is a convex
function, and so the resulting stopping time minimises $\E F(\sigma)$
over the same class of stopping times.

\begin{proof}[Proof of Theorem~\ref{thm:RostChacon}]
  We first show that, under the conditions above, there exists a
  filling scheme stopping time.

  Standard time-change arguments, and reduction to the Brownian case
  show that when $\mu$ and $\nu$ have the same mean, then
  \eqref{eq:potcond} is both necessary and sufficient for the
  existence of a Skorokhod embedding which solves the problem. In
  addition, by
  \citet[Theorem~4]{Rost:71}, this is sufficient to deduce the
  existence of a filling scheme stopping time: 
  from the proof of this result, it is clear that whenever an
  embedding exists, it can be taken as a filling scheme stopping time.

  Now consider the case where $\mu$ and $\nu$ have disjoint
  support. Then \citet[Theorem~3.24]{Chacon:85} states that a filling
  scheme stopping time is a reversed barrier stopping time provided:
  \begin{enumerate}
  \item $\wti{X}_t$ is a standard Markov process, in duality with a
    standard Markov process $\hat{X}_t$ (we refer the reader to \eg{}
    \citet[Definition~VI.1.2]{BlumenthalGetoor:68});
  \item the transition measures relating to $\wti{X}_t$ and
    $\hat{X}_t$ have densities;
  \item\label{item:1} the transition density $p(x,y,t)$ for
    $\wti{X}_t$ satisfies an equicontinuity property: for any $x_0 \in
    \R$, $c>0$ and open set $A$ containing $x_0$, then given $\eps
    >0$, there exists $\delta >0$ such that
    $|p(x,x_0,t)-p(x,x_0,s)|<\eps$ whenever $|s-t| < \delta$ and
    either $x_0 \not\in A$ or $t>c$.
  \end{enumerate}
  Since $\wti{X}_t$ is a regular diffusion with inaccessible
  endpoints, it is its own dual process, with respect to the speed
  measure (which is $\lambda(x) x^{-2} \dx$), see
  \citet[Remark~1.15]{Fitzsimmons:98}. Moreover, under these
  conditions, the transition density exists (\eg{}
  \citet[Theorem~V.50.11]{RogersWilliams:00b}), and
  \eqref{eq:LambdaCond} guarantees that \ref{item:1} holds.

  Finally, suppose $\mu$ and $\nu$ are not disjointly supported. Then,
  using the Hahn-Jordan decomposition, we can find disjoint,
  non-negative measures $\nu_0$ and $\mu_0$, such that $\nu_0(A) \le
  \nu(A)$, $\mu_0(A) \le \mu(A)$, for all $A \in
  \mathcal{B}(0,\infty)$, and $\mu-\nu = \mu_0-\nu_0$. Since $\mu$ and
  $\nu$ are not disjoint, $\mu_0$ and $\nu_0$ are non-trivial. We
  write in addition $\nu \wedge \mu = \nu - \nu_0$, observing that
  also then $\nu \wedge \mu = \mu - \mu_0$. By enlarging the probability
  space if necessary, let $Z$ be a uniform random variable on $(0,1)$,
  independent of the process, and define the Radon-Nikodym derivative
  $f = \dif (\nu \wedge \mu)/ \dif \nu$. Then if we set
  \begin{equation*}
    S =\begin{cases}
      0,\ &\text{if }\ Z\leq f(\wti{X}_0);\\
      \infty,\ &\text{if }\ Z> f(\wti{X}_0),
    \end{cases}
  \end{equation*}
  it follows that $\P(X \in A, S = 0) = (\nu \wedge \mu) (A)$. Now
  define the normalised measures, $\nu^*(A) = \nu_0(A) /
  \nu_0((0,\infty))$ and $\mu^*(A) = \mu_0(A) / \mu_0((0,\infty))$,
  and construct the reversed barriers for the initial distribution
  $\nu^*$ and target distribution $\mu^*$. (It is straightforward to
  check that $U_{\nu_0}(x) - U_{\mu_0}(x) = U_{\nu}(x) - U_{\mu}(x)$,
  and therefore that the construction is possible.) However, it is now
  clear that this embeds and is exactly the stopping time described in
  the statement of the theorem. Moreover, this description of the
  first step of the construction is exactly the first step described
  in the construction of the filling scheme by Rost, so it follows
  that this stopping time is a filling scheme stopping time.

  The final statement is \citet[Proposition~2.2]{Chacon:85}.
\end{proof}

It is clear that the above conditions include the main case of
interest --- the case where $\lambda(x) = 1$, which is the case
corresponding to options on realised variance. For the point $0$ to be
inaccessible, we require $\int_{0+} \lambda(x) x^{-1} \, \dx = \infty$
\citep[Theorem~V.51.2]{RogersWilliams:00b}. In principle, this would
exclude, for example, the case where $\lambda(x) = x$, the Gamma swap,
however in practice, this case could be approximated by taking
$\lambda(x) = x \vee \eps$ for some small $\eps>0$ . Note however that
in the case where $\lambda(x) = x$, $\wti{X}_t$ is a Bessel process of
dimension $0$, and so in particular, the process will hit zero with
positive probability; this is not crucial, since we assume our target
measure has no atom of mass at $0$, and so we would expect the
reversed barrier to stop the process before hitting zero, and
therefore the exact behaviour near zero should not affect the barrier substantially.

\subsection{Construction of reversed barriers}

In this section, we show how the reversed barrier determined in
Theorem~\ref{thm:RostChacon} can be constructed.  As above, we suppose
that we have a time-homogeneous diffusion, $\wti{X}_t$, such that:
\begin{equation}
  \label{eq:Xdefn}
  \begin{split}
    \df \wti{X}_t & = \sigma(\wti{X}_t) \df W_t,\\
    \wti{X}_0 & \sim \nu,
  \end{split}
\end{equation}
so $\sigma(x) = x \lambda(x)^{-1/2}$. In addition, we suppose that the
diffusion coefficient, $\sigma: I \to (0,\infty)$ is a continuously
differentiable function such that:
\begin{equation}
  \label{eq:SigmaCond}
  x^2 \sigma(x)^{-2} \in \mathcal{D}, \quad |\sigma(x) x^{-1}| \text{ and } |\sigma'(x) \sigma(x) x^{-1}| \text{
    are bounded on } (0,\infty),
\end{equation}
or equivalently, that $\lambda: I \to (0,\infty)$ is continuously
differentiable, and
\begin{equation}
  \label{eq:LambdaCond2}
  \lambda(x) \in \mathcal{D}, \quad |\lambda(x)^{-1}| \text{ and } |\lambda'(x) \lambda(x)^{-2} x| \text{
    are bounded on } (0,\infty).
\end{equation}
Then our general problem is:
\begin{center}
  \begin{tabular}{rp{9cm}}
    \SEP{}: & Find an upper-semicontinuous
    function $R(x)$ such that
    the domain $D= \{(x,t): t > R(x)\}$ has $\wti{X}_{\tau_D\wedge S} \sim
    \mu$, where $\wti{X}_t$ is given by \eqref{eq:Xdefn},
    $\tau_D = \inf\{t>0: (\wti{X}_t,t) \not\in D\} = \inf\{t>0: t \le
    R(\wti{X}_t)\}$, and $S \in \{0,\infty\}$ is an $\Fc_0$-measurable random
    variable such that $\wti{X}_0 \sim \nu \wedge \mu$ on $\{S=0\}$.
  \end{tabular}
\end{center}
Here, the measure $\nu \wedge \mu$ is as defined in the proof of
Theorem~\ref{thm:RostChacon}. We restrict ourselves to the case where
$\poten{\nu}(x) \ge \poten{\mu}(x)$. We will also introduce the
notation $\tds = \tau_D \wedge S$.

Then we have the following result:
\begin{thm}
  Suppose \eqref{eq:Xdefn} and \eqref{eq:SigmaCond} hold. Assume $D$
  solves \SEP{}.  Then 
  $u(x,t) = \poten{\mu}(x) + \E^\nu \left| x-\wti{X}_{t\wedge\tds}\right|$ is
  the unique bounded viscosity solution to:
  \begin{subequations}\label{eq:ViscSoln}
    \begin{align}
      \label{eq:ViscSolnMain}
      \pd{u}{t}(x,t) & = \left( \frac{\sigma(x)^2}{2} \pd{^2
          u}{x^2}(x,t)\right)_+ \\
      u(0,x) & = \poten{\mu}(x) -
      \poten{\nu}(x). \label{eq:ViscSolnBC}
    \end{align}
  \end{subequations}
  Moreover, given the solution $u$ to \eqref{eq:ViscSoln}, a reversed
  barrier $D$ which solves \SEP{} can be recovered by $D = \{ (x,t) :
  u(x,t) > u(0,t)\}$.
\end{thm}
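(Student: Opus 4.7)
The plan is to verify that the probabilistically defined $u$ is a bounded viscosity solution of \eqref{eq:ViscSoln}, invoke a comparison principle to obtain uniqueness, and recover the reversed barrier from the set where $u$ strictly exceeds its initial value. The argument parallels the companion Root-barrier PDE characterisation in \citet{CoxWang:11}, with the obstacle being flipped in sign.

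First I would record the easy facts. Setting $\mu_t := \mathrm{Law}(\wti{X}_{t \wedge \tds})$, one has $u(x,t) = \poten{\mu}(x) - \poten{\mu_t}(x)$, and the direct computation $\E^\nu|x - \wti{X}_0| = -\poten{\nu}(x)$ yields the initial condition \eqref{eq:ViscSolnBC}. Boundedness follows from the fact that $(\wti{X}_{t \wedge \tds})$ is uniformly integrable with terminal law $\mu$ of the same mean as $\nu$: the linear tails of $\poten{\mu}$ and $\poten{\mu_t}$ cancel, so $u$ is uniformly bounded on $I \times [0,\infty)$. Joint continuity on $[0,\infty) \times I$ follows from the equicontinuity of the transition density built into \eqref{eq:LambdaCond}, together with dominated convergence.

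The heart of the proof is the viscosity solution property. By Tanaka's formula applied to the stopped process,
\begin{equation*}
|x - \wti{X}_{(t+h) \wedge \tds}| - |x - \wti{X}_{t \wedge \tds}| = \int_t^{t+h} \sgn(\wti{X}_r - x)\, \indic{r < \tds} \, \df \wti{X}_r + L^x_{(t+h) \wedge \tds} - L^x_{t \wedge \tds},
\end{equation*}
so $u(x, t+h) - u(x,t) = \E^\nu [L^x_{(t+h) \wedge \tds} - L^x_{t \wedge \tds}]$. The expected local-time rate at $x$ equals $\sigma(x)^2$ times the density at $x$ of the \emph{alive} sub-probability, whereas (distributionally) $\partial_{xx} u(x,t) = 2 q_t(x) - 2 p_\mu(x)$, where $q_t$ is the density of $\mu_t$ near $x$ and $p_\mu$ that of $\mu$. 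In the interior of $D$ all the mass at $x$ is still alive and $q_t(x) \ge p_\mu(x)$, giving $\partial_t u = \tfrac{1}{2}\sigma^2 \partial_{xx} u = (\tfrac{1}{2}\sigma^2 \partial_{xx} u)_+$; in the interior of $B$ no alive mass remains at $x$, so both sides vanish. Both regimes are packaged in the viscosity sense via the standard test-function technique: given $\phi \in C^{1,2}$ touching $u$ from above at $(x_0,t_0)$, apply It\^o's formula to $\phi(\wti{X}_r, r)$ on $[t_0, t_0+h]$ after a localisation, take expectations, use optional stopping, divide by $h$ and let $h \downarrow 0$ to extract the sub-solution inequality; the super-solution inequality is symmetric.

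The main technical obstacle is uniqueness of bounded viscosity solutions. The equation is a degenerate parabolic obstacle problem with globally Lipschitz nonlinearity $y \mapsto y_+$, and I would establish comparison by a Crandall--Ishii--Lions doubling-of-variables argument: the bounds \eqref{eq:SigmaCond} on $\sigma(x)^2 x^{-2}$ and $\sigma'(x)\sigma(x) x^{-1}$ are exactly what controls the doubling error terms, while boundedness eliminates tail concerns; inaccessibility of the endpoints $0$ and $\infty$ together with integrability of $\mu$ and $\nu$ handle the boundary of $I$. For the recovery, note that $u(x,t) - u(x,0) = \E^\nu L^x_{t \wedge \tds}$ is non-decreasing in $t$ and strictly positive exactly when positive local time accumulates at $x$ before $\tds$. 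In the interior of $D$ this is immediate from regularity of the diffusion, which visits every open set with positive probability; in $B$ the process is stopped the moment $(\wti{X}_r, r)$ enters $B$, so no further local time at $x$ accrues. Upper semicontinuity of $R$ and openness of $D$ close up the boundary, yielding $D = \{(x,t) : u(x,t) > u(x,0)\}$ (which I believe is the intended form of the stated identification).
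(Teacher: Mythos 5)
Your argument is sound and reaches the same conclusion, but the route through the key steps differs from the paper's in several places worth noting. For the derivative analysis, you use Tanaka's formula to write $u(x,t+h)-u(x,t) = \E^\nu[L^x_{(t+h)\wedge\tds}-L^x_{t\wedge\tds}]$ and then argue via the local-time rate; the paper instead uses the fact that $-\poten{\mu}$ and $h(x,t)=\E^\nu|x-\wti{X}_{t\wedge\tds}|$ are convex, decomposes $h_{xx}/2$ into the sub-probability measures $\mu_t^1$ (alive) and $\mu_t^2$ (stopped), and shows directly that $\mu_t^2=\mu$ on the stopped region while $\mu_t^1$ admits a density $f^1(x,t)$ vanishing at the barrier. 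The two are equivalent in content, though your phrasing ``all the mass at $x$ is still alive'' in the interior of $D$ is slightly off --- the correct statement is that $\mu_t^2$ is already saturated to $\mu$ near $x$ and there is in addition nonnegative alive mass $f^1(x,t)\ge 0$, which is what gives $q_t \ge p_\mu$. For the viscosity verification, you describe the sub- and super-solution inequalities as ``symmetric,'' but the paper handles them by genuinely different arguments: test functions touching from below use the vanishing of $f^1$ near the barrier, while test functions touching from above use Jensen's inequality (monotonicity of $u$ in $t$) together with $u_{xx}\le 0$ for $t\le R(x)$. For uniqueness, you invoke Crandall--Ishii--Lions comparison whereas the paper applies \citet[Theorem~V.9.1]{FlemingSoner:06} to the change of variable $u(t,\me^y)$, which is what transforms the half-line problem to the whole line where the standing hypotheses apply --- your sketch should account for this. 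Finally, you correctly infer that the stated identification $D=\{(x,t):u(x,t)>u(0,t)\}$ is an argument-swap typo for $D=\{(x,t):u(x,t)>u(x,0)\}$, matching the paper's own proof; your local-time reading of $u(x,t)-u(x,0)=\E^\nu L^x_{t\wedge\tds}$ is a clean way to see the recovery.
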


\begin{rmk}
  In a recent paper, \citet{OberhauserDosReis:13} make a very similar
  observation, and they also use a viscosity solution approach to
  derive existence and uniqueness in the Rost setting (unlike
  \cite{CoxWang:11} where a variational inequality-based approach is
  taken). They also work in the slightly more general setting where
  the diffusion coefficient $\sigma$ may depend both on time and
  space; it seems very likely that the results should extend to this
  setting, but we observe that the optimality of such a construction
  is no longer easily determined; given that we are interested in the
  optimality properties of such processes, we restrict ourselves to
  the time-homogenous case.
\end{rmk}

We wish to use standard results on viscosity solutions from
\citet{FlemingSoner:06}. The equation \eqref{eq:ViscSoln} is really a
forward equation, rather than a backward equation, which is the
setting in \cite{FlemingSoner:06}; however we can apply their results
to the function $v(x,T-t) = -u(x,t)$, for a fixed $T>0$. Since $T$ is
arbitrary, the extension to an infinite horizon is straightforward.

\begin{proof}
  We first note that since $\mu$ and $\nu$ are integrable, the
  function $u$ is bounded, both $-\poten{\mu}(x)$ and $h(x,t) =
  \E^{\nu}|x-\wti{X}_{t \wedge \tds}|$ are convex, continuous
  functions, and so their second derivatives (in $x$) exist as
  positive measures. Moreover, $-\poten{\mu}''(x) = 2 \mu(\dx)$, and
  we can also decompose $h_{xx}(x,t)/2$ into two measures $\mu_t^1$ and
  $\mu_t^2$ defined by:
  \begin{align*}
    \int f(x) \, \mu_t^1 (\dx) & = \E^\nu \left[f(\wti{X}_t); t <
      \tds\right]\\
    \int f(x) \, \mu_t^2 (\dx) & = \E^\nu \left[f(\wti{X}_{\tds}); t
      \ge \tds\right].
  \end{align*}
  In particular, since $\tds$ embeds $\mu$, we must have $\mu_t^2(A)
  \le \mu(A)$ for all $A$, and $\mu_t^2(A) = \mu(A)$ for all $A
  \subseteq \{x | t \ge R(x)\}$. In addition, $\mu_t^1$ will be
  dominated by the transition density of a diffusion (which exists by
  \eqref{eq:LambdaCond}) started with distribution $\nu$, and so will
  have a density $f^1(x,t)$ with respect to Lebesgue for all $t>0$,
  and since $D$ is open, by a slight modification of
  \citet[Lemma~3.3]{CoxWang:11}, it is easily checked that $u(x,t)$
  satisfies $\pd{u}{t}(x,t) =\frac{\sigma(x)^2}{2} \pd{^2
    u}{x^2}(x,t)$ in $D$. We observe also that $f^1(x_n,t_n) \to 0$
  whenever $(x_n,t_n) \to (x,t) \in B$, since $f^1$ is dominated by
  the density of a diffusion with initial law $\nu$, killed if it hits
  $x$ before $t$, which also has this property.
  
  The result now follows from
  \citet[Proposition~V.4.1]{FlemingSoner:06}, when we observe that for
  $t\le R(x)$, any smooth function $w$ such that $w-u$ is a local
  minimum at $(x,t)$ must have $\pd{w}{t} = 0$. For $t < R(x)$ this
  follows from observing that $\E^\nu \left|
    x-\wti{X}_{t\wedge\tds}\right|$ is constant in $t$ whenever
  $t<R(x)$. For $t=R(x)$, we observe that $f^1(x,t)$ can be made
  arbitrarily small in a ball near $(x,t)$, and $\pd{^2u}{x^2} =
  f^1(x,t)$ in $D$. Hence, for such $w$, we have $-w_t +
  \left(\sigma^2(x) w_{xx}\right)_+ \ge 0$. For smooth $w$ such that
  $w-u$ has a local maximum at $(x,t)$, we first observe that for $t
  \le R(x)$, the argument above implies that $u_{xx}(x,t) \le 0$, and
  so $w_{xx}(x,t) \le 0$ also. In addition, by Jensen's inequality,
  $u(x,t)$ is non-decreasing, so $w_{t}(x,t) \ge 0$. It follows that
  $-w_t + \left(\sigma^2(x) w_{xx}\right)_+ \le 0$.

  It follows that $u$ is indeed a bounded viscosity solution. To see
  that it is unique, we apply \citet[Theorem~V.9.1]{FlemingSoner:06}
  to $u(t,\me^y)$ (noting also the comment immediately preceding this
  proof). It is now routine to check that \eqref{eq:SigmaCond} is
  sufficient to ensure that there is a unique bounded viscosity
  solution to \eqref{eq:ViscSoln}.

  To conclude that a reversed barrier can be recovered from a solution $u$ to
  \eqref{eq:ViscSoln}, we observe that Theorem~\ref{thm:RostChacon} guarantees
  the existence of a reversed barrier $D$, and by Jensen's inequality, we see
  that $D^* = \{(x,t): u(x,t) > u(t,0)\}$ does indeed define a reversed
  barrier. From the arguments above, we also conclude that $\Pr(\tds
  = \bar{\tau}_{D^*}) = 1$, since $\wti{X}_{\tds}$ is supported on the set
  where $u(x,t) = u(x,0)$, and $f(x,t) = 0$ on this set also.
\end{proof}

\begin{rmk}
  We observe also that the solution $u(x,t)$ has an interpretation in terms of
  an optimal stopping problem \citep[c.f.][Remark~4.4]{CoxWang:11}. Fix $T>0$ and
  set
  \begin{equation*}
    v(x,t) = \sup_{\tau \in [t,T]} \E^{(x,t)}\left[\poten{\mu}(\wti{X}_\tau) -
      \poten{\nu}(\wti{X}_{\tau})\right],
  \end{equation*}
  where the supremum is taken over stopping times $\tau$, and the expectation
  is taken conditional on $\wti{X}_t = x$. Then standard results for optimal
  stopping problems suggest that $v(x,t)$ is the solution to the viscosity
  equation:
  \begin{equation}\label{eq:OptStopVisc}
    \begin{split}
      & \max\left\{\half \sigma(x)^2 \pd{^2 v}{x^2}(x,t) + \pd{v}{t}(x,t),
        v(x,t)-(\poten{\mu}(x) - \poten{\nu}(x))\right\} = 0,\\ & v(x,T) =
      \poten{\mu}(x) - \poten{\nu}(x).
    \end{split}
  \end{equation}
  Now observe from the problem formulation that $v(x,t)$ is certainly
  decreasing in $t$ (a stopping time which is feasible for $t_1$ is
  also feasible for $t_0<t_1$, with the same reward). Using the fact
  that both the solution to \eqref{eq:ViscSoln} and
  \eqref{eq:OptStopVisc} are monotone in $t$, it is possible to deduce
  that $v(x,T-t)$ solves \eqref{eq:ViscSoln}, and $u(x,T-t)$ solves
  \eqref{eq:OptStopVisc}, so that they must be the same function.
\end{rmk}

\section{Optimality of Rost's Barrier, and superhedging strategies}
\label{sec:optRost}

\subsection{Optimality via pathwise inequalities}
\label{sec:optim-via-pathw}

For a given distribution $\mu$, Theorem~\ref{thm:RostChacon} says that
Rost's solution is the ``maximal variance'' embedding.  A slight
generalisation of this result leads us to consider the following
problem:
\begin{center}
  \begin{tabular}{rp{8.9cm}}
    \OPT{}:
    & Suppose \eqref{eq:positivemu} and \eqref{eq:potcond} hold,
    and $\wti{X}$ solves \eqref{eq:Xdefn}. Find a stopping time
    $\tau$ such that $\wti{X}_\tau \sim \mu$ and, for
    a given increasing convex function $F$ with $F(0)=0$,
    \begin{equation*}
      \E\big[ F(\tau)\big]
      \ =\ \sup_{\rho: \wti{X}_{\rho} \sim \mu}
      \E\big[ F(\rho)\big].
    \end{equation*}
  \end{tabular}
\end{center}

Our aim in this section is to find the super-replicating hedging
strategy for call-type payoffs on variance options, however it is not
immediately obvious how to recover such an identity directly from the
proofs of the optimality criterion given in \citet{Chacon:85}. Rather,
we shall provide a `pathwise inequality' which encodes the optimality
in the sense that we can find a supermartingale $G_t$, and a function
$H(x)$ such that
\begin{equation}
  \label{eq:pathwiseineq} F(t) \le G_t + H(\wti{X}_t)
\end{equation}
and such that, for Rost's embedding $\tds$, equality holds in
\eqref{eq:pathwiseineq} and $G_{t \wedge \tds}$ is a UI martingale.
It then follows that $\tds$ does indeed maximise $\E F(\tau)$ among
all solutions to the Skorokhod embedding problem, and further, using
\eqref{eq:pathwiseineq}, we can super-replicate call-type payoffs on
variance options by a dynamic trading strategy.

Suppose that we have already found the solution to \SEP,
$\tds$. Define the function
\begin{equation}
  \label{eq:Mdefn}
  M(x,t) = \E^{(x,t)}\left[f(\tds)\right],
\end{equation}
where $f$ is the left derivative of $F$ and $\tds$ is the corresponding hitting
time of $B$. Specifically, observe that $\wti{X}_t$ is a Markov process, and we
interpret the expectation as the average of $f(\tds)$ given that we start at
$\wti{X}_t = x$, with $\tds \ge t$. At zero, given the possibility of stopping
at time $0$, it is not immediately clear how to interpret the conditioning ---
it will turn out not to matter, but a natural choice would be to replace $\tds$
with $\tau_D$. In the following, we shall assume:
\begin{equation}\label{eq:Massumption}
  M(x, t)\text{ is locally bounded on }\R \times \R_+.
\end{equation}
Obviously, $M(x,t)=f(t)$ whenever $0\leq t<R(x)$.  Now given a fixed time
$T>0$, and choosing $S_0^* \in (\inf \supp(\mu),\sup \supp(\mu))$ (which is
non-empty if $\mu$ is non-trivial), we define
\begin{equation*}
  Z_{T}(x)\ =\ 2\int_{S_0^*}^{x}\int_{S_0^*}^{y}
  \dfrac{M(z,T)}{\sigma(z)^2}\dif z\dif y,
\end{equation*}
and in particular, $Z_{T}''(x)=2M(x,T)/\sigma^2(x)$, and $Z_T$ is a
convex function. Define also
\begin{align}
  G_{T}(x,t) &= F(T)-\int_{t}^{T}M(x,s)\dif s-Z_{T}(x) \nonumber\\
  H_{T}(x) &= \int_{R(x)}^{T}\Big[M(x,s)-f(s)\Big]\dif s+Z_{T}(x) \nonumber\\
  &= \int_{R(x)\wedge T}^{T}\Big[M(x,s)-f(s)\Big]\dif s+Z_{T}(x).\nonumber\\
  Q(x) & = \int_{S_0^*}^x\int_{S_0^*}^y\dfrac{2}{\sigma(z)^2}\dif z\dif
  y. \label{eq:Qdefn}
\end{align}
Then we have the following results
\begin{prop}
  \label{prop:GTineq}
  For all $(x,t,T)\in\R_+\times\R_+\times\R_+$, we have,
  \begin{equation}
    \label{eq:GTineq}
    \begin{cases}
      \ G_{T}(x,t)\,+\,H_{T}(x)\ \geq\ F(t),\ \ \ &\text{if }\ t>R(x)\,;\\
      \ G_{T}(x,t)\,+\,H_{T}(x)\ =\ F(t),&\text{if }\ t\leq R(x)\,.
    \end{cases}
  \end{equation}
\end{prop}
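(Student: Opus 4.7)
The plan is to verify \eqref{eq:GTineq} by directly computing $G_T(x,t)+H_T(x)-F(t)$ and reading off its sign. First I notice that $Z_T(x)$ enters $G_T$ and $H_T$ with opposite signs and therefore cancels, leaving
\[
G_T(x,t)+H_T(x) \;=\; F(T)\;-\;\int_t^T M(x,s)\,\ds\;+\;\int_{R(x)\wedge T}^T\bigl[M(x,s)-f(s)\bigr]\,\ds.
\]
Since $f$ is the left derivative of the convex increasing function $F$ with $F(0)=0$, we have $F(T)-F(t)=\int_t^T f(s)\,\ds$, and subtracting $F(t)$ gives the master identity
\[
G_T(x,t)+H_T(x)-F(t)\;=\;\int_t^T\bigl[f(s)-M(x,s)\bigr]\,\ds\;+\;\int_{R(x)\wedge T}^T\bigl[M(x,s)-f(s)\bigr]\,\ds,
\]
valid for all $t,T\ge 0$ with the convention $\int_t^T=-\int_T^t$ when $t>T$.

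The key structural input is a comparison of $M(x,\cdot)$ with $f$: for $s\le R(x)$ the point $(x,s)$ lies in the barrier $B$, so under $\P^{(x,s)}$ we have $\tds=s$ and hence $M(x,s)=f(s)$; for $s>R(x)$ the point $(x,s)$ lies in $D$, so $\tds\ge s$ almost surely, and since $f$ is non-decreasing we conclude $M(x,s)=\E^{(x,s)}[f(\tds)]\ge f(s)$. Thus $M(x,\cdot)-f(\cdot)$ vanishes on $[0,R(x)]$ and is non-negative on $[R(x),\infty)$.

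With this in hand, the two cases of \eqref{eq:GTineq} follow by bookkeeping applied to the master identity. When $t\le R(x)$, the integrand $f(s)-M(x,s)$ vanishes on the sub-interval where $s\le R(x)$, and the remaining parts of the two integrals (both over $[R(x)\wedge T,T]$, possibly with reversed orientation) cancel exactly, giving $G_T+H_T=F(t)$. When $t>R(x)$, the same cancellation combined with the second part of the comparison reduces the master identity to an integral of $M(x,s)-f(s)$ over the interval bounded by $R(x)$ and $t$, on which the integrand is non-negative; hence $G_T+H_T\ge F(t)$.

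The only substantive content is the monotonicity $M(x,s)\ge f(s)$ outside the barrier; the rest is careful tracking of integration limits in the sub-cases $R(x)\lessgtr T$ and $t\lessgtr T$. I expect no real difficulty here, only the need to be consistent about the orientation of $\int_t^T$ when $t>T$.
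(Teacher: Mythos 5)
Your proposal is correct and uses essentially the same approach as the paper: you exploit the cancellation of $Z_T$, the fundamental theorem of calculus $F(T)-F(t)=\int_t^T f$, and the key comparison $M(x,s)=f(s)$ for $s\le R(x)$ while $M(x,s)\ge f(s)$ for $s>R(x)$. The paper reaches the same conclusion by simplifying $G_T+H_T$ directly to $F(R(x))+\int_{R(x)}^{t}M(x,s)\,\ds$ (resp.\ $F(R(x))-\int_{t}^{R(x)}M(x,s)\,\ds$) rather than writing out your ``master identity,'' but the algebra and the decisive observations about $M$ versus $f$ are identical.
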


\begin{lem}
  \label{lem:GTmartsubmart}
  Under the setting \eqref{eq:positivemu} -- \eqref{eq:potcond}, suppose that
  the stopping time $\tds$ is the solution to \SEP{}. Moreover, assume $f$ is
  bounded and
  \begin{equation}
    \label{eq:qvTbddcond}
    \text{for any }\ T>0,\ \ \left(Q(\wti{X}_t);\,0\le t\le T\right)
    \ \text{ is a uniformly integrable family. }
  \end{equation} 
  Then for any $T>0$, the process
  \begin{equation} \label{eq:GTmart}
    \left(G_T(\wti{X}_{t\wedge\tds},t\wedge\tds);\,0\le t\le
      T\right) \ \text{ is a martingale,}
  \end{equation}
  and
  \begin{equation}\label{eq:GTsupmart}
    \left(G_T(\wti{X}_{t},t);\,0\le t\le T\right)
    \ \text{ is a supermartingale.}
  \end{equation}
\end{lem}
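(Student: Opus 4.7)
The plan is to compute the drift of $G_T(\wti X_t, t)$ via an It\^o expansion, show it vanishes on the open region $D = \{(x,t): t>R(x)\}$ and is non-positive on the open set $B$. Conclusion \eqref{eq:GTmart} then follows because, for $t \le \tds$, the process $(\wti X_t, t)$ lives in $\overline D$, while \eqref{eq:GTsupmart} is immediate from the sign of the drift on the full state space. The main driving fact is that, by the strong Markov property and boundedness of $f$, the process $M(\wti X_{t \wedge \tds}, t \wedge \tds) = \E[f(\tds) \mid \F_{t \wedge \tds}]$ is a bounded UI martingale; by parabolic regularity this forces $M \in C^{1,2}$ inside $D$ with $\partial_t M + \tfrac12\sigma(x)^2\partial_{xx}M = 0$, while on $B$ we have $M(x,t)=f(t)$, so $\partial_{xx}M \equiv 0$ and $\partial_t M = f'(t)$, with continuity forcing $M(x,R(x)) = f(R(x))$.

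Applying It\^o to $Z_T$, whose second derivative $2M(\cdot,T)/\sigma(\cdot)^2$ is locally bounded by \eqref{eq:Massumption}, gives $\df Z_T(\wti X_t) = Z_T'(\wti X_t)\sigma(\wti X_t)\,\df W_t + M(\wti X_t, T)\,\df t$. Combining this with a (formally justified) It\^o expansion of the map $(x,t) \mapsto \int_t^T M(x,s)\,\df s$, the drift of $G_T(\wti X_t, t)$ equals
\[
M(\wti X_t, t) - M(\wti X_t, T) - \tfrac12\sigma(\wti X_t)^2\int_t^T \partial_{xx}M(\wti X_t, s)\,\df s.
\]
On $D$ the backward PDE collapses the integral to $-2\sigma^{-2}(M(\cdot,T) - M(\cdot,t))$ and the drift cancels identically. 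On $B$, split the integral at $s = R(x)$: the $[t, R(x)]$ piece vanishes because $\partial_{xx}M \equiv 0$ there, and the PDE on $(R(x), T]$ together with $M(x, R(x)) = f(R(x))$ reduces the drift to $f(t) - f(R(x))$, which is $\le 0$ since $f$ is non-decreasing and $t \le R(x)$ on $B$. The integrability afforded by \eqref{eq:qvTbddcond} (which controls $Z_T'(\wti X_t)\sigma(\wti X_t)$) together with the boundedness of $M$ then upgrades the resulting local martingale to a true martingale on $[0,T]$, giving \eqref{eq:GTmart}; the global drift inequality gives \eqref{eq:GTsupmart}.

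The principal obstacle I foresee is justifying the It\^o expansion across the interface $\{t = R(x)\}$, where $\partial_x M$ may have a jump and $G_T$ is not globally $C^{1,2}$. The cleanest resolution is a mollification argument: approximate $M$ by smooth $M_\varepsilon$, carry out the drift computation for the corresponding $G_T^\varepsilon$, and pass to the limit using the uniform bound on $M$ from \eqref{eq:Massumption} and the integrability from \eqref{eq:qvTbddcond}. Any singular boundary contribution has support on $\partial D$, and since $R$ is upper semi-continuous with $M(x, R(x)) = f(R(x))$, such a contribution has the correct sign for the supermartingale conclusion and, on stopped paths, is supported on a set carrying no mass because $\wti X_{\tds}$ is supported where $u(x,t) = u(x,0)$ in the sense of Theorem~\ref{thm:RostChacon}.
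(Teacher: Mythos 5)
Your approach is a genuinely different, PDE-based route from the paper's proof, which is purely probabilistic. The paper applies the Meyer--It\^o formula \emph{only} to the convex function $Z_T$, then derives both conclusions from the strong Markov property and a time-shift comparison: it establishes, via a coupling observation, that
\[
\E\bigl[M(\wti{X}_t,u)\mid\F_s\bigr]\ \ge\ M(\wti{X}_s,\,u-(t-s)),
\]
with equality on $\{\tds>u\}$, and then shows that
$G_T(\wti{X}_s,s)-\E[G_T(\wti{X}_t,t)\mid\F_s]$
decomposes into two nonnegative integrals of such differences. That argument requires no regularity of $M$ near $\partial D$ at all, and works under the weak hypothesis that $f$ is merely bounded and increasing (no differentiability). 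Your proposal, by contrast, correctly identifies the underlying mechanism --- the generator of $G_T$ vanishes on $D$ and equals $f(t)-f(R(x))\le 0$ on the interior of $B$ --- and this is the same drift structure that drives the paper's analogous Root-side computation in Lemma~\ref{lem:PropofFn}.

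However, as written your proposal has real gaps that are not just routine. First, the collapse of $\int_{R(x)}^T\partial_{xx}M(x,s)\,\ds$ to $-2\sigma^{-2}(M(x,T)-M(x,R(x)^+))$ already presupposes both that this integral converges and that $M(x,\cdot)$ is continuous as $s\downarrow R(x)^+$; near the boundary $\partial_{xx}M$ may blow up, and the only way you bound it is via the PDE, so the argument is circular at precisely the delicate point. Second, you wave at a possible local-time contribution on $\partial D$ and assert it ``has the correct sign,'' but this is not automatic: the sign of a local-time term is that of the jump of $\partial_x G_T$ across the interface, and establishing the sign (or, as the paper does in the Root setting, showing $\partial_x G$ is in fact continuous across the boundary so no local time appears) requires separate work --- compare the careful verification of \cite{Peskir:05}'s hypotheses in the proof of Lemma~\ref{lem:PropofFn}. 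Third, mollifying $M$ destroys the exact PDE identity on $D$: the drift of $G_T^\eps$ will not cancel, and you would have to control the resulting error terms, which you do not do. None of these is obviously unfixable, but the proposal does not fix them, and the paper's Markov-property argument exists precisely to avoid having to.
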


We note that when $\sigma(x) = x$, so $\wti{X}_t$ is geometric Brownian motion,
then it is straightforward to check that, for all $T>0$, $\sup_{t\le
  T}\E[Q(\wti{X}_t)^2]<\infty$, and so \eqref{eq:qvTbddcond} is trivially
satisfied provided $\E[Q(\wti{X}_0)^2] < \infty$. Note also that since
$\wti{X}_t$ is a local martingale bounded below, for any embedding $\tau$ which
embeds $\mu$ we have $\E[\wti{X}_\tau]= \E[\wti{X}_{\tds}]$. It follows that if
$\E[\wti{X}_{\tds}] =\E[\wti{X}_0]$, any embedding of $\mu$ is a martingale,
and not just a local-martingale.

Then the main result of this section follows.

\begin{thm}
  \label{thm:optRost} Suppose that $\tds$ is
  the solution to \SEP{}, and \eqref{eq:qvTbddcond} holds, then
  $\tds$ solves \OPT{}. \end{thm}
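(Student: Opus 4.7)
The plan is to prove optimality of $\tds$ by combining the pathwise inequality \eqref{eq:GTineq} of Proposition~\ref{prop:GTineq} with the (sub/super)martingale identities of Lemma~\ref{lem:GTmartsubmart}, and then passing to the limit $T \to \infty$. I would first carry out the argument under the additional assumption that $f$ is bounded (so that Lemma~\ref{lem:GTmartsubmart} applies directly) and afterwards lift to a general convex, increasing $F$ with $F(0)=0$ by truncating $f_N := f \wedge N$, considering $F_N(t) := \int_0^t f_N(s)\,\df s$, and invoking monotone convergence in $N$.

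Fix an arbitrary embedding $\tau$ of $\mu$ and a horizon $T > 0$. Evaluating \eqref{eq:GTineq} at $(\wti{X}_{\tau \wedge T}, \tau \wedge T)$, taking expectations, and using the supermartingale property \eqref{eq:GTsupmart} with optional sampling at the bounded time $\tau \wedge T$ gives
\[
\E\!\left[F(\tau \wedge T)\right] \;\le\; \E\!\left[G_T(\wti{X}_0,0)\right] + \E\!\left[H_T(\wti{X}_{\tau \wedge T})\right].
\]
Repeating with $\tds$ in place of $\tau$, the martingale identity \eqref{eq:GTmart} upgrades the first inequality to an equality, and \eqref{eq:GTineq} becomes a pathwise equality on $\{\tds \le T\}$ because $\tds \le R(\wti{X}_{\tds})$ (by upper semicontinuity of $R$); hence
\[
\E\!\left[F(\tds \wedge T)\right] \;=\; \E\!\left[G_T(\wti{X}_0,0)\right] + \E\!\left[H_T(\wti{X}_{\tds \wedge T})\right] - \alpha(T),
\]
where $\alpha(T) \ge 0$ collects the gap supported on $\{\tds > T\}$. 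Subtracting yields
\[
\E\!\left[F(\tau \wedge T) - F(\tds \wedge T)\right] \;\le\; \E\!\left[H_T(\wti{X}_{\tau \wedge T}) - H_T(\wti{X}_{\tds \wedge T})\right] + \alpha(T).
\]

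The crucial structural observation is that $H_T$ depends only on the spatial variable, and since both $\wti{X}_\tau$ and $\wti{X}_{\tds}$ have law $\mu$ one has $\E[H_T(\wti{X}_\tau)] = \E[H_T(\wti{X}_{\tds})]$. Splitting $\E[H_T(\wti{X}_{\tau \wedge T})]$ according to $\{\tau \le T\}$ versus $\{\tau > T\}$, and analogously for $\tds$, the right-hand side collapses to a sum of tail expectations of the form $\E[H_T(\wti{X}_T)\ind_{\tau > T}]$, $\E[H_T(\wti{X}_\tau)\ind_{\tau > T}]$, their $\tds$-analogues, and $\alpha(T)$. Monotone convergence gives $\E[F(\tau \wedge T)] \to \E[F(\tau)]$ and $\E[F(\tds \wedge T)] \to \E[F(\tds)]$, so if each tail term vanishes as $T \to \infty$ we obtain $\E[F(\tau)] \le \E[F(\tds)]$, which is the claimed optimality (in the bounded-$f$ case).

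The hard part is precisely this uniform-in-$T$ control of the tail terms, because $H_T$ depends on $T$ non-trivially through $Z_T$. With $f$ bounded one has $Z_T(x) \le \|f\|_\infty Q(x)$ uniformly in $T$, and the integral part of $H_T(x)$ is supported on $\{R(x) \le T\}$; the uniform integrability hypothesis \eqref{eq:qvTbddcond} for $\{Q(\wti{X}_t)\}_{0 \le t \le T}$, together with $\P(\tau > T), \P(\tds > T) \to 0$, is tailored exactly to force all four tail terms and $\alpha(T)$ to vanish as $T \to \infty$. This is where assumption \eqref{eq:qvTbddcond} does its essential work, and once the bounded-$f$ case is settled the truncation-in-$N$ outlined above removes that restriction.
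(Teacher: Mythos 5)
Your proposal uses the same three ingredients as the paper's proof --- the pathwise inequality of Proposition~\ref{prop:GTineq}, the (super)martingale properties of Lemma~\ref{lem:GTmartsubmart}, the identity $\E[H_T(\wti{X}_\tau)]=\E[H_T(\wti{X}_{\tds})]$, and a limit $T\to\infty$ followed by truncation of $f$ --- but the way you combine them differs, and this difference creates a genuine gap. The paper applies \eqref{eq:GTineq} at $(\wti{X}_\tau,\tau)$ with $\tau$ possibly larger than $T$, and then ``bridges'' from $G_T(\wti{X}_\tau,\tau)$ to $G_T(\wti{X}_{T\wedge\tau},T\wedge\tau)$ by adding and subtracting. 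The resulting correction terms have the explicit form $\E[\iden_{\tau>T}\int_T^\tau M(\wti{X}_\tau,s)\dif s]$ (bounded by $C\,\E[(\tau-T)_+]\to0$) and $\E[Z_T(\wti{X}_{T\wedge\tau})-Z_T(\wti{X}_{T\wedge\tds})]$ (controlled by \eqref{eq:qvTbddcond} and $Z_T\le CQ$). You instead apply \eqref{eq:GTineq} at the capped time $\tau\wedge T$, which is cleaner on the martingale side but pushes the error into $H_T$-tail terms rather than $G_T$-bridge terms.

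Concretely, after subtracting your two displayed relations and using $\E[H_T(\wti{X}_\tau)]=\E[H_T(\wti{X}_{\tds})]$, the right-hand side reduces to
\begin{equation*}
-\E\bigl[H_T(\wti{X}_\tau)\iden_{[\tau>T]}\bigr]
+\E\bigl[H_T(\wti{X}_T)\iden_{[\tau>T]}\bigr]
+\E\bigl[Z_T(\wti{X}_{\tds})\iden_{[\tds>T]}\bigr]
-\E\bigl[Z_T(\wti{X}_T)\iden_{[\tds>T]}\bigr],
\end{equation*}
where $\alpha(T)$ has been absorbed into the fourth term (since on $\{\tds>T\}$ the integral part of $H_T(\wti{X}_T)$ coincides with the integrand of $\alpha$), and the third term simplifies because $R(\wti{X}_{\tds})\ge\tds>T$ kills the integral part of $H_T(\wti{X}_{\tds})$. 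The first term is $\le 0$, and the two $Z_T$ terms vanish by the convexity of $Q$ together with \eqref{eq:qvTbddcond}. The problematic term is $\E[H_T(\wti{X}_T)\iden_{[\tau>T]}]$: its $Z_T$ part is fine, but its integral part is $\E\bigl[\iden_{[\tau>T]}\int_{R(\wti{X}_T)\wedge T}^T\{M(\wti{X}_T,s)-f(s)\}\dif s\bigr]$, and nothing in \eqref{eq:qvTbddcond} or in the boundedness of $f$ gives a \emph{uniform-in-$T$} bound on $\int_{R(x)\wedge T}^T\{M(x,s)-f(s)\}\dif s$. The integrand tends to $0$ as $s\to\infty$ (both $M$ and $f$ tend to $C$), but without a decay rate the integral can grow with $T$, so $\P(\tau>T)\to0$ alone does not force this term to vanish. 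Your sentence that the hypotheses are ``tailored exactly to force all four tail terms and $\alpha(T)$ to vanish'' is therefore an assertion of the hardest step, not a proof of it.

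Two remarks toward repairing it. (i) If you run your argument not for general bounded $f$ but directly for the truncations $f_N=f\wedge N$ (which are eventually \emph{constant}, equal to $N$ for $t\ge t_N$), then $M_N(x,s)-f_N(s)=0$ for $s\ge t_N$, the integral part of $H_T$ is uniformly bounded by $Nt_N$, and your tail estimates do go through; the monotone-convergence step then delivers the general case. So the architecture is salvageable, but the order of the reductions matters: ``bounded'' is not enough, ``eventually constant'' is. (ii) Alternatively, adopt the paper's bridging of $G_T$ rather than $H_T$, which sidesteps the integral-part issue altogether. Either way, as written the proposal leaves the key estimate unproved.

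One minor but correct observation you make that the paper states less explicitly: the equality case of \eqref{eq:GTineq} at $\tds$ on $\{\tds\le T\}$ indeed relies on $\tds\le R(\wti{X}_{\tds})$, which follows from upper semicontinuity of $R$ and continuity of the paths when the infimum defining $\tds$ is not attained.
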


\begin{proof} We first consider the case where $\E[\tds]=\infty$.
  Since $F(t)\geq\alpha+\beta t$ for some constants $\alpha\in\R$ and
  $\beta\in\R_+$, we must have $\E[F(\tds)]=\infty$. The result is
  trivial. So we always assume $\E[\tds]<\infty$.

  Under the assumption $\E[\tds]<\infty$, consider $Q(\cdot)$ given
  by \eqref{eq:Qdefn}. We have (recall \eqref{eq:Qvers1})
  $\E[Q(\wti{X}_{\tds})]=\E[\tds] + \E[Q(\wti{X}_0)]<\infty$. Therefore, for all
  $\tau$ embedding $\mu$, $\E[\tau] = \E[Q(\wti{X}_\tau)]
  =\E[Q(\wti{X}_{\tds})]=\E[\tds]+ \E[Q(\wti{X}_0)]<\infty $. In the remainder of
  this proof, we always assume $\E[\tau]=\E[\tds]<\infty$.

  We first assume $f$ is bounded, since $f$ is increasing,
  \begin{equation}\label{eq:fassumption}
    \text{there exists }C<\infty,\text{ such that }\lim_{t \to \infty}
    f(t) = C.
  \end{equation}
  For $T>0$, since $M(\cdot,T)$ is also bounded by $C$, then
  \[\E[Z_T(\wti{X}_{t\wedge\tau})]\leq C\E[Q(\wti{X}_{t\wedge\tau})] = C
  (\E[t\wedge\tau]+\E[Q(\wti{X}_0)])<\infty,\] and the same argument
  implies $\E[Z_T(\wti{X}_\tau)]<\infty$. So
  $\E[Z_T(\wti{X}_\tau)|\mathcal{F}_t]$ is a uniformly integrable
  martingale, and by convexity,
  $Z_T(\wti{X}_{t\wedge\tau})\leq\E[Z_T(\wti{X}_\tau)|\mathcal{F}_t]$. Therefore,
  \begin{equation*}
    -C|T-(t\wedge\tau)|\ \leq\ F(T)-G_T(\wti{X}_{t\wedge\tau},t\wedge\tau)
    \ \leq\ C|T-(t\wedge\tau)|+\E[Z_T(\wti{X}_\tau)|\Fc_t].
  \end{equation*}
  It follows that $\E[G_T(\wti{X}_{t\wedge\tau},t\wedge\tau)]
  \to\E[G_T(\wti{X}_{\tau},\tau)]$ as $t\to\infty$. On the other hand,
  \begin{equation*}
    \E\left[ H_T(\wti{X}_{\tau})\right]
    \ =\ \E\left[\int^{T}_{T\wedge R(\wti{X}_{\tau})}\Big[
      M(\wti{X}_{\tau},s)-f(s)\Big]\dif s\right]
    + \E\left[Z_T(\wti{X}_{\tau})\right]\ <\ \infty.
  \end{equation*}
  The same arguments hold when $\tau$ is replaced by $\tds$, and
  then we have
  \begin{equation*}
    \E\left[ H_T(\wti{X}_{\tau})\right]\ =\ \E\left[ H_T(\wti{X}_{\tds})\right]
    \hspace{15pt}\text{ and }\hspace{15pt}
    \E\left[ Z_T(\wti{X}_{\tau})\right]\ =\ \E\left[ Z_T(\wti{X}_{\tds})\right].
  \end{equation*}
  In addition, by Lemma~\ref{lem:GTmartsubmart}, we have,
  \begin{equation*}
    \E \left[G_T(\wti{X}_{T\wedge\tds},T\wedge\tds)\right]
    \geq\ \E \left[G_T(\wti{X}_{T\wedge\tau},T\wedge\tau)\right].
  \end{equation*}
  Combining the results above with \eqref{eq:GTineq}, we have
  \begin{equation}\label{eq:Fineq}
    \begin{split}
      \E F(\tau)&
       \leq \E\big[G_T(\wti{X}_\tau,\tau)+H_T(\wti{X}_\tau)\big]\\
      &= \E\big[G_T(\wti{X}_{T\wedge\tau},T\wedge\tau)
      +H_T(\wti{X}_\tau)\big]+\E\big[G_T(\wti{X}_\tau,\tau)
      -G_T(\wti{X}_{T\wedge\tau},T\wedge\tau)\big]\\
      &\leq \E\big[G_T(\wti{X}_{T\wedge\tds},T\wedge\tds)
      +H_T(\wti{X}_{\tds})\big]\\
      & \hspace{30pt} {}+\E\big[G_T(\wti{X}_\tau,\tau)
      -G_T(\wti{X}_{T\wedge\tau},T\wedge\tau)
      \big]\\
      &=
      \E\big[G_T(\wti{X}_{\tds},\tds)+H_T(\wti{X}_{\tds})\big]
      +\E\big[G_T(\wti{X}_\tau,\tau)-G_T(\wti{X}_{T\wedge\tau},
      T\wedge\tau)\big]\\
      &\hspace{30pt} {}-\E\big[G_T(\wti{X}_{\tds},\tds)-
      G_T(\wti{X}_{T\wedge\tds},T\wedge\tds)\big]\\
      &= \E\big[F(\tds)\big]
      +\E\left[\int_{T\wedge\tau}^{T}M(\wti{X}_{T\wedge\tau},s)\dif s
        -\int_{\tau}^{T}M(\wti{X}_{\tau},s)\dif s
        +Z_T(\wti{X}_{T\wedge\tau})\right]\\
      &\hspace{30pt}
      -\E\left[\int_{T\wedge\tds}^{T}M(\wti{X}_{T\wedge\tds},s)\dif
        s -\int_{\tds}^{T}M(\wti{X}_{\tds},s)\dif s
        +Z_T(\wti{X}_{T\wedge\tds})\right]\\
      &= \E\big[F(\tds)\big]
      +\E\left[\iden_{[\tau>T]}\int_{T}^{\tau}M(\wti{X}_{\tau},s)\dif
        s \right]\\
      &\hspace{30pt}
      {}-\E\left[\iden_{[\tds>T]}
        \int_{T}^{\tds}M(\wti{X}_{\tds},s)\dif s
      \right]+\E\Big[Z_T(\wti{X}_{T\wedge\tau})-Z_T(\wti{X}_{T\wedge\tds})\Big].
    \end{split}
  \end{equation}
  Since $f\leq C$, we have
  \begin{equation*}
    \begin{split}
      0 \leq \E\left[\iden_{[\tau>T]}
        \int_{T}^{\tau}M(\wti{X}_{\tau},s)\dif s\right]
       &\leq C\E\big[\iden_{[\tau>T]}(\tau-T)\big]\\
      &= C\E\big[\tau-T\wedge\tau\big]\ \longrightarrow\ 0, \ \ \
      \text{ as }\ T\rightarrow\infty.
    \end{split}
  \end{equation*}
  Similarly,
  \begin{equation*}
    \lim_{T\rightarrow\infty}
    \E\left[\iden_{[\tds>T]}\int_{T}^{\tds}
      M(\wti{X}_{\tds},s)\dif s\right]\ =\ 0.
  \end{equation*}
  Now, by the fact that $\E
  [Q(\wti{X}_{T\wedge\tau})]=\E[T\wedge\tau] + \E[Q(\wti{X}_0)]$ and the convexity of
  $Q$, $Q(\wti{X}_{T\wedge\tau})\le
  \E[Q(\wti{X}_\tau)|\mathcal{F}_T]$, hence,
  $Q(\wti{X}_{t\wedge\tau})\rightarrow Q(\wti{X}_\tau)$ in
  $L^1$. Noting that $Z_T(\wti{X}_{T\wedge\tau})\leq
  CQ(\wti{X}_{T\wedge\tau})$ and $Z_T(\wti{X}_{T\wedge\tau})\to
  CQ(\wti{X}_{\tau})$ a.s. as $T\to\infty$, we have
  \begin{equation*}
    \lim_{T\to\infty}\E\left[Z_T(\wti{X}_{T\wedge\tau})\right]
    \ =\ C\E\left[Q(\wti{X}_\tau)\right]\ <\infty.
  \end{equation*}
  The same arguments hold when $\tau$ is replaced by $\tds$, and
  moreover, $\E[Q(\wti{X}_\tau)]=\E[Q(\wti{X}_{\tds})]$. Now, let
  $T$ go to infinity in \eqref{eq:Fineq}, and we have
  \begin{equation*}
    \E\big[F(\tau)\big]\ \leq\ \E\big[F(\tds)\big].
  \end{equation*}
  To observe that the result still holds when $f$ is unbounded,
  observe that we can apply the above argument to $f(t) \wedge N$, and
  $F_N(t) = \int_0^t f(s) \wedge N \dif s$ to get $\E\left[
    F_N(\tds) \right] \ge \E\left[ F_N(\tau)\right]$, and the
  conclusion follows on letting $N\to \infty$. \end{proof}

Now we turn to the proofs of Proposition~\ref{prop:GTineq} and
Lemma~\ref{lem:GTmartsubmart}.

\begin{proof}[Proof of Proposition~\ref{prop:GTineq}]
  If $(x,t)\in D$, i.e. $t>R(x)$,
  \begin{equation*}
    \begin{split}
      G_{T}(x,t)+H_T(x) & =\int_{R(x)}^{t}M(x,s)\dif s
      +F(R(x)) \\
      & \geq\int_{R(x)}^{t}f(s)\dif s+F(R(x))=F(t).
    \end{split}
  \end{equation*}
  If $(x,t)\notin D$, i.e. $t\leq R(x)$,
  \begin{equation*}
    \begin{split}
      G_{T}(x,t)+H_T(x) & =-\int_{t}^{R(x)}M(x,s)\dif
      s+F(R(x))\\
      & =-\int_{t}^{R(x)}f(s)\dif s+F(R(x))=F(t).
    \end{split}
  \end{equation*}
\end{proof}

\begin{proof}[Proof of Lemma~\ref{lem:GTmartsubmart}]
  For $s\leq t\leq T$, by \eqref{eq:qvTbddcond}, the Meyer-It\^o
  formula gives,
  \begin{equation*}
    Z_T(\wti{X}_t)-Z_T(\wti{X}_s)
    \ =\ \int_s^t Z'_T(\wti{X}_u)\dif \wti{X}_u+\int_{s}^{t}M(\wti{X}_u,T)\dif u.
  \end{equation*}
  By \eqref{eq:qvTbddcond} and the fact $f$ is bounded, it is easy to
  see that the family $(Z_T(\wti{X}_t);0\leq t\leq T)$ is uniformly
  integrable. By the Doob-Meyer decomposition theorem
  (e.g. \citet[Theorem 4.10, Chapter 1]{KaratzasShreve:91}), the first
  term on the right-hand side is a uniformly integrable martingale,
  \begin{equation*}
    \E\big[Z_T(\wti{X}_t)-Z_T(\wti{X}_s)\,\big|\,\mathcal{F}_s\big]
    \ =\ \int_{s}^{t}\E\big[M(\wti{X}_u,T)\,\big|\,\mathcal{F}_s\big]
    \dif u.
  \end{equation*}
  Then we have,
  \begin{equation*}
    \begin{split}
      &G_{T}(\wti{X}_s,s)-\E\left[G_T(\wti{X}_t,t)\big|
        \mathcal{F}_s\right]\\
      & \hspace{10pt} {}= \int_{t}^{T}\E\big[M(\wti{X}_t,u)\big|
      \mathcal{F}_s\big]\dif u
      +\int_{s}^{t}\E\big[M(\wti{X}_u,T)\,\big|
      \,\mathcal{F}_s\big]\dif u-\int_{s}^{T}M(\wti{X}_s,u)\dif u\\
      & \hspace{10pt} {}= \int_{t}^{T}\E\big[M(\wti{X}_t,u)\big|
      \mathcal{F}_s\big]\dif u-\int_{s}^{T-t+s}M(\wti{X}_s,u)\dif u\\
      &\hspace{90pt}
      {} +\int_{s}^{t}\E\big[M(\wti{X}_u,T)\,\big|\,\mathcal{F}_s\big]\dif
      u
      -\int_{T-t+s}^{T}M(\wti{X}_s,u)\dif u\\
      & \hspace{10pt} {}= \int_{t}^{T}\Big\{\E\big[M(\wti{X}_t,u)\,\big|
      \,\mathcal{F}_s\big]
      -M(\wti{X}_{s},u-(t-s))\Big\}\dif u\\
      &\hspace{90pt} {}+\int_{s}^{t}\Big\{\E\big[M(\wti{X}_u,T)\,\big|
      \,\mathcal{F}_s\big]-M(\wti{X}_{s},T-(t-u))\Big\}\dif u.
    \end{split}
  \end{equation*}
  Now, observe that $\wti{X}_t$ is a Markov process, so we can write $Y_t$ for
  an independent copy of $\wti{X}$, and $\bar{\sigma}_D$ for the corresponding
  hitting time of the reversed barrier, and write $\wti{X}^x$ for $\wti{X}$
  started at $x$.  We have, for $u\in(t,T]$:
  \begin{align}
    \E^{(x,u-(t-s))}\big[f(\tds)|\mathcal{F}_u\big] \leq &
    \iden_{[\tds\leq u]}f(u)+\iden_{[\tds>u]}
    \E^{(x,u-(t-s))}\big[f(\tds)|\mathcal{F}_u\big]\nonumber\\
    = & \iden_{[\tds\leq u]}f(u)+\iden_{[\tds>u]}
    \E^{(\wti{X}_{t-s}^{x},u)}\big[f(\bar{\sigma}_D)\big]
    \nonumber\\
    \leq & M(\wti{X}_{t-s}^{x},u).  \label{eq:Rprep}
  \end{align}
  Hence,
  \begin{equation}
    \label{eq:R1}
    \begin{split}
      \E\big[M(\wti{X}_t,u)\,\big|\,\mathcal{F}_s\big]
      \ &=\ \E^{\wti{X}_s}M(\wti{X}_{t-s},u)\\
      &\geq\ \E^{(\wti{X}_{s},u-(t-s))} \big[f(\wti{\tau}_D)\big]\ =\
      M(\wti{X}_s,u-(t-s)).
    \end{split}
  \end{equation}

  For $u\in(s,T)$, replacing $u$ by $T$ and $t$ by $u$ in
  \eqref{eq:Rprep} gives that
  \begin{equation*}
    \E^{(x,T-(u-s))}\big[f(\tds)\,|\,\mathcal{F}_T\big]
    \ \leq\ M(\wti{X}^{x}_{u-s}, T),
  \end{equation*}
  and hence,
  \begin{equation}\label{eq:R2}
    \E\big[M(\wti{X}_u,T)\,\big|\,\mathcal{F}_s\big]
    \ \geq\ M(\wti{X}_s,T-(u-s)).
  \end{equation}
  It follows that
  \begin{equation*}
    \begin{split}
      \int_{s}^{t}\Big\{ & \E\big[M(\wti{X}_u,T)\,\big|
      \,\mathcal{F}_s\big]-M(\wti{X}_{s},T-(t-u))\Big\}\dif u\\
      & = \int_{s}^{t}\Big\{\E\big[M(\wti{X}_u,T)\,\big|
      \,\mathcal{F}_s\big]-M(\wti{X}_{s},T-(u-s))\Big\}\dif u\ \geq 0.
    \end{split}
  \end{equation*}
  Therefore,
  \begin{equation*}
    G_{T}(\wti{X}_s,s)-\E\left[G_T(\wti{X}_t,t)|\mathcal{F}_s\right]\ \geq\ 0,
  \end{equation*}
  which implies \eqref{eq:GTsupmart}.

  On the other hand, as a part of \eqref{eq:Rprep},
  \begin{equation*}
    \iden_{[\tds>u]}\E^{(x,u-(t-s))}\big[f(\tds)\,|
    \,\mathcal{F}_u\big]\ =\ \iden_{[\tds>u]}
    \E^{(\wti{X}_{t-s}^{x},u)}\big[f(\bar{\sigma}_D)\big],
  \end{equation*}
  and on $\{u<\tds\}$ equality
  holds in the inequalities \eqref{eq:R1} and \eqref{eq:R2}. Thus,
  \eqref{eq:GTmart} follows.
\end{proof}

For bounded $f$, although the pathwise inequality in this section
$G_T(\wti{X}_t,t)+H_T(\wti{X}_t)$ $\geq F(t)$ holds for all $T,t>0$,
$G_T(\wti{X}_t,t)$ is a supermartingale only on $[0,T]$. For hedging
purposes, we would really like to know: can we find a \emph{global}
pathwise inequality $G^\ast_t+H^\ast(\wti{X}_t)\geq F(t)$, such that
$G^\ast_t$ is a supermartingale on $[0,\infty]$ and a martingale on
$[0,\tds]$? We now provide conditions where we can find such
$G^\ast$ and $H^\ast$.

We replace \eqref{eq:fassumption} by a stronger assumption: there
exists some $\alpha>1$, such that \begin{equation}
  \label{eq:sfassumption}
  \text{for $t$ sufficiently large, }
  \ C\ \geq\ f(t)\ \geq\ C-O(t^{-\alpha}).
\end{equation}
Under this assumption, it is easy to check there exists a $J(x,t)$
such that
\begin{equation}\label{eq:Jdefn}
  J(x,t)\ =\ \lim_{T\to\infty}\int_t^T
  \left[M(x,s)-f(s)\right]\dif s,
\end{equation}
then we define
\begin{equation}\label{eq:GHlim}
  \left\{
    \begin{aligned}
      G(x,t)\ &=\ \lim_{T\to\infty}G_T(x,t)
      \ =\  F(t)-J(x,t)-CQ(x);\\
      H(x)\ &=\ \lim_{T\to\infty}H_T(x)\ =\ J(x,R(x))+CQ(x).
    \end{aligned}\right.
\end{equation}
Letting $T\to\infty$ in \eqref{eq:GTineq},
\begin{equation}
  \label{eq:sGTineq}
  \begin{cases}
    \ G(x,t)+H(x)\ >\ F(t),\ \ \ &\text{if }\ t>R(x);\\
    \ G(x,t)+H(x)\ =\ F(t),&\text{if }\ t\leq R(x).
  \end{cases}
\end{equation}
By the monotone convergence theorem, for all $t>0$,
$\E[\int^T_t[M(\wti{X}_t,s)-f(s)]\dif s]\to \E[J(\wti{X}_t,t)]$ as
$T\to\infty$, and then by Scheff\'e's Lemma,
$\int^T_t[M(\wti{X}_t,s)-f(s)]\dif s \to J(\wti{X}_t,t)$ in $L^1$. On
the other hand, since $Z_T(\wti{X}_t)\to CQ(\wti{X}_t)$ in $L^1$,
\begin{equation*}
  G_T(\wti{X}_t,t)\ \xrightarrow{L^1}\ G(\wti{X}_t,t)
  \ \text{ and }
  \ H_T(\wti{X}_t)\ \xrightarrow{L^1}\ H(\wti{X}_t).
\end{equation*}
It follows that the process $\left(G(\wti{X}_t,t);t\geq0\right)$ is a
supermartingale and the process $\left(G(\wti{X}_{t\wedge\tds},t\wedge\tds);t\ge
  0\right)$ is a martingale (since the conditional expectation, as an
operator, is continuous in $L^p$ for $p\geq1$). We then can show as
before that (if $\tau$, $\tds$ are integrable),
\begin{equation*}
  \begin{split}
    \E\left[F(\tau)\right]\ &\leq\ \E\left[G(\wti{X}_\tau,\tau)
      +H(\wti{X}_\tau)\right]\\
    &\leq\ \E\left[ G(\wti{X}_{\tds},\tds)
      +H(\wti{X}_{\tds})\right]\ =\ \E\left[F(\tds)\right].
  \end{split}
\end{equation*}

An example where \eqref{eq:sfassumption} holds is the call-type
payoff: $F(t)=(t-K)_+$. We see that for $t>K$, the left derivative
$f(t) = 1$, and hence
\begin{equation*}
  J(x,t) = \int_t^K [M(x,s)-f(s)]\dif s,
\end{equation*}
we then repeat all arguments above to obtain the pathwise inequality
and the optimality result. On the other hand, the condition will fail
if \eg{} $F(t) = t^2$. Recall that a volatility swap corresponds to
the choice of $F(t) = \sqrt{t}$, and that we can consider concave
functions by taking $-F(t)$. This causes difficulties since $-F$ is
not increasing, nor can we make it increasing by considering $-F(t) +
\alpha t$, for some $\alpha >0$, since $F'(t) \to \infty$ as $t \to
\infty$. However $F(t)$ can be approximated from both above and below
by functions $F^1(t)$ and $F^2(t)$ which are concave, and have bounded
derivatives. (In the case of the upper approximation, we have $F^1(0)
>0$, but this can be made arbitrarily small). An optimality result
will then follow in this setting --- note also that the condition
\eqref{eq:sfassumption} will be satisfied in this case.

\subsection{Superhedging options on weighted realised variance}

We now return to the financial context described in
Section~\ref{sec:financial-motivation}. Our aim is to use the construction we
produced for the proof of optimality in Section~\ref{sec:optim-via-pathw} to provide a
model-independent hedging strategy for derivatives which are convex functions
of weighted realised variance. We will suppose initially that our options are
not forward starting, so $\nu = \delta_{S_0}$.

We now define $\tds$ as the embedding of $\mu$ for the diffusion $\wti{X}$, and
define functions: $G, H, J$, and $Q$ as in the previous section (so
\eqref{eq:sfassumption} holds). Our aim is to
use \eqref{eq:sGTineq}, which now reads: 
\begin{equation}
  \label{eq:GHineq}
  G(X_{A_t},t)+H(X_{A_t})\ =\ G(\wti{X}_t,t)+H(\wti{X}_t)
  \ \geq\ F(t)\ =\ F\left(\int_{0}^{A_t}w(X_s)\sigma_s^2\dif s\right),
\end{equation}
to construct a super-replicating portfolio. We shall first show that
we can construct a trading strategy that super-replicates the
$G(\wti{X}_t,t)$ portion of the portfolio. Then we argue that we are
able, using a portfolio of calls, puts, cash and the underlying, to
replicate the payoff $H(X_T)$.

Since $(G(\wti{X}_t,t))_{t\geq0}$ is a supermartingale, we do
not expect to be able to replicate this in a completely self-financing
manner. However, by the Doob-Meyer decomposition theorem, and the
martingale representation theorem, we can certainly find some process
$(\wti{\phi}_t)_{t\geq0}$ such that: 
\begin{equation*}
  G(\wti{X}_t,t)\ \leq\ G(\wti{X}_0,0)
  +\int_0^t\wti{\phi}_s\dif\wti{X}_s
\end{equation*}
and such that there is equality at $t=\tds$. Moreover, since
$(G(\wti{X}_{t\wedge\tds},t\wedge\tds))_{t\geq0}$ is a
martingale, and $G$ is of $\mathbb{C}^{2,1}$ class in $D$ (since
$M(x,t)$ is), we have:
\begin{equation*}
  G(\wti{X}_{t\wedge\tds},t\wedge\tds)
  =  G(\wti{X}_0,0)+\int_{0}^{t\wedge\tds}
  \pian{G}{x}(\wti{X}_{s\wedge\tds},s\wedge\tds)\dif\wti{X}_s.
\end{equation*}
More generally, we would not expect $\partial G/\partial x$ to exist
everywhere in $D^\complement$, however, if for example left and right
derivatives exist, then we could choose
\begin{equation*}
  \wti{\phi}_t\ \in\ \left[\pian{G}{x}(x-,t),
    \pian{G}{x}(x+,t)\right]
\end{equation*}
as our holding of the risky asset
.

It follows then that we can identify a process
$\left(\wti{\phi}_t;t\geq0\right)$ with \begin{equation*}
  G(\wti{X}_{\tau_t},\tau_t)\ \leq\ G(\wti{X}_0,0)
  +\int_{0}^{\tau_t}\wti{\phi}_s\dif\wti{X}_s \ =\
  G(X_0,0)+\int_0^t\wti{\phi}_{\tau_s}\dif X_s,
\end{equation*}
where we have used e.g. \citet[Proposition V.1.4]{RevuzYor:99}.
Finally, writing $\phi_t=\wti{\phi}_{\tau_t}$, then
\begin{equation}\label{eq:Gineq}
  G(X_{t},\tau_t)\ \leq\ G(X_0,0)+\int_0^t\phi_s\dif X_s
  \ =\ G(X_0,0)+\int_0^t\phi_s \dif\left(B^{-1}_s S_s\right).
\end{equation}
If we consider the self-financing portfolio which consists of holding
$\phi_s B^{-1}_T$ units of the risky asset, and an initial investment
of $G(X_0,0)B^{-1}_T-\phi_0S_0B^{-1}_T$ in the risk-free asset, this
has value $V_t$ at time $t$, where $\dif \left(B_{t}^{-1}V_t\right)
=B^{-1}_T\phi_t\,\dif\left(B^{-1}_tS_t\right)$ and $V_0\ =\
G(X_0,0)B_T^{-1}$, and therefore
\begin{equation*}
  V_T\ =\ B_T\left(V_0B^{-1}_0+\int_0^T B_T^{-1}\phi_s
    \dif\left(B_{s}^{-1}S_s\right)\right)
  \ =\ G(X_0,0)+\int_0^T\phi_s\dif X_s.
\end{equation*}

We now turn to the $H(X_T)$ component in \eqref{eq:GHineq}. If $H(x)$
can be written as the difference of two convex functions (so in
particular, $H''(\dif K)$ is a well defined signed measure) we can
write: 
\begin{equation*}
  \begin{split}
    H(x) & = H(S_0)+H'_+(
    S_0)(x-S_0)+\int_{(S_0,\infty)}(x-K)_+ H''(\dif K)\\
    & \qquad {}+\int_{(0,S_0]}(K-x)_+ H''(\dif K).
  \end{split}
\end{equation*}
Taking $x=X_T=B_{T}^{-1}S_T$ we get:
\begin{equation*}\begin{split}
    H(X_T) & =  H(S_0)+H'_+( S_0)(B_{T}^{-1}S_T-S_0)
    +B_{T}^{-1}\int_{(S_0,\infty)}(S_T-B_T K)_+
    H''(\dif K)\\
    & \qquad {}+B_{T}^{-1}\int_{(0,S_0]}(B_T K-S_T)_+ H''(\dif K).
  \end{split}
\end{equation*}

This implies that the payoff $H(X_T)$ can be replicated at time $T$ by
`holding' a portfolio of: \begin{equation}
  \label{eq:Hport}
  \begin{split}
    &B_{T}^{-1}\big[H(S_0)-S_0H'_+(S_0)\big]\ \text{ in cash;}\\
    &B_{T}^{-1}H'_+(S_0)\ \,\,\text{ units of the asset;}\\
    &B_{T}^{-1}H''(\dif K)\ \text{ units of the call with strike }
    B_T K\text{ for }K\in(S_0,\infty);\\
    &B_{T}^{-1}H''(\dif K)\ \text{ units of the put with strike }
    B_T K\text{ for }K\in(0,S_0],\\
  \end{split}
\end{equation}
where the final two terms should be interpreted appropriately. In
practice, the function $H(\cdot)$ can typically be approximated by a
piecewise linear function, where the `kinks' in the function
correspond to traded strikes of calls or puts, in which case the
number of units of each option to hold is determined by the change in
the gradient at the relevant strike. The initial cost of setting up
such a portfolio is well defined provided the integrability condition:
\begin{equation}
  \label{eq:Hddcond}
  \int_{(0,S_0]}P(B_T K)|H''|(\dif K)
  +\int_{(S_0,\infty)} C(B_T K)|H''|(\dif K)\ <\ \infty,
\end{equation}
holds, where $|H''|(\dif K)$ is the total variation of the
signed measure $H''(\dif K)$. We therefore shall make the following
assumption:

\begin{assumption}
  \label{ass:Hass} The payoff $H(X_T)$ can be replicated using a
  suitable portfolio of call and put options, cash and the underlying,
  with a finite price at time $0$.
\end{assumption}

We can therefore combine these to get the following theorem:
\begin{thm}
  \label{thm:rHedge}
  Suppose Assumptions~\ref{ass:Price},~\ref{ass:Calls} and \ref{ass:Hass} hold, and suppose
  $F(\cdot)$ is a convex, increasing function with $F(0)=0$ and the left
  derivative $f(t):=F'_-(t)$ satisfies \eqref{eq:sfassumption}. Let $M(x,t)$
  and $J(x,t)$ be given by \eqref{eq:Mdefn} and \eqref{eq:Jdefn}, and are
  determined by the solution to {\bf{SEP}}${}^*(x w(x)^{-1/2},\delta_{S_0},\mu)$,
  where $\mu$ is determined by \eqref{eq:BL} and $w \in \mathcal{D}$.  We also
  define $Q$ after \eqref{eq:Qdefn}, such that \eqref{eq:qvTbddcond} holds and
  then the functions $G$ and $H$ are given by \eqref{eq:GHlim}.

  Then there exists an arbitrage if the price of an option with payoff
  $F(RV^w_T)$ is strictly greater than
  \begin{equation}
    \label{eq:Fub}
    \begin{split}
      B_T^{-1}\Big[G(S_0,0)+H(S_0) & +\int_{(S_0,\infty)}C(B_T
        K)H''(\dif K)\\
        & \qquad {}+\int_{(0,S_0]}P(B_T K)H''(\dif K)\Big].
    \end{split}
  \end{equation}
  Moreover, this bound is optimal in the sense that there exists a
  model which is free of arbitrage, under which the bound can be
  attained, and the arbitrage strategy can be chosen independent of
  the model.
\end{thm}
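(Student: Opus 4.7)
The plan is to combine three ingredients already established in the paper: the time-change identification of weighted realised variance with a Skorokhod embedding from Section~\ref{sec:financial-motivation}, the global pathwise inequality \eqref{eq:sGTineq} from Section~\ref{sec:optim-via-pathw}, and the Carr-Madan-type decomposition of $H$ into calls and puts. The first two stitched together give a pathwise upper bound for $F(RV^w_T)$; the third converts the $H$-part of that bound into a static portfolio of traded instruments, while the $G$-part is handled by a delta-hedge in the underlying.

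\textbf{The super-hedge.} Setting $\lambda=w$ and defining $\tau_t=\int_0^t w(X_s)\sigma_s^2\,\ds$, $A_t=\tau_t^{-1}$ and $\wti{X}_t=X_{A_t}$ as in Section~\ref{sec:financial-motivation}, one has $\wti{X}_0=S_0$, $\wti{X}_{\tau_T}=X_T$ and $\tau_T=RV^w_T$. Under Assumption~\ref{ass:Calls} the law $\mu$ from \eqref{eq:BL} satisfies $\poten{\delta_{S_0}}\geq\poten{\mu}$, so \SEP{} is solvable by Theorem~\ref{thm:RostChacon} and \eqref{eq:sGTineq} at $(x,t)=(X_T,\tau_T)$ gives $F(RV^w_T)\leq G(X_T,\tau_T)+H(X_T)$. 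The dynamic strategy built from \eqref{eq:Gineq} -- hold $B_T^{-1}\phi_t$ units of the asset with initial cash $B_T^{-1}G(S_0,0)$ -- super-replicates $B_T^{-1}G(X_T,\tau_T)$; the static book \eqref{eq:Hport} (with finite cost by Assumption~\ref{ass:Hass}) exactly replicates $B_T^{-1}H(X_T)$ at cost $B_T^{-1}\big[H(S_0)+\int_{(S_0,\infty)}C(B_TK)H''(\df K)+\int_{(0,S_0]}P(B_TK)H''(\df K)\big]$. Their sum is precisely \eqref{eq:Fub}, depends only on $\mu$ and $F$ (hence is model-independent), and a market price strictly above it permits the standard ``short the option, buy the hedge'' arbitrage.

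\textbf{Sharpness via the extremal model.} For optimality, I construct a consistent model in which the price equals \eqref{eq:Fub}. On a space carrying $\wti{X}$ solving \eqref{eq:Xdefn} and $\tds$ from Theorem~\ref{thm:RostChacon}, choose any adapted, strictly increasing time change $A_t$ on $[0,T]$ with $A_T=\tds$ -- e.g.\ $A_t=\psi^{-1}(t)\wedge\tds$ with $\psi(u)=Tu/(u+1)$ -- and set $X_t=\wti{X}_{A_t}$, $S_t=B_tX_t$. Computing $\df\langle X\rangle_t=X_t^2\lambda(X_t)^{-1}\dif A_t$, Dambis-Dubins-Schwarz yields a locally bounded predictable $\sigma_t$ with $\df S_t/S_t=r_t\,\dt+\sigma_t\,\df W_t$, so Assumption~\ref{ass:Price} holds. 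By construction $X_T=\wti{X}_{\tds}\sim\mu$ (matching the call prices) and $RV^w_T=\int_0^T\lambda(X_t)\sigma_t^2\,\dt=A_T=\tds$. Since $\tds\leq R(\wti{X}_{\tds})$ a.s., \eqref{eq:sGTineq} holds with equality at $(X_T,\tds)$; combined with the martingale property of $G(\wti{X}_{t\wedge\tds},t\wedge\tds)$ from Lemma~\ref{lem:GTmartsubmart}, the super-hedging inequality becomes an equality in expectation, so the arbitrage-free price $B_T^{-1}\E^{\Q}[F(RV^w_T)]$ in this model equals \eqref{eq:Fub}.

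\textbf{Main obstacle.} The step I expect to demand most care is the construction of the hedging ratio $\phi_t$: $G$ is smooth in the interior of $D$ but only has one-sided $x$-derivatives along $\partial D$, so $\phi_t$ must be read off from left/right derivatives and \eqref{eq:Gineq} justified by a Meyer-It\^o argument using the convexity of $G$ in $x$ inherited from $H$. Once this is in hand, and \eqref{eq:Hddcond} ensures the static book has finite cost, the remainder -- identifying the sum of initial costs with \eqref{eq:Fub} and tracking equality through the extremal model -- is essentially bookkeeping.
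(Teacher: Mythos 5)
Your overall structure mirrors the paper's: the time-change identification with \SEP{}, the pathwise bound \eqref{eq:sGTineq}, the static call/put replication of $H$, the delta-leg for $G$, and (for sharpness) the explicit extremal time-change $A_t = \tfrac{t}{T-t}\wedge\tds$ (your $\psi^{-1}(t)\wedge\tds$ is the same thing). The sharpness half is essentially the paper's own argument, modulo that the paper simply reads off $\sigma_t$ explicitly from \eqref{eq:choicesigma} rather than invoking Dambis--Dubins--Schwarz.

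The genuine gap is in the step you flag as the ``main obstacle,'' and your proposed fix would not work. You suggest handling the non-smoothness of $G$ via a Meyer--It\^o argument ``using the convexity of $G$ in $x$ inherited from $H$.'' This is wrong on two counts. First, $G$ does not inherit any convexity from $H$ (they are separate terms in the decomposition; $H$ is built to admit a Carr--Madan spread, but that tells you nothing about $G$). Second, and more importantly, the sign is backwards: where $G(\wti{X}_t,t)$ is a martingale we have $\tfrac{1}{2}\sigma(x)^2 G_{xx} + G_t = 0$ with $G_t = M(x,t)\ge 0$, so $G(\cdot,t)$ is \emph{concave} in $D$, and in the barrier $G(x,t) = F(t) - J(x,R(x)) - CQ(x)$ has no clear convexity/concavity in $x$ at all. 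So a Meyer--It\^o argument grounded in global convexity of $G$ has no foundation. The paper instead proves $G$ is \emph{continuous}, then builds, on a countable chain of bounded open sets $\mathcal{O}_i$, smooth functions $G_{\eps_i}$ that dominate $G$ on $\partial\mathcal{O}_i^+$, are space-time harmonic inside, and satisfy $G\ge G_{\eps_i}-\eps_i$; summing $\eps_i=\delta$ yields a \emph{model-independent} trading strategy that superhedges $F(RV^w_T)-\delta$. This is a weaker conclusion than the exact superhedge you claim, and is precisely why the theorem states there is an arbitrage if the price is \emph{strictly} greater than \eqref{eq:Fub}. Your write-up also omits the admissibility check (the strategy's value is bounded below by $F(\tau_t)\ge F(0)$), which the paper needs in order for the arbitrage to be legitimate.

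One further point worth noting: the pre-theorem discussion in the paper does gesture at choosing $\wti{\phi}_t\in[\partial_x G(x-,t),\partial_x G(x+,t)]$, and that may be what prompted your Meyer--It\^o idea. But the paper's justification there rests on Doob--Meyer and martingale representation, which is inherently \emph{model-dependent}; the actual proof of Theorem~\ref{thm:rHedge} has to replace it with the $\eps$-approximation argument precisely to obtain a strategy that is specified independently of the model.
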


\begin{proof}

  According to the arguments above, our superhedge of the variance
  option can be described as the combination of a static portfolio
  \eqref{eq:Hport} and a self-financing dynamic portfolio which
  consists of an additional $B^{-1}_T\psi_t$ units of the risky asset
  and an additional initial cash holding of
  $B^{-1}_T\left(G(S_0,0)-\psi_0S_0\right)$. In the case where
  $G(x,t)$ is sufficiently differentiable, we can identify the process
  $\psi_t=\wti{\psi}_{\tau_t}$ by
  \begin{equation*}
    \psi_t = \pian{G}{x}(X_{t},\tau_t).
  \end{equation*}
  We observe that this strategy is independent of the true model. It
  is easy to see that the total initial investment of this superhedge
  is given be \eqref{eq:Fub}.

  In the case where $G$ is not sufficiently differentiable, we first
  observe that $G(x,t)$ is continuous: note that $Q(x)$ and $F(t)$ are
  trivially so by \eqref{eq:GHlim}, and $M(x,t)$ is continuous in
  $D$, and in $D^\complement$, and additionally at jumps of $R(x)$: it follows
  that $M(x_n,t) \to  M(x,t)$ as $x_n \to x$ except possibly at a set
  of Lebesgue measure zero, and hence $G(x,t)$ is continuous.

  Now consider $G(x,t)$ on a bounded open set of the form $\mathcal{O} =
  (y_0,y_1) \times (t_0,t_1)$. By continuity, $G$ can be approximated
  uniformly on the boundary $\partial \mathcal{O}$ (or more
  relevantly, on the boundary where $t>t_0$, $\partial \mathcal{O}_+$)
  by a smooth function. Specifically, for fixed $\eps >0$, there
  exists a function $G_\eps(x,t)$ such that $G(x,t) + \eps \ge
  G_{\eps}(x,t) \ge G(x,t)$ on $\partial \mathcal{O}_+$. Moreover, the
  function 
  \begin{equation}\label{eq:Gepsdefn}
    G_{\eps}(x,t) = \E^{(x,t)} \left[
      G_{\eps}(\wti{X}_{\tau_{\partial \mathcal{O}_+}}, \tau_{\partial
        \mathcal{O}_+}) \right]
  \end{equation}
  is $C^{2,1}$ and a martingale on $\bar{\mathcal{O}}$, and so
  \begin{equation*}
    G_{\eps}(\wti{X}_{\tau_{\partial \mathcal{O}_+}},\tau_{\partial
      \mathcal{O}_+}) = G_{\eps}(\wti{X}_0,t_0) +\int_{t_0}^{\tau_{\partial
        \mathcal{O}_+}}\pd{G_{\eps}}{x}\dif\wti{X}_s.
  \end{equation*}
  Since $G$ is a supermartingale, for $(x,t) \in \mathcal{O}$, from
  \eqref{eq:Gepsdefn} we have $G(x,t) \ge G_{\eps}(x,t) - \eps$.

  Now observe that we can choose a countable sequence of such sets
  $\mathcal{O}_1, \mathcal{O}_2, \ldots$ with each set centred at the
  exit point of the previous set, and such that any continuous path is
  guaranteed to pass through only finitely many such sets on a finite
  time interval. For any fixed $\delta > 0$, we can take a sequence of
  strictly positive $\eps_1, \eps_2, \ldots$ such that $\sum_{i}
  \eps_i = \delta$, and apply the arguments above to generate a
  sequence of functions $G_{\eps_i}(x,t)$ on $\mathcal{O}_i$. It
  follows that, given $\delta > 0$, we can always find a function
  $\wti{\psi}_t$ such that
  \begin{equation*}
    \delta + G(\wti{X}_0,0) + \int_0^t \wti{\psi}_t \, \dif \wti{X}_s \ge
    G(\wti{X}_t,t).
  \end{equation*}
  Since $\delta$ was arbitrary, whenever the price of an option is
  strictly greater than \eqref{eq:Fub} , we can choose $\delta$
  sufficiently small that the arbitrage still works. Finally, we
  observe that at any time $t \in [0,T]$, the arbitrage strategy is
  worth at least $F(\tau_t) \ge F(0)$, so the strategy is bounded
  below, and hence admissible.

  To see that this is the best possible upper bound, we need to show
  that there is a model which satisfies Assumption~\ref{ass:Price},
  has law $\mu$ under $\Q$ at time $T$, and such that the superhedge
  is actually a hedge. But consider the stopping time $\tds$ for the
  process $\wti{X}_t$. Define the process $\left(X_t;0\leq t\leq
    T\right)$ by
  \begin{equation*}
    X_t=  \wti{X}_{\frac{t}{T-t}\wedge\tds},
    \ \ \text{for }\ t\in[0,T],
  \end{equation*}
  and then $X_t$ satisfies the stochastic differential equation \[\dif
  X_s=\hat{\sigma}_s X_s w(X_s)^{-1/2} \dif W_s = \sigma_s^2 X_s \dif W_s\] with
  the choice of
  \begin{equation}\label{eq:choicesigma}
    \hat{\sigma}^2_s
    = \dfrac{T}{(T-s)^2}\iden_{[\frac{s}{T-s}<\tds]}, \quad \sigma^2_s  =
    \hat{\sigma}_s^2 w(X_s)^{-1/2}.
  \end{equation}
  Since $\tds<\infty$, a.s., then $X_T=\wti{X}_{\tds}$, and
  \[\tau_T= \int_0^T
  w(X_s) \sigma_s^2 \ds = \int_0^T
  \dfrac{T}{(T-s)^2}\iden_{[\frac{s}{T-s}<\tds]} = \tds.\] Hence $S_t=X_t B_t$
  is a price process satisfying Assumption~\ref{ass:Price} with
  \begin{equation*}
    F\left(\int_0^Tw(X_s) \sigma^2_t\dif t\right)\ =\ F(\tds).
  \end{equation*}
  Finally, it follows that at time $T$, the value of the hedging
  portfolio exactly equals the payoff of the option. 
\end{proof}

\begin{rmk}\label{rmk:ForwardStart}
  The above result assumes that the option payoff depends on the
  realised weighted variation computed between time $0$ and a fixed
  time $T$. In some situations, {\it forward-starting} versions of
  these derivatives may be traded. Here, one is interested in the
  payoff of an option written on the variation observed between a
  fixed time $T_0 >0$ and the maturity date $T_1$: $\int_{T_0}^{T_1}
  w(X_t)\sigma^2_t \, \dt$. If one observes traded options at both
  $T_0$ and $T_1$, these again imply the (hypothesised, risk-neutral)
  distributions at times $T_0$, and $T_1$, and it is reasonable to
  suppose that the upper bound on the price of an option (for
  suitable, convex $F(\cdot)$) should correspond to the solution of
  \SEP{} determined above. Let $G$ and $H$ be the functions derived
  above. The question remains as to how one includes the additional
  information at time $T_0$ in the hedging strategy. (For clarity, we
  suppose $B_t = 1$ for all $t\ge 0$.)

  In order to have the correct hedge for $t \in [T_0,T_1]$,
  we need a portfolio of call options maturing at time $T_1$ with
  payoff $H(X_{T_1})$. In addition to the payoff at maturity, we need
  a dynamic portfolio worth (at least) $G(X_t,\tau_t)$, where now
  $\tau_t = \int_{T_0}^t \lambda(X_s) \sigma_s^2 \, \ds$ ---
  specifically, recalling $F(0)=0$, and Proposition~\ref{prop:GTineq},
  we should have $G(X_{T_0},0) + H(X_{T_0}) = 0$. This implies that we
  need a portfolio of call options with maturity $T_0$ and with payoff
  $-H(X_{T_0})$. Under a similar assumption to
  Assumption~\ref{ass:Hass}, this is possible, and the resulting
  strategy will give a superhedge which is a hedge under the optimal
  model corresponding to the Rost embedding.
\end{rmk}

Strictly speaking, Theorem~\ref{thm:rHedge} is model-dependent: our
arbitrage strategy is specified in a way that is independent of the
exact model, but some of the underlying concepts --- specifically the
quadratic variation in the option payoff, and the stochastic integral
term that is implemented in the hedge both depend on an underlying
probability space, and it could therefore be argued that the
strategies are not truly model-independent. In the following remarks,
we briefly outline how one might relax this assumption.

\begin{rmk}
  In a similar manner to recent work of \citet{DavisOblojRaval:10}, we
  can formulate this result without any need for a probabilistic
  framework. The difficulty in treating the previous arguments on a
  purely pathwise basis is that we need to make sense of the
  stochastic integral term in \eqref{eq:Gineq}, and the quadratic
  variation in the option payoff. However, under mild assumptions on
  the paths of $S_t=B_t^{-1} X_t$, and a stronger assumption on $G$
  (specifically, that $G$ is $\mathbb{C}^{2,1}$)\footnote{This would
    appear to be a very strong assumption on $G$. However, along the
    lines of \cite{DavisOblojRaval:10}, it seems reasonable that the
    conclusions would hold in a milder sense; what seems harder is to
    both provide a set of conditions under which these conclusions
    hold, and which can be verified under relatively natural
    constraints on our modelling setup.}, we can recover a pathwise
  result, based on a version of It\^o's formula due to
  \citet{Follmer:81}.

  Suppose we fix a sequence of partitions $\pi_n = \{0 = t_0^n \le
  t_1^n \le t_2^n \le \cdots \le t_{n}^n=T\}$ of $[0,T]$, such that
  $\sup_{i \le n,}|t_{i}^n-t_{i-1}^n| \to 0$ as $n \to \infty$. Then
  we define the class $\mathcal{QV}$ of continuous, strictly positive
  paths $X_t$ such that
  \begin{equation}
    \label{eq:QVdefn}
    \sum_{i =1}^n \left(\frac{
        X_{t_{i}^n}-X_{t_{i-1}}^n}{X_{t_{i-1}}^n}\right)^2 \delta_{t_{i-1}^n} = \mu^n
    \to \mu \text{ where } \mu([0,t]) = \int_0^t \sigma_s^2 \, \ds
  \end{equation}
  for some bounded measurable function $\sigma_s:[0,T] \to \R_+$.
  Here $\delta_t$ is the Dirac measure at $t$, and the convergence is
  in the sense of weak convergence of measures as $n \to \infty$,
  possibly down a subsequence.

  Then, following the proof of the main theorem in \citet{Follmer:81},
  an application of Taylor's Theorem to the terms
  $G(X_{t_{i}^n},\tau_{t_i^n})-G(X_{t_{i-1}^n},\tau_{t_{i-1}^n})$,
  where $\tau_t = \int_0^t \lambda(X_s) \sigma_s^2 \, \ds$, gives
  \begin{align*}
    G(X_T,\tau_T) - G(X_0,0) = {} & \sum_{i =1}^n \pd{G}{x} \left(
      X_{t_i^n}-X_{t_{i-1}^n}\right) + \sum_{i =1}^n \pd{G}{t} \left(
      \tau_{t_i^n}-\tau_{t_{i-1}^n}\right) \\& {}+ \half \sum_{i =1}^n
    \pdd{2}{G}{x} \left( X_{t_i^n}-X_{t_{i-1}^n}\right)^2.
  \end{align*}
  It follows that, whenever $X_t$ is a path in $\mathcal{QV}$, then:
  \begin{align*}
    \sum_{i =1}^n \pd{G}{x} \left( X_{t_i^n}-X_{t_{i-1}^n}\right)
    \to {} & G(X_T,\tau_T) - G(X_0,0) \\
    & {} - \int_0^T \sigma_s^2 \left( \lambda(X_s)\pd{G}{t} + \half
      \pdd{2}{G}{x} X_s^{2}\right) \, \ds.
  \end{align*}
  Recall that $\sigma(x) = x \lambda^{-1/2}(x)$, and since $G \in
  \mathbb{C}^{2,1}$ and a supermartingale for $X_t$ where $X_t$ solves
  \eqref{eq:Xdefn}, the final integrand will be negative. We conclude
  that, in the limit as we trade more often, \emph{for any} $X_t \in
  \mathcal{QV}$, we will have a portfolio which superhedges. One could
  then recover the statement in the probabilistic setting by
  observing that, almost surely, a path from a model of the form
  described by Assumption~\ref{ass:Price} lies in $\mathcal{QV}$.
\end{rmk}

\begin{rmk}\label{rmk:uncertainvolatility}
  An alternative approach that is still within a more general,
  model-independent setting, but where we do not need to assume strong
  differentiability conditions on $G$, can be constructed using the
  uncertain volatility approach, originally introduced by
  \citet{Avellaneda:1995aa}. We base our presentation on the paper of
  \citet{possamai_robust_2013}.

  Let $\Omega = \left\{\omega \in C([0,T];(0,\infty)), \omega(0) = X_0
  \right\}$ be a path space, equipped with the uniform
  norm, $||\omega|| = \sup_{t \in [0,T]}|\omega(t)|$, and let
  $X_t(\omega) = \omega(t)$ be the canonical process. Let $\Pr_0$ be
  the probability measure on $\Omega$ such that $X_t$ is a standard
  geometric Brownian motion (\ie{} $\log{X_t}$ has quadratic variation
  $t$). Let $\mathbb{F}$ be the filtration generated by $X$.

  Let $\mathbb{H}_{loc}^e(\Pr_0,\mathbb{F})$ be the set of
  non-negative, $\mathbb{F}$-progressively measurable processes
  $\alpha_t$ such that $\exp\left\{\half \int_0^\cdot \alpha_s \, \ds
  \right\}$ is $\Pr_0$-locally integrable. Then for $\alpha \in
  \mathbb{H}_{loc}^e(\Pr_0,\mathbb{F})$ we can define
  \begin{equation*}
    X_t^\alpha = \exp
    \left\{
      \int_0^t \alpha_s^{1/2} \, \dif \log(X_s)  - \half \int_0^t
      \alpha_s \, \ds
    \right\}.
  \end{equation*}
  In particular, under $\Pr_0$, $X_t^\alpha$ has $\left< \log
    X\right>_t = \int_0^t \alpha_s \, \ds$. Then we can define a
  probability measure on $\Omega$ by $\Pr^\alpha(X_t \in
  A) = \Pr_0(X_t^\alpha \in A)$, or equivalently, $\Pr^\alpha = \Pr_0
  \circ (X_\cdot^\alpha)^{-1}$. It follows that there is a class of
  probability measures $\mathcal{P} = \left\{\Pr^\alpha: \alpha \in
    \mathbb{H}_{loc}^e(\Pr_0,\mathbb{F})\right\}$ on the space
  $(\Omega,\mathbb{F})$. We aim to produce conclusions which hold for
  all $\Pr^\alpha$, and we say that something holds
  $\mathcal{P}$-quasi surely (q.s.) if it holds $\Pr$-\as{} for all $\Pr \in
  \mathcal{P}$.

  We now have a filtered space $(\Omega,\mathbb{F})$, and a class of
  (non-dominated) probability measures $\mathcal{P}$ under which we
  can discuss trading strategies simultaneously. Observe that the
  variance process $\langle \log X \rangle_t$ can be defined pathwise
  on $\Omega$ using the results of \citet{karandikar_pathwise_1995}:
  set $a_0^n = 0$, and $a_{i+1}^n = \inf\{t \ge a_i^n : |\log(\omega(t)) -
  \log(\omega(a_i^n))| \ge 2^{-n}\}$, and consider the process
  $V_t(\omega)$ defined by
  \begin{equation} \label{eq:Vnolimit}
    V_t(\omega) =
    \lim_{n \to \infty} \sum_{i=0}^n \left(
      \log(\omega(a_{i+1}^n\wedge t))-\log(\omega(a_{i}^n\wedge t)) \right)
  \end{equation}
  if the limit exists, where the limit is taken in the sense of
  uniform convergence on $[0,T]$ and defined to be zero otherwise. The
  limit exists $\Pr^\alpha$-\as{} for each $\alpha$, and when the
  limit exists, the limit is $\mathcal{F}_t$ measurable and
  $\Pr^\alpha$-\as{} equal to $\langle \log X \rangle_t$. If we write
  \begin{equation*}
    \mathcal{N}^\mathcal{P} = \{ E \subset \Omega: \exists \wti{E} \in
    \mathcal{F} \ s.t.{}\  E \subseteq \wti{E},\  \Pr(E) = 0 \ \forall
    \ \Pr \in
    \mathcal{P}\},
  \end{equation*}
  then the set of $\omega$ for which the limit in \eqref{eq:Vnolimit}
  fails is an element of $\mathcal{N}^\mathcal{P}$. As a result, we can make the
  process $V_t$ adapted by considering the augmented filtration:
  \begin{equation*}
    \mathcal{F}_t' = \mathcal{F}_t \vee \mathcal{N}^\mathcal{P}, \quad
    \mathbb{F}' = \{\mathcal{F}_t', t \ge 0\}.
  \end{equation*}
  Under this augmented filtration, the process $V_t$ remains
  $\mathbb{F}'$-progressively measurable, and indeed is continuous;
  it follows that the trading strategy described in the proof of
  Theorem~\ref{thm:rHedge} can be constructed, giving a c\`adl\`ag
  process $\psi_t$ which is continuous except at the times when the
  process $(\omega(t),V_t(\omega))$ exits the sets $\mathcal{O}_1,
  \mathcal{O}_2, \ldots$. Using the construction of
  \citet{karandikar_pathwise_1995}, we can again define pathwise a
  process $I_t$, which agrees $\Pr^\alpha$-\as{} with the classical
  stochastic integral $J_t^\Pr = \int_0^t \psi_s \dif X_s$. (Observe
  however that we may need to work in a $\Pr^\alpha$-augmented
  filtration for this latter object to be defined). Since by
  Theorem~\ref{thm:rHedge} we have $J_T^\Pr \ge G(X_T,\langle \log X
  \rangle_T)-\eps$, it follows that $I_T \ge G(X_T,V_T) - \eps$
  $\mathcal{P}$-q.s., and therefore the strategy we describe makes
  sense in the uncertain volatility setting.

  The fact that we have a concrete characterisation of
  $\psi_t$ enables us to avoid much of the technical difficulties that
  arise in \cite{possamai_robust_2013} and related papers. However,
  our results are in one sense also not quite so strong: we only
  obtain a strategy which superhedges our payoff less some
  $\eps>0$. The results in \cite{possamai_robust_2013} suggest that
  this is unnecessary. However, our results are stronger in another
  direction: we do not require any integrability restriction on the
  payoff of the option under the class of models we consider --- this
  constraint is already embedded in our restriction to non-negative
  price processes.
  
\end{rmk}

\section{Numerical Results}

\label{sec:numerical-results}

\subsection{Numerical solution of the viscosity equation}

An important goal is to use the results of the previous sections to
find numerical bounds, and their associated option prices and hedging strategies,
corresponding to the solutions of Rost and Root. The hardest aspect of
this is finding the numerical solution to the viscosity equation
\eqref{eq:ViscSoln}, and its equivalent for the Root solution. The
solution to the Rost viscosity equation is roughly equivalent to
solving a parabolic PDE inside the continuation region, while outside
this region we know the solution will be equal to the initial boundary
condition.

The numerical solution is made harder by the fact that, particularly
in the case of the Rost solution, we expect the behaviour of the
barrier near the initial starting point to be very sensitive to any
discretisation: in the case where the starting measure is a point mass
at $X_0$, and the target measure also places mass continuously (say)
near $X_0$, then we are looking for a barrier function $R(x)$ with
$R(X_0) = 0$, and a positive, but non-zero probability that $R(X_t)>t$
for some small time $t$. According to the law of the iterated
logarithm, the behaviour of the stopped process
will be very sensitive to small changes in $R(\cdot)$. As a result, a
numerical method that can concentrate on this initial region would be
beneficial. On the other hand, the behaviour of the barrier at large
times is also of interest, although here we expect the numerics are
likely to be less sensitive to discretisation.

A second question concerns the convergence and stability of our
numerical methods. The theory behind the numerical approximation of
viscosity equations is fairly well understood --- dating back to the
methods of \citet{Barles:1991aa}. In this paper, we use the results of
\citet{Barles:2007aa}, which are suited to our purposes. Since we wish
to use a large range of time steps, and we look to have non-equal grid
point spacings, we will look to use an implicit method, in order to
provide unconditional stability of the numerical regime. The results
of \cite{Barles:2007aa} provide us with the necessary justification.

To outline the numerical method used, we consider a standard numerical
scheme, with $\vec{u}^n$ the (vector valued) approximation to $u(x,t)$
evaluated at $t=t^n$, and at spatial positions $\vec{x}$. We
approximate $\mathcal{L}u = \frac{\sigma(x)^2}{2} \pd{^2 u}{x^2}(x,t)$
using the Kushner approximation described by \cite{Barles:2007aa},
which ensures that the finite difference operator $L$ can be written
in the form $(L\vec{u})_i = \sum_{j} c_{j}(t^n,x_{i+j})
(u_{i+j}-u_i)$, where the $c_j$'s are non-negative and zero except on
some finite subset of $\mathbb{Z} \setminus \{0\}$. We also need to
assume that the measures $\mu$ and $\nu$ both have compact support
and the same mean, in which case $\vec{u}$ is constant and zero at the
endpoints of $\vec{x}$.

Then an implicit numerical scheme to solve \eqref{eq:ViscSoln} will
take $\vec{u}_0= u(\vec{x},0)$, and solve iteratively
\begin{equation}\label{eq:DiscretePDE}
  \frac{\vec{u}_{n+1}-\vec{u}_n}{t^{n+1}-t^n} = \max\{L\vec{u}_{n+1},0\}.
\end{equation}
The difficulty here arising from the fact that the maximisation
depends on the unknown $\vec{u}_{n+1}$.  We can rearrange this
expression, writing $\vec{z} = \vec{u}_{n+1}-\vec{u}_{n}$, and $\Delta
t^n = t^{n+1}-t^n$, to see that this is equivalent to the problem of
finding $\vec{z} \ge 0$ such that: $(I-\Delta t^n L) \vec{z} - \Delta
t^n L \vec{u}^n \ge 0$, and $\vec{z}^\intercal \left((I-\Delta t^n L)
  \vec{z} - \Delta t^n L \vec{u}^n\right) = 0$. This is a classical
linear complementarity problem (LCP), and may be hard to solve (or at least,
may involve many evaluations of the matrix multiplication inside the
maximisation), however, at this point we can exploit the fact that the
structure of the solution implies that $\vec{z}$ will be zero at
exactly the points where we are in the barrier. Since the barrier
should generally change relatively slowly, as an initial supposition,
it is likely that the spatial values where $\vec{z} = 0$ for the
previous time-step are likely to be roughly the same at the next
step. It follows that a numerical scheme for solving LCPs which
involves pivoting on a set of basis variables may be very efficient at
solving \eqref{eq:DiscretePDE}. The algorithm we will use for this
purpose is the Complementary Pivot (or Lemke's) algorithm. We refer to
\citet{Murty:1988aa} for details on the numerical implementation of
the Complementary Pivot algorithm. We note also that a similar method
can be used to justify implicit methods for the Root solution
(the case of explicit solutions being justified directly by the
results of \cite{OberhauserDosReis:13}).

\subsection{Analysis of numerical evidence}

Using the methods outlined above, we can analyse the solutions of Root
and Rost numerically. 
In general, we
consider $\nu = \delta_{S_0}$ and $\mu$ will be determined by assuming
that we observe prices of call options which are consistent with a
Heston market model. In general we will consider features of barriers
under Heston models since they permit relatively straightforward
computation of both call prices, and prices of variance options. In
what follows, we take our given prices to come from a Heston model
with parameters: $\rho = -0.65, v_0 = 0.04, \theta = 0.035, \kappa =
1.2, r = 0, \xi = 0.5, S_0 = 2$ (see \eqref{eq:Heston} for the meaning
of the parameters).

\begin{figure}[htb]
  \centering
  \begin{tabular}{@{}cc@{}}
    \includegraphics[width=.48\textwidth]{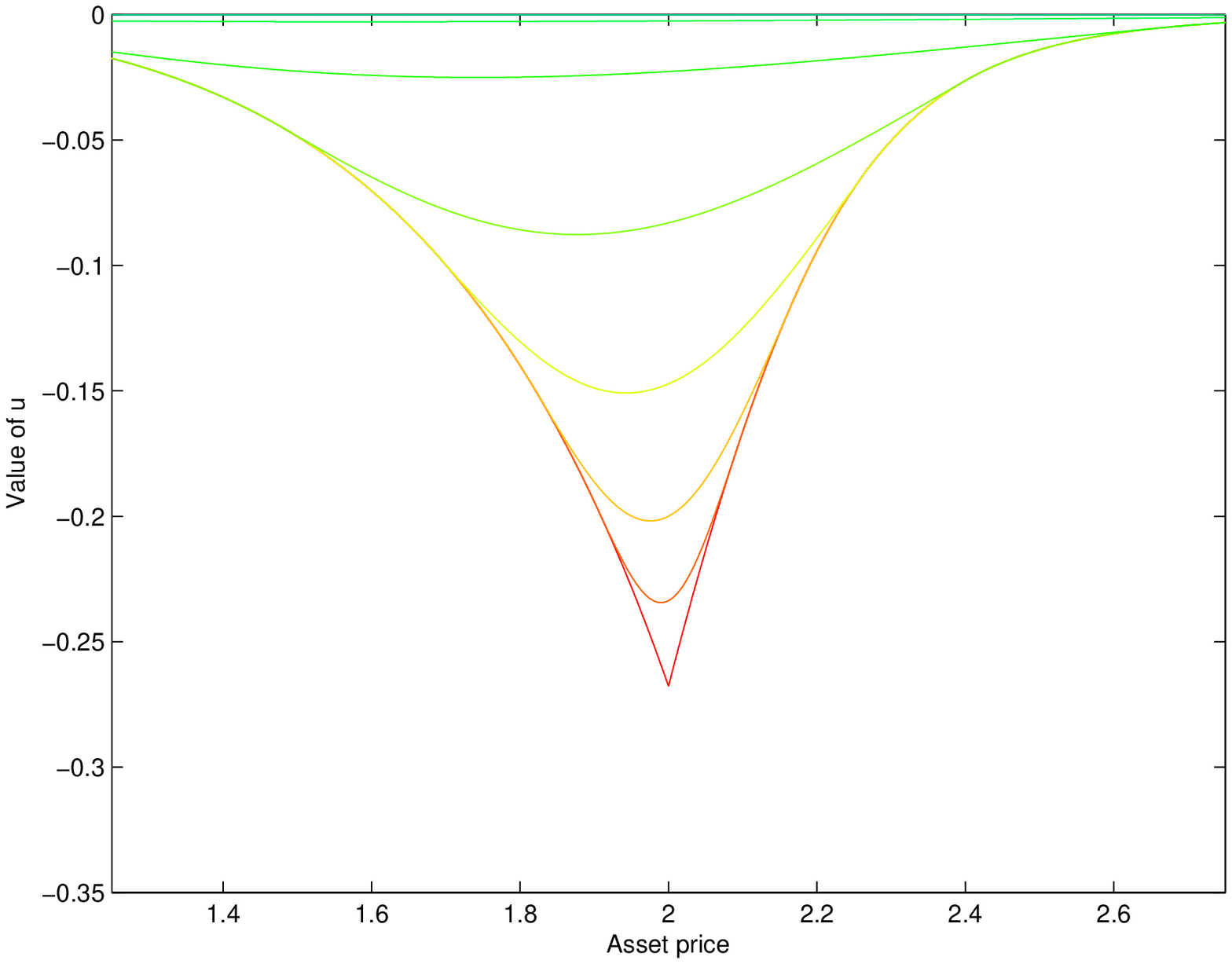} & 
    \includegraphics[width=.48\textwidth]{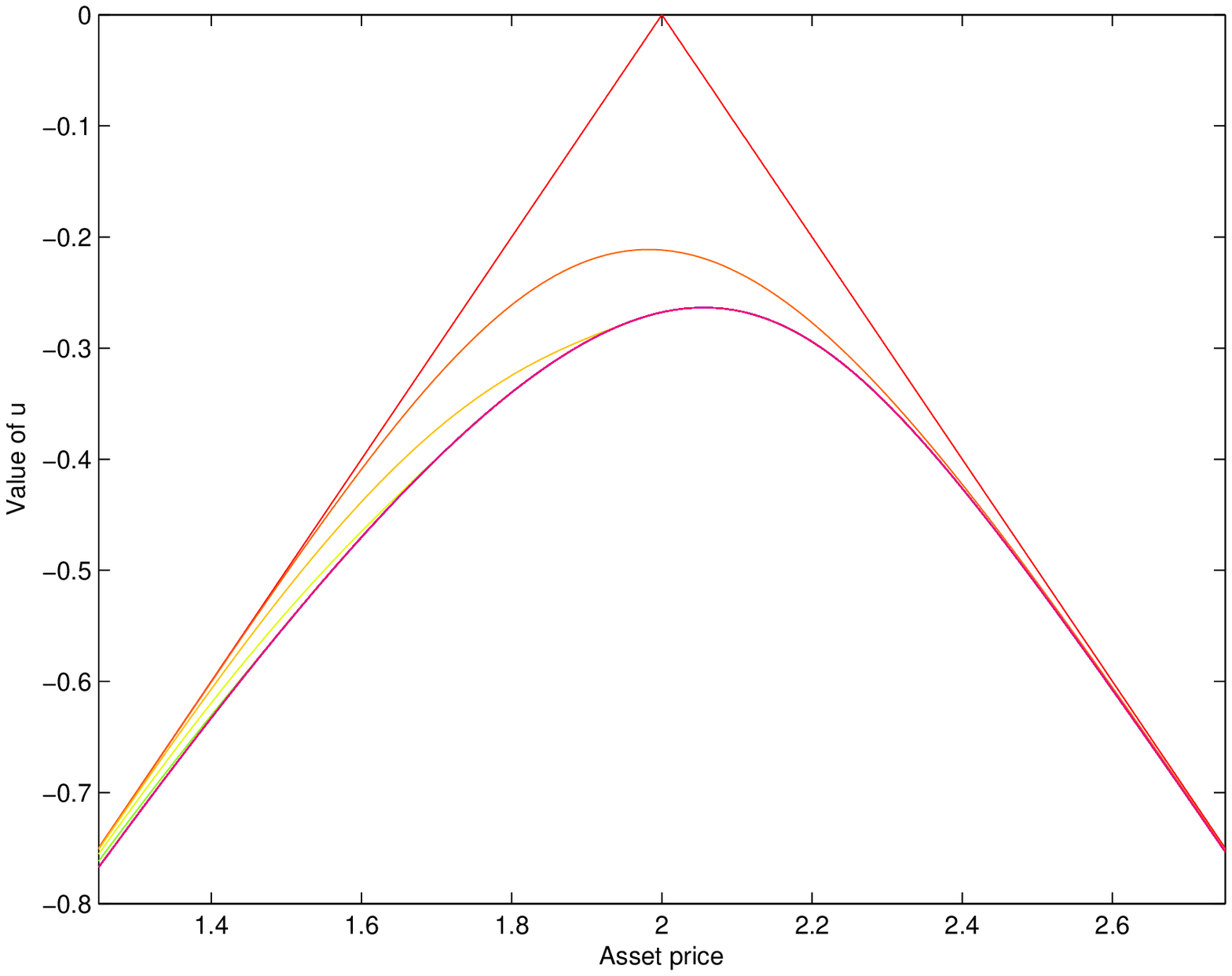} \\ 
    \includegraphics[width=.48\textwidth]{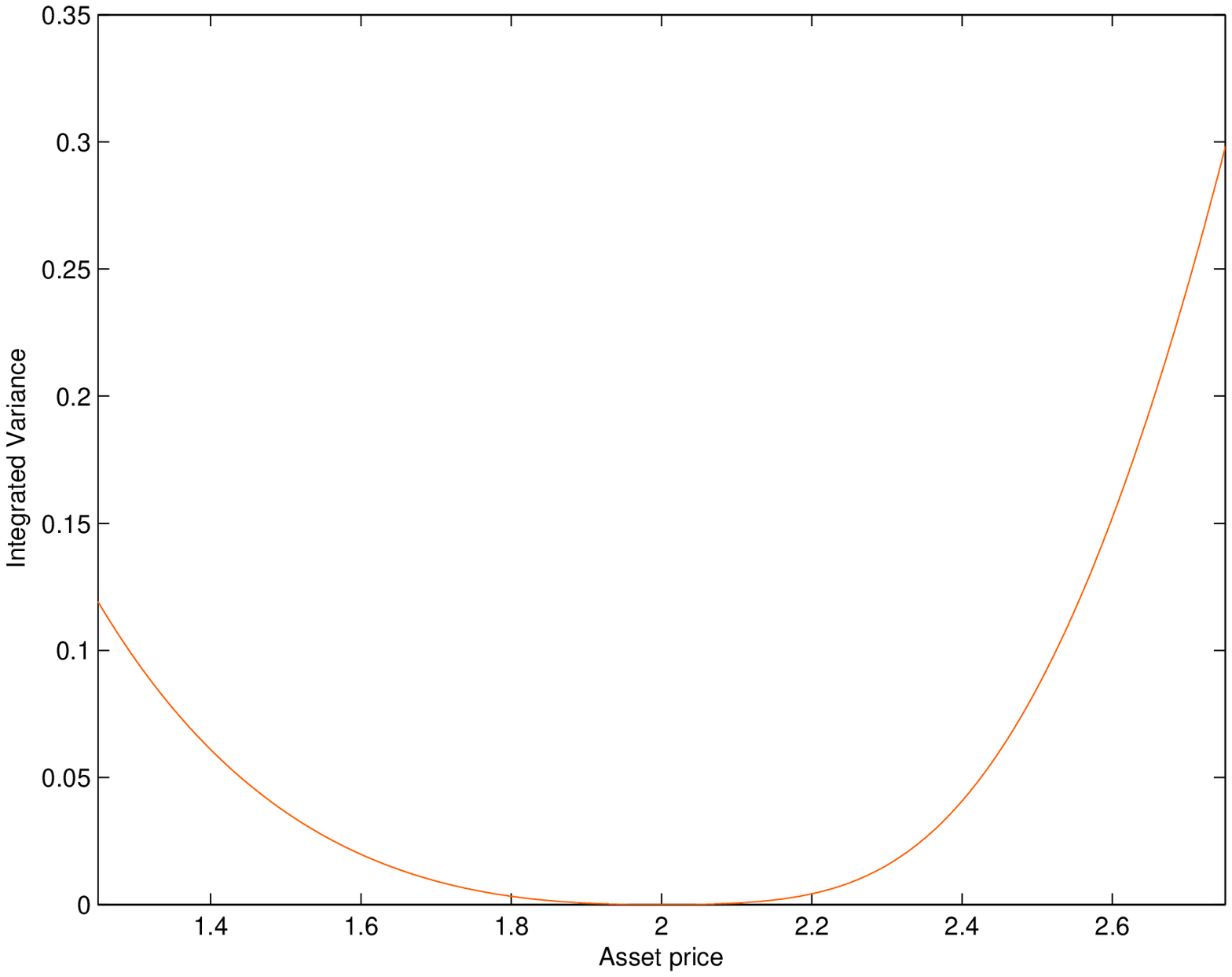} & 
    \includegraphics[width=.48\textwidth]{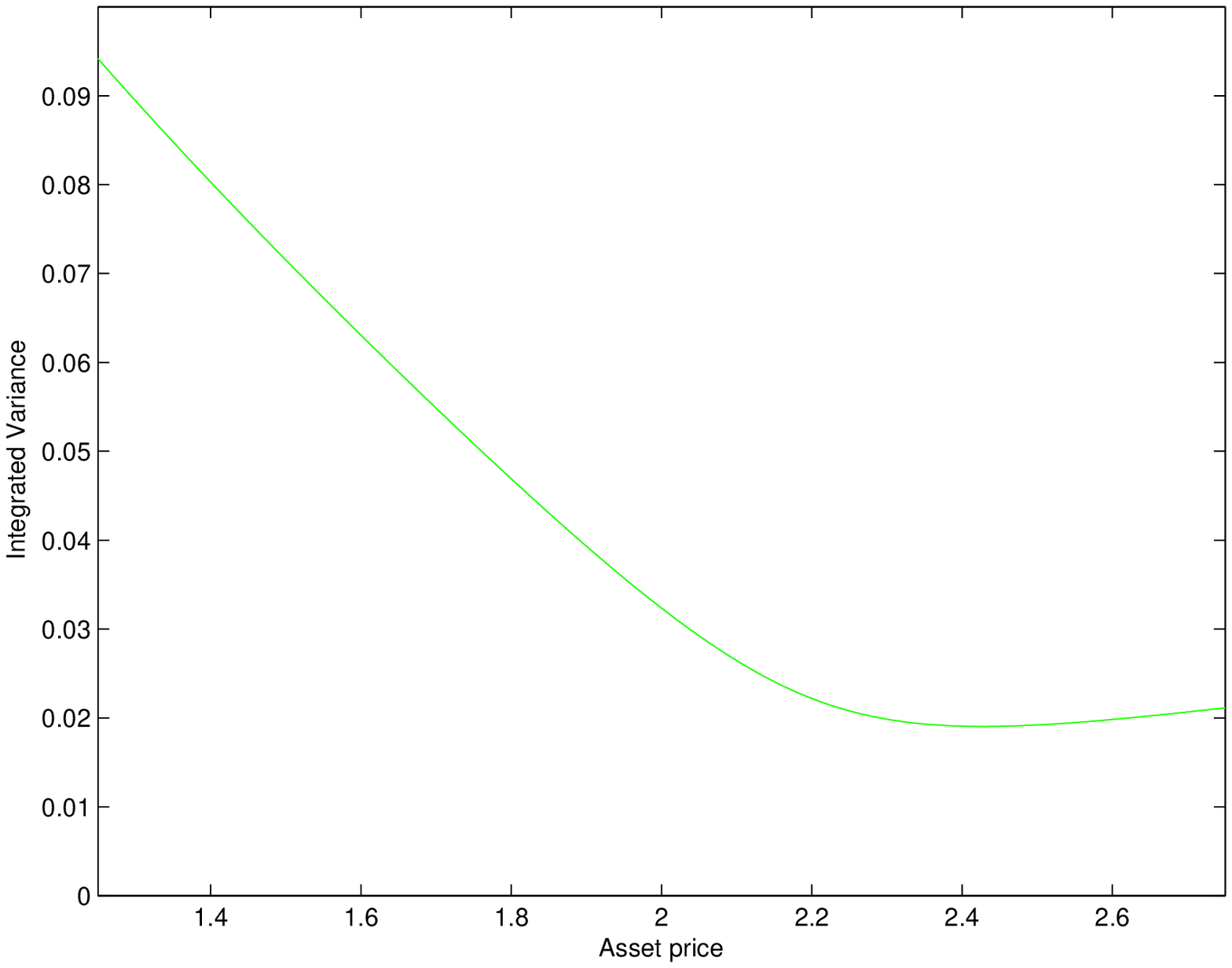} 
  \end{tabular}
  \caption{Plots of the function $u(x,t)$ (top) and the corresponding barriers (bottom) for the Rost (left) and Root (right) barriers.}
  \label{fig:basicbarriers}
\end{figure}

\begin{figure}[htb]
  \centering
  \begin{tabular}{@{}cc@{}}
    \includegraphics[width=.48\textwidth]{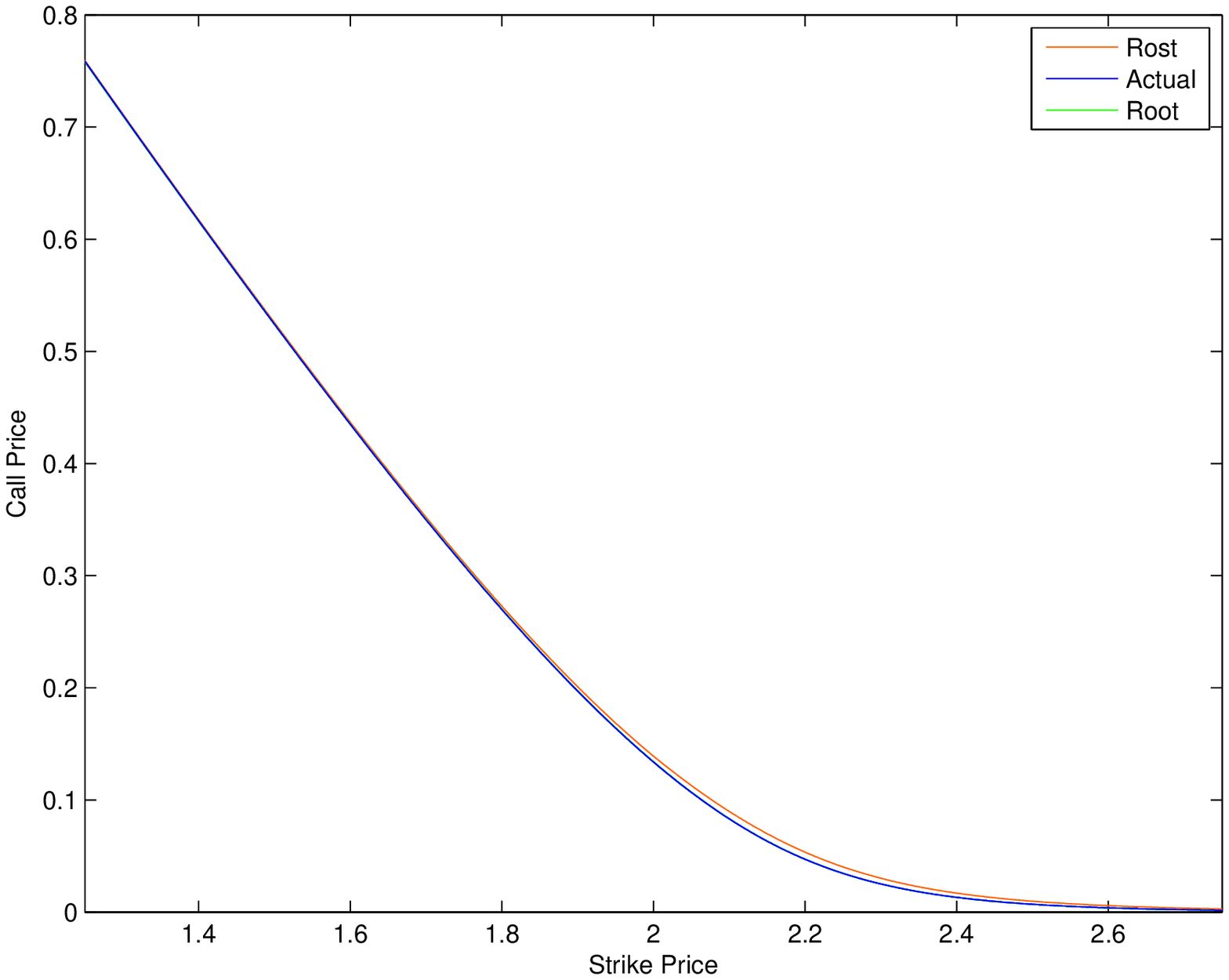}
    & 
    \includegraphics[width=.48\textwidth]{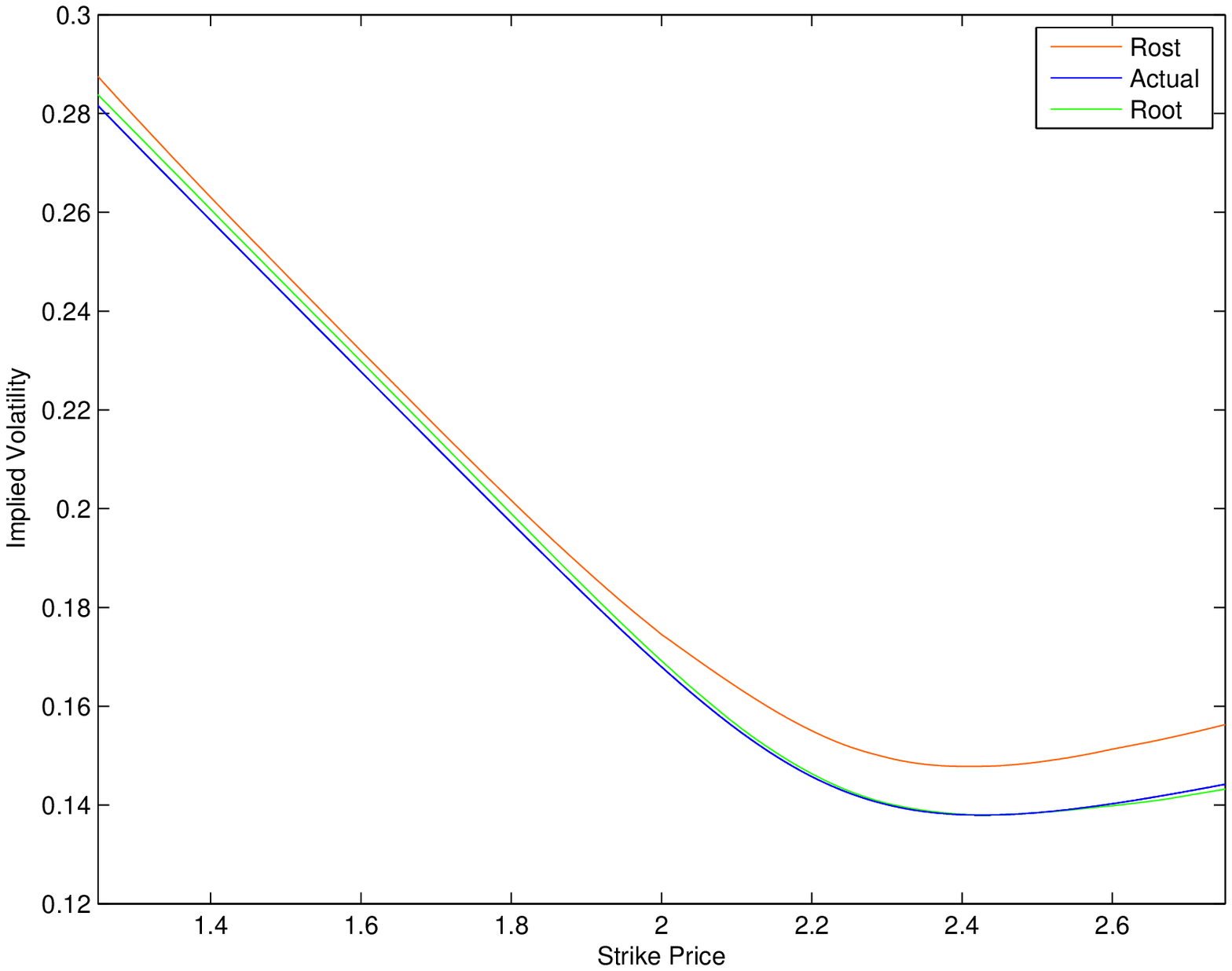}
  \end{tabular}
  \caption{The original call prices from which we obtained our
    barrier, and the empirical call prices obtained by simulation for
    the Rost and Root barriers (left); The implied volatility of the
    call prices (right). Note that numerically the Rost barrier proves
  harder to correctly compute/simulate.}
  \label{fig:EmpiricalImpliedVol}
\end{figure}
 Figure~\ref{fig:basicbarriers} shows the functions $u(x,t)$ in both the Rost and
Root solutions, and their corresponding barrier functions. We can
confirm that these functions do indeed embed the correct distributions
by simulation: we compute the distribution of a process stopped on
exit from the barrier and compute the corresponding call prices
empirically. In fact, it is more informative to plot the implied
volatility of the empirically obtained call prices. This is done in
Figure~\ref{fig:EmpiricalImpliedVol}.

\begin{figure}[htb]
  \centering
  \begin{tabular}{@{}cc@{}}
    \includegraphics[width=.48\textwidth]{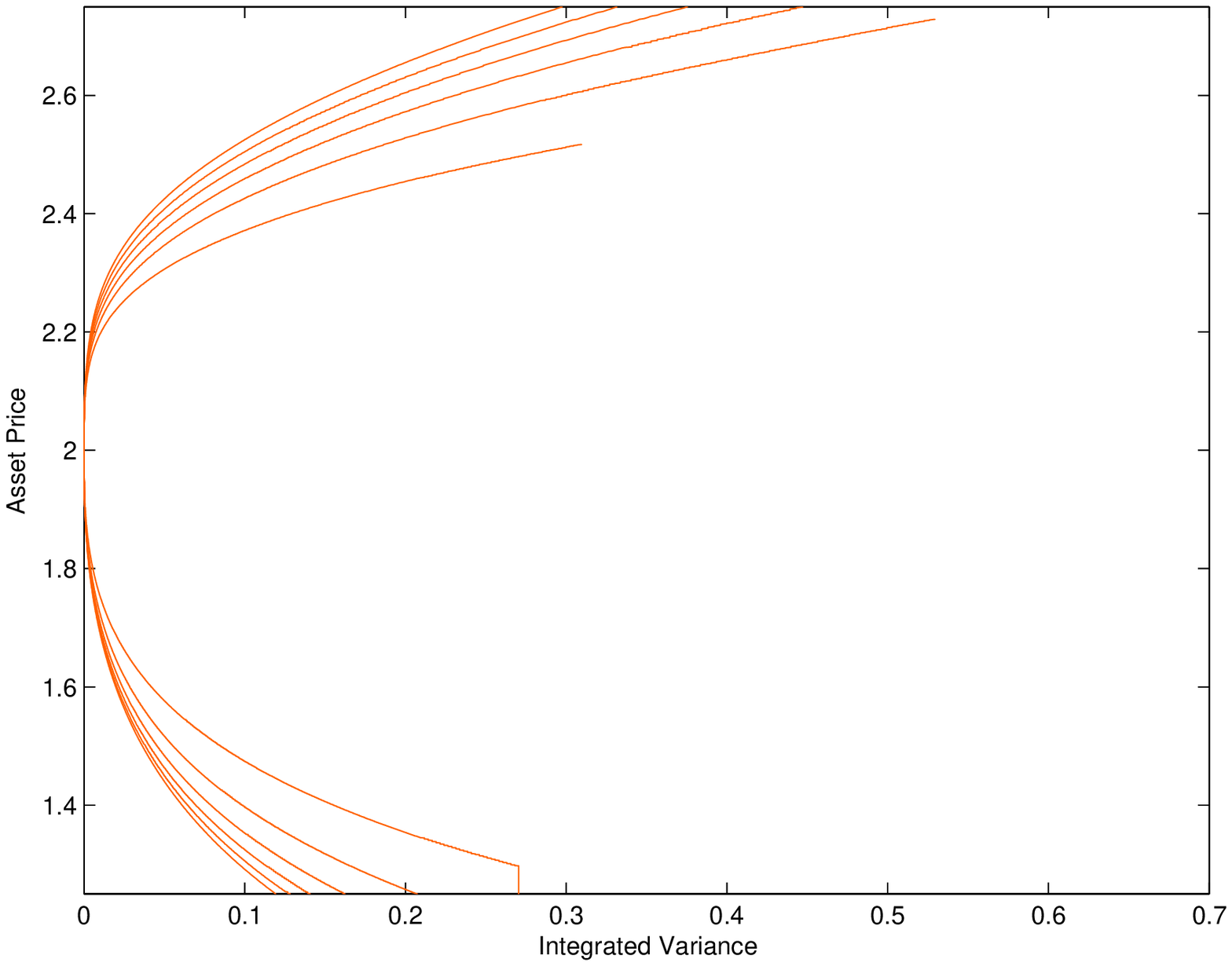}
    & 
    \includegraphics[width=.48\textwidth]{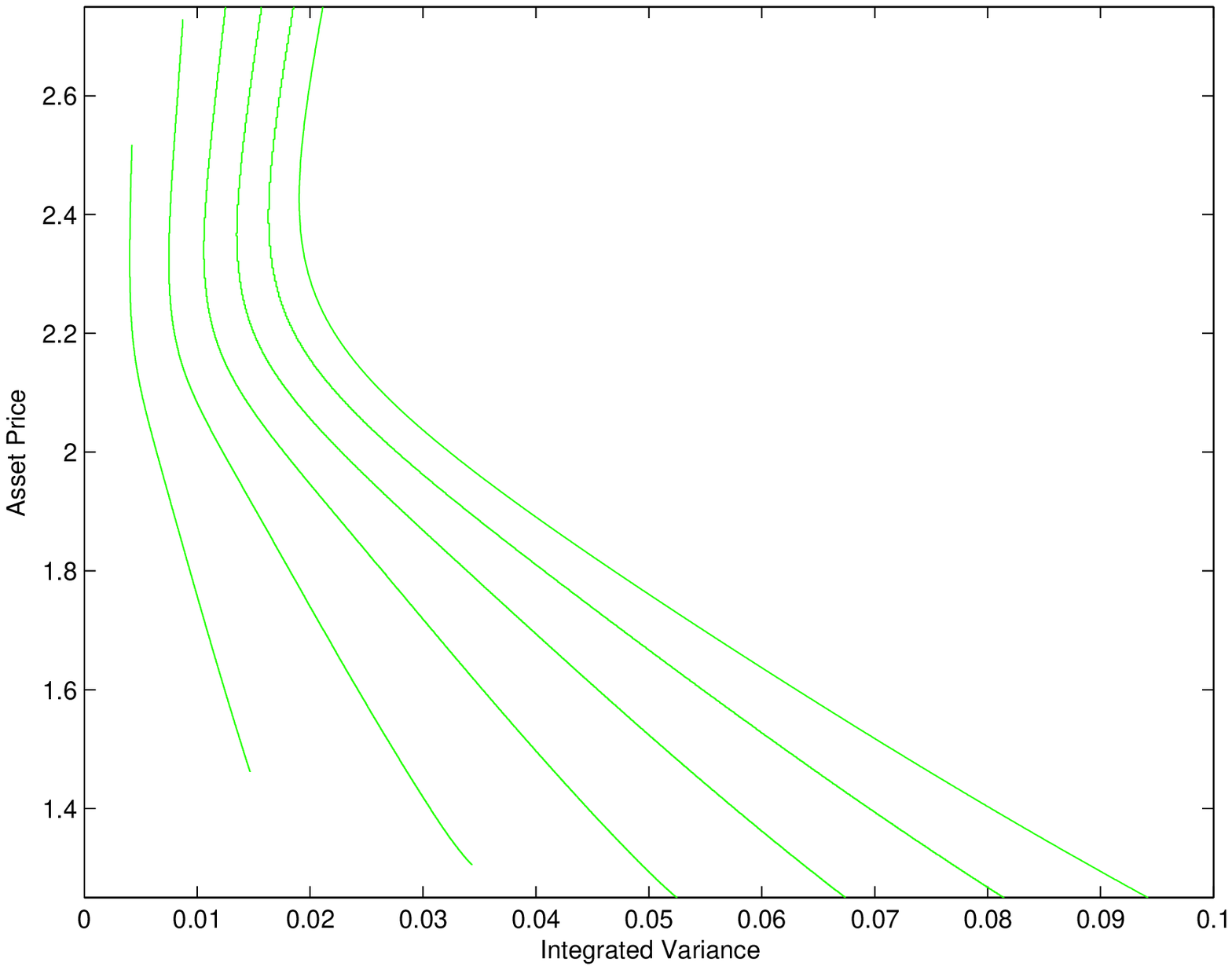}
  \end{tabular}
  \caption{We compare the barriers for multiple maturities. In this
    figure we compute the barriers at equal spaced maturities of the
    underlying Heston model (the last barrier corresponding to $T=1$)
    for  the Rost (left) and Root (right) cases.}
  \label{fig:BarriersInMaturity}
\end{figure}
 One can also consider the behaviour of the barriers in time.  In
Figure~\ref{fig:BarriersInMaturity} we plot the barriers for a
sequence of call prices with increasing maturity.

\begin{figure}[htb]
  \centering
  \includegraphics[width=.68\textwidth]{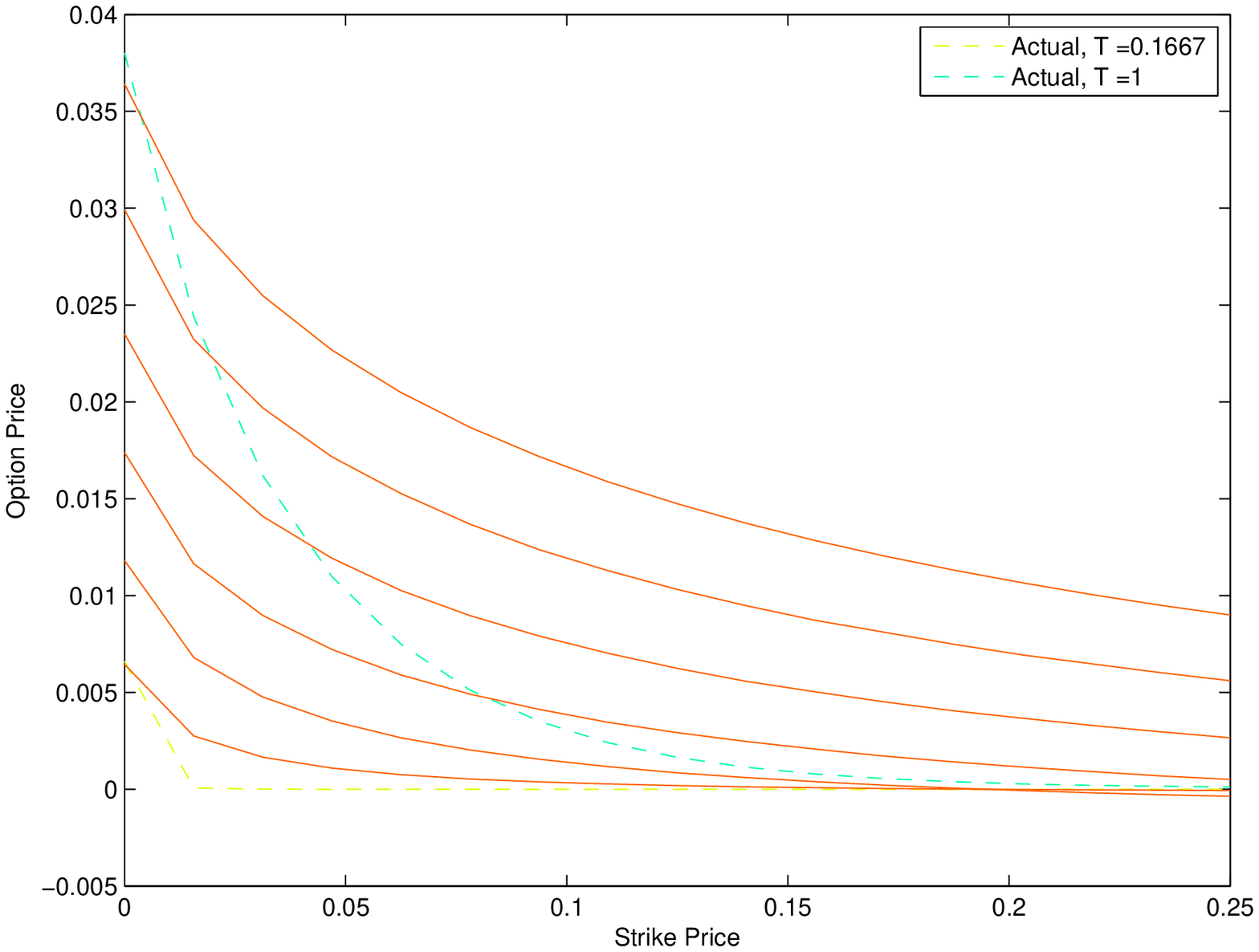}
  \caption{A plot of the upper bound on the price of a Variance call
    for different maturities (equally spaced, up to $T=1$). For
    comparison, we also plot the actual prices of the variance calls
    under the Heston model corresponding to the shortest and longest
    maturities.}
  \label{fig:VarCallInMaturity}
\end{figure}
 Of course, our interest lies in the implied bounds of options on
variance. We first consider the case of a variance call. In
Figure~\ref{fig:VarCallInMaturity} we display the upper bound on the
price of a variance call derived in Section~\ref{sec:optRost}. As
might be expected, there is a substantial difference between the upper
bound and the model-implied price.

\begin{figure}[htb]
  \centering
  \includegraphics[width=.68\textwidth]{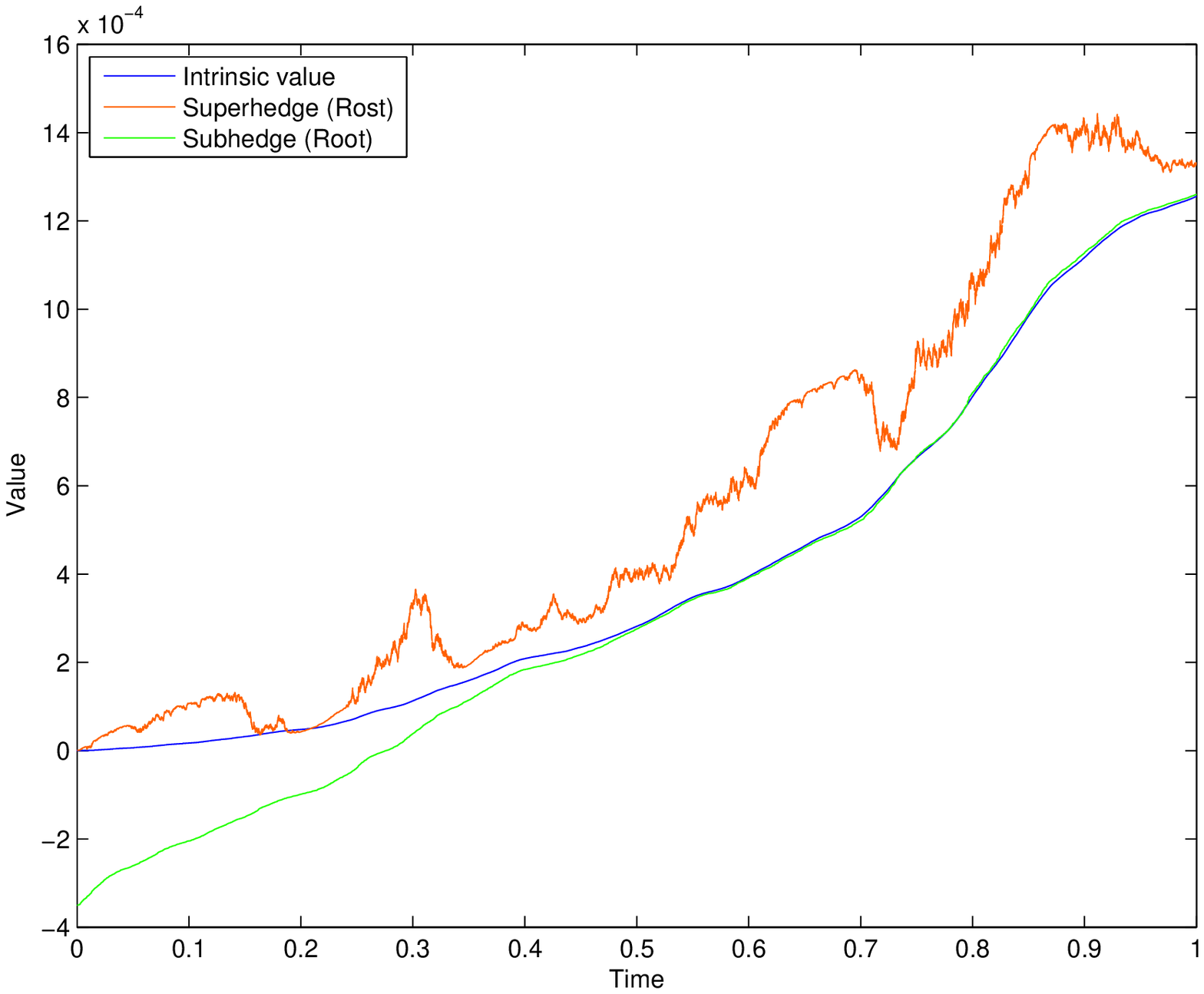}
  \caption{A plot of a realisation of the intrinsic value of the
    option ($F(\langle \ln S \rangle_t)$, and the value of the
    super-hedging and sub-hedging portfolios at time $t$, for $t \in
    [0,T]$. Here $F(t) = t(t-v_K)$, where $v_K = 0.01875$.}
  \label{fig:HedgeSimulate}
\end{figure}
To see how the hedges constructed in perform in a given realisation,
we can simulate a path, and compute the values of the super- and
sub-hedging strategies along the realisation. In this example, we
consider an option on variance with payoff $F(\langle \ln S
\rangle_T)$, where $F(t) = t (t \wedge v_K)$.  This is shown in
Figure~\ref{fig:HedgeSimulate}.
\begin{figure}[htb]
  \centering
  \includegraphics[width=.68\textwidth]{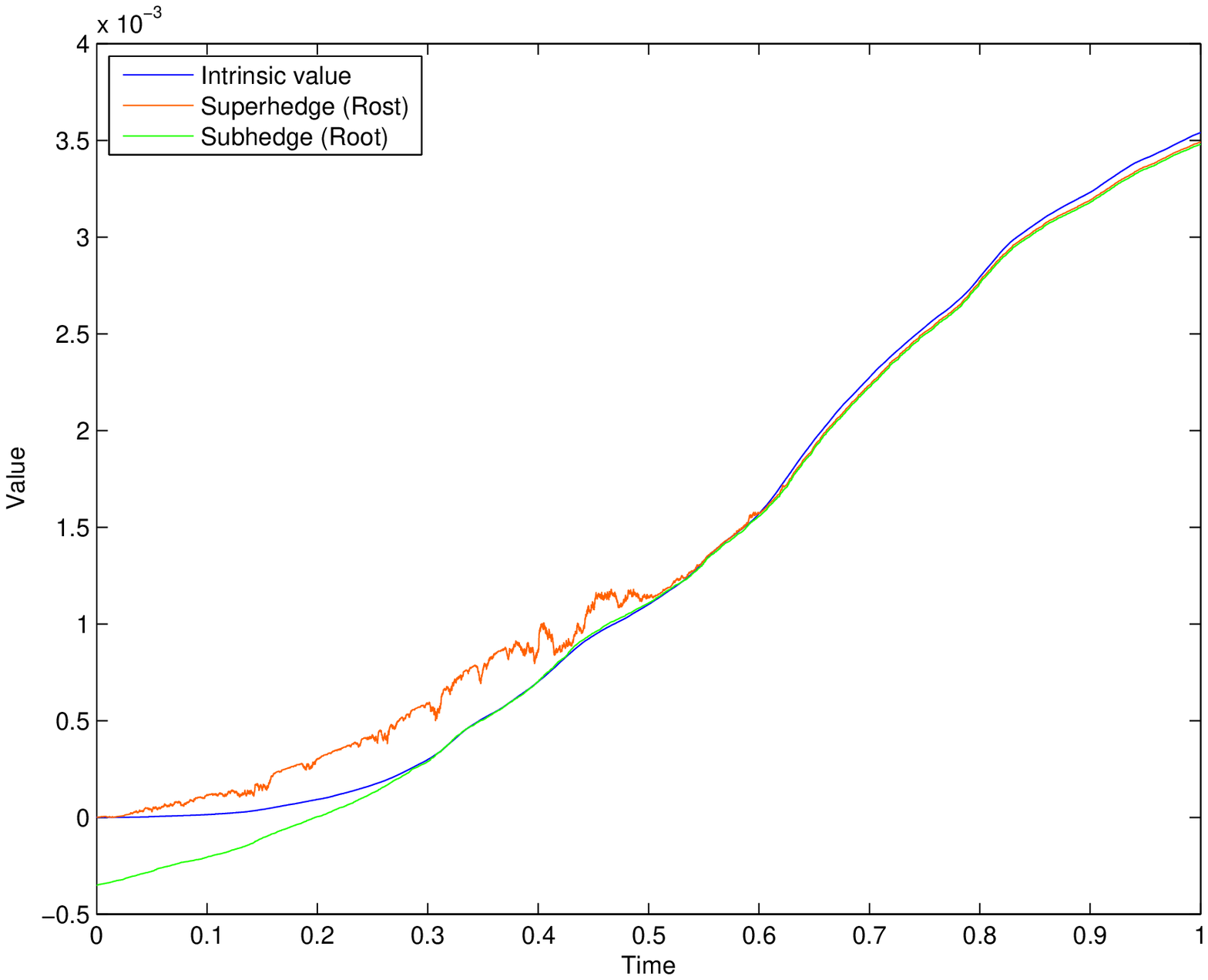}
  \caption{A plot of a realisation of the intrinsic value of the
    option, as in Figure~\ref{fig:HedgeSimulate}, and the
    corresponding sub- and super-hedges, where the realisation is
    taken from a different model to that under which the original
    hedge was constructed.}
  \label{fig:HedgeWrongModel}
\end{figure}
The main attraction of these hedging portfolios is that they remain
super/sub-hedges under a different model. For example, in
Figure~\ref{fig:HedgeWrongModel} we show how these hedges behave if
the path realisation comes from a Heston model with different
parameters. Here we set: $\rho' = 0.5, \theta' = 0.07$ and $\kappa' =
2.4$.
To conclude, we show that the sub- and super-hedges provide good
model-robustness by computing (empirically) the difference between the
payoff of an option on variance, and the corresponding super- or
sub-hedge. This is shown in Figure~\ref{fig:HedgeHistogram}, which
also shows the effect of model-misspecification on the distribution of
the hedging error.
\begin{figure}[htb]
  \centering
  \begin{tabular}{@{}cc@{}}
    \includegraphics[width=.48\textwidth]{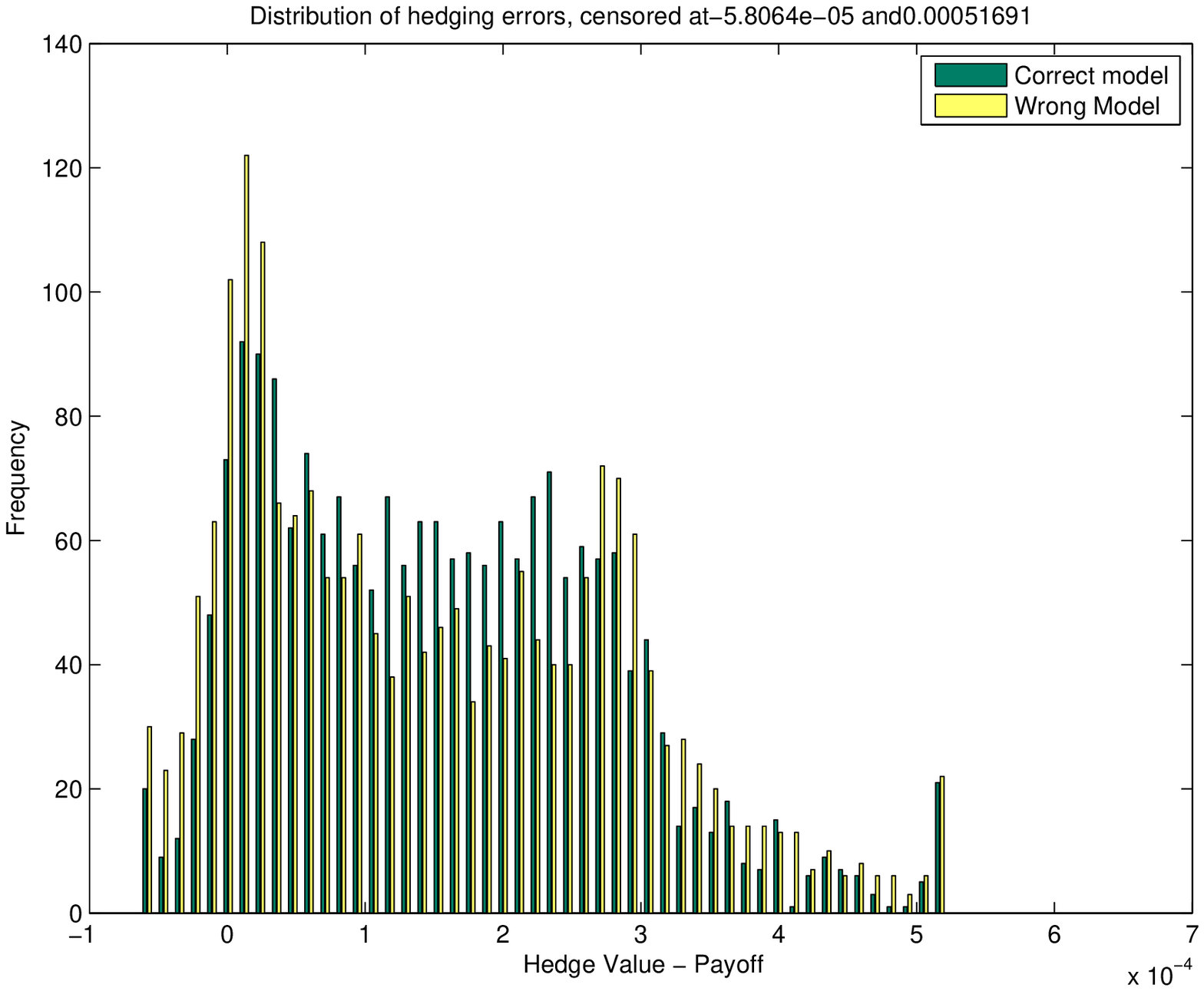}
    & 
    \includegraphics[width=.48\textwidth]{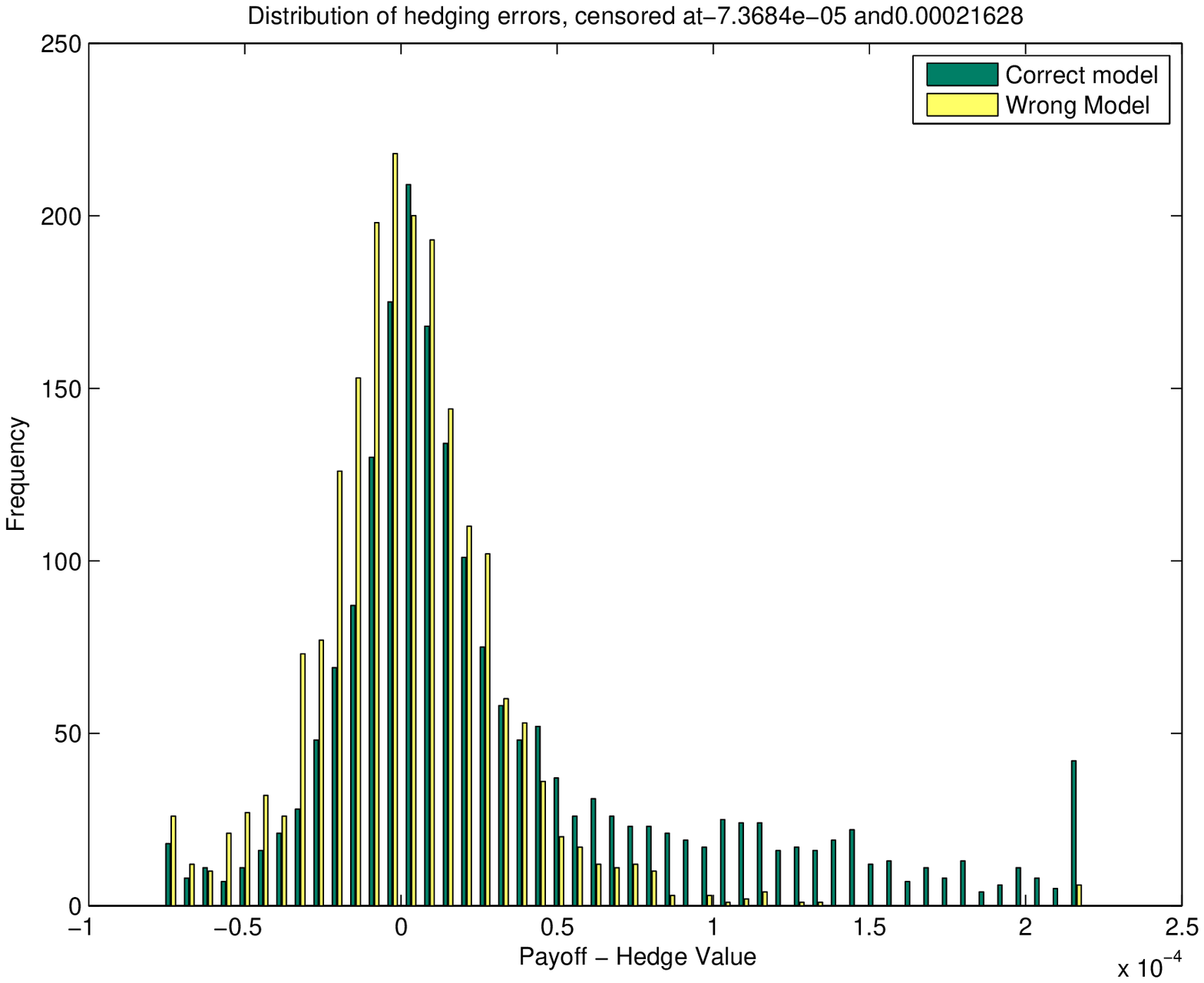}
  \end{tabular}
  \caption{The (empirically computed) distribution of the hedging
    error for the super-hedge (left) and sub-hedge (right). In the
    former case, a positive error represents a surplus in the hedge,
    while in the latter case, a positive error represents an
    underhedge. We also compare the tracking error with the tracking
    error for the same strategy when our realised path is determined
    by the mis-specified model of Figure~\ref{fig:HedgeWrongModel}.}
  \label{fig:HedgeHistogram}
\end{figure}

\section{Extremality and the Heston-Nandi model}\label{sec:extr-hest-nandi}

In this section, we consider a particular, commonly used model for
asset prices --- the Heston-Nandi model --- and show that it can have
particularly bad implications for the pricing of variance options.

The Heston-Nandi model \citep{HestonNandi:98} is the common Heston
stochastic volatility model \citep{Heston:93}, where the correlation
$\rho$ between the Brownian motions driving the asset and the
volatility processes is taken to be $-1$. Since in some asset classes,
$\rho \approx -1$ is not abnormal (\eg{} a Heston model calibrated to
the S\&P 500 typically has $\rho \approx -0.9 \pm 0.1$, for example,
see \citet{GuillaumeSchoutens:12}), and the pricing of options on
variance in Heston models is also common practice
\citep{JavaheriWilmottHaug:04}, this may substantially bias the prices
towards the extreme models.

The Heston model is given (under the risk-neutral measure) by:
\begin{equation} \label{eq:Heston}
  \begin{aligned}
    \df S_t & = r S_t \, \dt + \sqrt{v_t} S_t \, \df B_t,\\
    \df v_t & = \kappa (\theta - v_t) \, \dt + \xi \sqrt{v_t} \, \df
    \widetilde{B}_t,
  \end{aligned}
\end{equation}
where $B_t$ and $\widetilde{B}_t$ are Brownian motions with
correlation $\rho$. The Heston-Nandi model is the restricted case
where $\rho = -1$, and so $\widetilde{B}_t = - B_t$. Note that $v_t =
\sigma_t^2$ in our previous notation, so we are interested in options
on $\int_0^T v_t \, \dt$

The simplification $\rho = -1$ allows for the following observation:
using It\^o's Lemma, we know
\begin{align*}
  \df (\log(\me^{-rt}S_t)) & = - \half v_t \dif t + \sqrt{v_t}
  \dif B_t \\
  & = \left( \frac{\kappa \theta}{\xi} - \left( \frac{\kappa}{\xi} +
      \half\right)v_t \right) \dif t -\frac{1}{\xi} \df v_t.
\end{align*}
Solving, we see that
\begin{equation}\label{eq:HNProc}
  \log \left( \frac{\me^{-rT}S_T}{S_0} \right) = \frac{1}{\xi}(v_0 - v_T) +
  \frac{\kappa \theta}{\xi} T - \left( \frac{\kappa}{\xi}
    + \half\right) \int_0^T v_t \dif t.
\end{equation}
If we assume that the maturity time of our option, $T$, is
sufficiently large, since $v_T$ is mean reverting, $(v_T - v_0)\approx
(\theta - v_0)$ will be small in relation to the other terms on the
right-hand side.

If we temporarily ignore the $v_T-v_0$ term, \eqref{eq:HNProc} tells
us that, at time $T$, we have
\begin{equation*}
  \int_0^T  v_t \, \dt \approx \left( \frac{\kappa}{\xi}
    + \half\right)^{-1} \left[ \log\left( \frac{S_0}{S_T}\right) +
    \frac{\kappa \theta}{\xi} T \right].
\end{equation*}
Writing
\begin{equation} \label{eq:RTdefn} R_T(x) = \left( \frac{\kappa}{\xi}
    + \half\right)^{-1} \left[ \log\left( \frac{S_0}{x}\right) +
    \frac{\kappa \theta}{\xi} T \right],
\end{equation}
then we have
\begin{equation*}
  T \approx \inf\left\{ s \ge 0 : \int_0^s  v_t \, \dt \ge R_T(X_s)\right\}.
\end{equation*}
This describes a barrier stopping time, corresponding to a Root
stopping time, with
\begin{equation}
  \label{eq:HNDdefn}
  D_T = \left\{ (x,t) \in \R \times \R_+ : t < R_T(x)\right\}.
\end{equation}
So, ignoring the term in $v_T$, we might conjecture that the
corresponding model minimises the value of a derivative which is a
convex, increasing function of $v_T$ over all models with the same law
at time $T$.

This leads to the following result:
\begin{thm} \label{thm:HNThm} Let $M>0$ and suppose $\xi, \theta,
  \kappa, r>0, \xi \neq 2 \kappa$ are given parameters of a
  Heston-Nandi model, $\Q^{HN}$. Suppose $\mathcal{Q}_T$ is the class
  of models $\Q$ satisfying Assumption~\ref{ass:Price} and
  $\E^{\Q^{HN}}(S_T-K)_+ = \E^\Q(S_T -K)_+$ for all $K\ge 0$.

  Then there exists a constant $\kappa$, depending only on $M$ and the
  parameters of the Heston-Nandi model, such that for all convex,
  increasing functions $F(t)$ with suitably smooth derivative $f(t) =
  F'(t)$ such that $f(t), f'(t) \le M^*$, and for all $T \ge 0$
  \begin{equation}
    \label{eq:HNThm}
    \E^{\Q^{HN}} F\left(\langle \log S\rangle_T\right) \le \inf_{\Q \in
      \mathcal{Q}_T} \E^\Q F\left(\langle \log S\rangle_T\right) + \kappa.
  \end{equation}
\end{thm}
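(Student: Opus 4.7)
The strategy exploits the rigidity of the Heston--Nandi model: realised variance is an almost-deterministic function of the terminal price, and this relation takes exactly the form of a Root barrier, so the HN model sits close to the Root-minimising model for the law it generates. Set $c := (\kappa/\xi + \tfrac{1}{2})^{-1}$, which is positive and finite (the assumption $\xi \neq 2\kappa$ simply excludes the degenerate case $c=1$). Rearranging \eqref{eq:HNProc} and writing $X_t = e^{-rt}S_t$, we obtain the key identity
\begin{equation*}
  \tau_T^{HN}\ :=\ \langle \log S\rangle_T\ =\ R_T(X_T) + \delta_T,
  \qquad \delta_T\ :=\ \tfrac{c}{\xi}(v_T - v_0),
\end{equation*}
with $R_T$ as in \eqref{eq:RTdefn}. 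Under $\Q^{HN}$ the process $v_t$ is a mean-reverting CIR process, so standard moment bounds give $\sup_{T\ge 0}\E^{\Q^{HN}}|v_T| < \infty$ and hence $\sup_{T\ge 0}\E^{\Q^{HN}}|\delta_T| \le C_1$ for a constant $C_1$ depending only on $v_0,\theta,\kappa,\xi$. The hypothesis $f \le M^*$ makes $F$ Lipschitz with constant $M^*$, so
\begin{equation*}
  \E^{\Q^{HN}} F(\tau_T^{HN})\ \le\ \int F(R_T(y))\,\mu(\df y) + M^* C_1,
\end{equation*}
where $\mu$ is the law of $X_T$ under $\Q^{HN}$, which, by definition of $\mathcal{Q}_T$, coincides with the law of $X_T$ under every $\Q \in \mathcal{Q}_T$.

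\textbf{Comparison with the Root bound.} Let $\tilde X$ be geometric Brownian motion on $(0,\infty)$ with $\tilde X_0 = S_0$, and let $\tau_R$ be the Root embedding of $\mu$ in $\tilde X$; this exists because $X$ is a $\Q^{HN}$-martingale, so $\mu$ has mean $S_0$ and \eqref{eq:potcond} holds. By the Root optimality result of \cite{CoxWang:11} (dual to Theorem~\ref{thm:RostChacon} for convex increasing $F$),
\begin{equation*}
  \inf_{\Q \in \mathcal{Q}_T}\E^{\Q} F(\langle \log S\rangle_T)\ =\ \E F(\tau_R).
\end{equation*}
Since $\tau_R$ is a barrier hitting time there is a lower semi-continuous $R_R:(0,\infty)\to[0,\infty]$ with $\tau_R = R_R(\tilde X_{\tau_R})$, and $\tilde X_{\tau_R} \sim \mu$, so $\E F(\tau_R) = \int F(R_R(y))\,\mu(\df y)$. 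Combining this with the previous step and applying the Lipschitz bound once more,
\begin{equation*}
  \E^{\Q^{HN}} F(\tau_T^{HN})\ -\ \inf_{\Q \in \mathcal{Q}_T} \E^{\Q} F\ \le\ M^* C_1 + M^*\int |R_T(y) - R_R(y)|\,\mu(\df y),
\end{equation*}
reducing the theorem to a uniform-in-$T$ bound on the final integral.

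\textbf{The main obstacle: stability of the Root barrier.} The remaining estimate is a stability/uniqueness statement: if $\delta_T$ were identically zero, then $R_T$ would be a barrier embedding $\mu$ in $\tilde X$, and by uniqueness of the Root construction it would coincide with $R_R$. In general, the hitting time $\tau' := \inf\{s: s \ge R_T(\tilde X_s)\}$ embeds some $\hat\mu$ differing from $\mu$ by a $\delta_T$-perturbation, and $R_T$ is then, by uniqueness, the Root barrier of $\hat\mu$ rather than of $\mu$. I would proceed in two sub-steps. First, compare potentials: using the identity from Step~1 together with a Tanaka/Meyer--It\^o expansion of $|\tilde X_{\tau'}-y|$ and $|X_T-y|$, show that $\|\poten{\hat\mu} - \poten{\mu}\|_\infty \le c\,\E^{\Q^{HN}}|\delta_T| \le C_2$, uniformly in $T$. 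Second, invoke a stability estimate for the Root barrier, most conveniently formulated through the Root analogue of the viscosity characterisation \eqref{eq:ViscSoln} developed in \cite{CoxWang:11} and \cite{OberhauserDosReis:13}: by the comparison principle for the associated parabolic obstacle problem, an $\eps$-perturbation of the Cauchy data $\poten{\mu}-\poten{\nu}$ in the sup norm produces a corresponding perturbation of the value function, which translates, via the characterisation $B = \{u=u(\cdot,0)\}$, into an $L^1(\mu)$-perturbation of the barrier. The second sub-step is the technically hardest part of the argument; it is also where the extra hypothesis $f' \le M^*$ is likely to enter, via a second-order Taylor refinement of Step~1 that recognises the leading $\delta_T$-contribution as a near-martingale term with small drift and thereby yields cancellations beyond the crude Lipschitz bound.
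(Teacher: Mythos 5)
Your opening move is the same as the paper's: write $\langle\log S\rangle_T = R_T(X_T) + \delta_T$ with $\delta_T = \tfrac{c}{\xi}(v_T-v_0)$, control $\E|\delta_T|$ uniformly in $T$ via mean reversion of the CIR process, and use $f \le M^*$ to get $\E^{\Q^{HN}}F(\langle\log S\rangle_T) \le \int F(R_T)\,\dif\mu + M^*C_1$. From there, however, you take a route the paper deliberately avoids, and that route has a gap you acknowledge but do not close.

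You introduce the \emph{true} Root barrier $R_R$ embedding $\mu$, write $\inf_{\Q}\E^\Q F = \E F(\tau_R) = \int F(R_R)\,\dif\mu$, and reduce everything to a uniform-in-$T$ bound on $\int|R_T - R_R|\,\dif\mu$. The proposed stability estimate does not follow from what you cite. The viscosity comparison principle transfers a sup-norm perturbation of the initial data $\poten{\mu}-\poten{\nu}$ into a sup-norm perturbation of the value function $u$; but the barrier is a \emph{level set} of $u$ (the free boundary of an obstacle problem, the set where $u(x,t)=u(x,0)$), and a sup-norm $\eps$-perturbation of a function can move its level sets by $O(1)$ exactly where the function is flat in $t$ --- which is precisely what happens near the free boundary. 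There is no quantitative $L^1(\mu)$-stability theorem for Root barriers under $L^\infty$ potential perturbations in the references you invoke, and this is a genuine missing ingredient rather than a routine ``second sub-step.'' You also need $\tau_R = R_R(\wti X_{\tau_R})$ a.s., and you need the bound to stay uniform as $T\to\infty$ even though $\mu=\mu_T$ is spreading out; neither is addressed.

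The paper's proof never compares $R_T$ to $R_R$. It builds the Root-side pathwise-inequality functions $M,Z,G,H$ of \citet{CoxWang:11} from the \emph{approximate} barrier $R_T$ itself, so $G(x,t)+H(x)\le F(t)$ with equality on $\{t=R_T(x)\}$. Two things are then checked. On the HN side, the discrepancy $\tau_T - R_T(\wti X_{\tau_T}) = -\delta_T$ costs $M^*\kappa_1$ via the Lipschitz bounds on $F$ and on $G$ in $t$, and the submartingale drift of $G(\wti X_t,t)$ (which equals $-\gamma(\wti X_t)\indic{t>R_T(\wti X_t)}$) contributes only over the short time the time-changed HN path spends outside $D_T$, again controlled by mean reversion. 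On the competitor side, for \emph{any} stopping time $\sigma$ embedding $\mu$ the subhedging inequality gives $\E F(\sigma) \ge G(\wti X_0,0)+\E H(\wti X_\sigma)$ directly, without identifying $\sigma$ with the Root embedding. The hypothesis $f'\le M^*$ enters at the bound on $\gamma$, via $\partial M/\partial t \le \sup f' \le M^*$ --- not through a Taylor refinement of the $\delta_T$ term as you speculate. In short: the opening identity is right, but the reduction to barrier stability introduces a step that is neither proven nor readily provable with current techniques, and the paper's argument works with $R_T$ throughout precisely to sidestep it.
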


Note that the strength of the result depends on the fact that the
constant $\kappa$ is independent of both $T$ and $F$. In particular,
$\langle \log S\rangle_T$ should be both growing in $T$ and increasing
in variance as $T$ increases. That this does not appear in the bound
leads us to claim that $\Q^{HN}$ is {\it asymptotically optimal}.

In fact, the continuity assumptions can be trivially relaxed, and this
leads to the simple corollary:
\begin{cor}
  The conclusions of Theorem~\ref{thm:HNThm} hold where the class of
  functions $F$ considered is the set of variance call payoffs:
  $F_K(t) = (t-K)_+$ for all $K \in \R^+$ and all maturity dates
  $T>0$.
\end{cor}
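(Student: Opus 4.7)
The plan is to derive the corollary from Theorem~\ref{thm:HNThm} by a standard approximation argument, since the variance call payoff $F_K(t) = (t-K)_+$ has left derivative $f_K := \indic{\cdot > K}$ which is bounded but not smooth, while $f_K'$ is a Dirac mass at $K$, so $F_K$ does not satisfy the smoothness hypotheses of the theorem directly.

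For each $\eps > 0$, I would smooth $F_K$ from below: take a nonnegative mollifier $\phi_\eps$ supported in $[0,\eps]$ with $\int \phi_\eps = 1$, set $f_K^\eps := \indic{\cdot > K} * \phi_\eps$, and $F_K^\eps(t) := \int_0^t f_K^\eps(s)\,\ds$. Each $F_K^\eps$ is then $C^\infty$, convex, increasing, satisfies $F_K^\eps(0) = 0$, has $f_K^\eps \le 1$, has $(f_K^\eps)'$ bounded (by a constant depending on $\eps$), and obeys $F_K^\eps \le F_K \le F_K^\eps + \eps$ pointwise. Applying Theorem~\ref{thm:HNThm} to $F_K^\eps$ gives
\begin{equation*}
\E^{\Q^{HN}} F_K^\eps\bigl(\langle \log S\rangle_T\bigr) \le \inf_{\Q \in \mathcal{Q}_T} \E^\Q F_K^\eps\bigl(\langle \log S\rangle_T\bigr) + \kappa(\eps).
\end{equation*}
Since $F_K \le F_K^\eps + \eps$ on the left and $F_K^\eps \le F_K$ on the right, this yields
\begin{equation*}
\E^{\Q^{HN}} F_K\bigl(\langle \log S\rangle_T\bigr) \le \inf_{\Q \in \mathcal{Q}_T} \E^\Q F_K\bigl(\langle \log S\rangle_T\bigr) + \kappa(\eps) + \eps.
\end{equation*}

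The main obstacle is controlling $\kappa(\eps)$ as $\eps \to 0$, since the bound on $(f_K^\eps)'$ diverges. I would resolve this by revisiting the proof of Theorem~\ref{thm:HNThm}: the heart of that argument compares the (essentially Root-type) Heston--Nandi stopping time $\tau^{HN}$ to the optimal Root embedding $\tau^R$ via the $v_T - v_0$ residual in \eqref{eq:HNProc}, and the resulting error takes the form $|F(\tau^{HN}) - F(\tau^R)| \le \|f\|_\infty |\tau^{HN} - \tau^R|$. After taking expectations this is controlled by $\|f\|_\infty \cdot \E|v_T - v_0|$ (up to factors depending on $\xi,\kappa$), which is bounded uniformly in $T$ by the mean-reverting behaviour of $v_t$. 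Thus $f'$ enters the theorem's hypothesis only as a technical regularity condition, while the resulting constant depends only on $\|f\|_\infty$ and the Heston--Nandi parameters. Since $\|f_K^\eps\|_\infty \le 1$ uniformly in $\eps$, we obtain $\kappa(\eps) \le \kappa_0$ for some $\kappa_0$ independent of both $\eps$ and $T$, and letting $\eps \to 0$ in the inequality above delivers the corollary.
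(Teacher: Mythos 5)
Your overall smoothing strategy—mollifying $F_K$ from below by $F_K^\eps$ and applying Theorem~\ref{thm:HNThm}—is the same as the paper's. The gap is in the final step where you argue that $\kappa(\eps)$ stays bounded as $\eps \to 0$ because the constant in Theorem~\ref{thm:HNThm} ``depends only on $\|f\|_\infty$ and the Heston--Nandi parameters.'' This claim is not justified, and a careful reading of the proof of Theorem~\ref{thm:HNThm} suggests it is false. The bound $f'(t) \le M^*$ is used quantitatively, not merely as a regularity condition, in two places: first, in establishing $\pd{M}{t}(x,t) \le \sup f'$, and second, and more importantly, in bounding the drift term $\gamma(x)$ in the submartingale decomposition of $G(\wti{X}_t,t)$ from Lemma~\ref{lem:PropofFn}.\emph{(ii)}. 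There one has
\begin{equation*}
\gamma(x) = f(R_T(x)) - \tfrac{1}{2}\Bigl(\tfrac{\kappa}{\xi}+\tfrac12\Bigr)^{-2}\bigl(f'(R_T(x)) - \pd{M}{t}(x,R_T(x)-)\bigr),
\end{equation*}
and the second term is controlled only by $\sup f' \le M^*$, not by $\|f\|_\infty$. Since $\kappa_2 = (\sup\gamma)\cdot \E\bigl[\int_{T - v_0/(\kappa\theta)}^T v_s\,\ds\bigr]$ feeds directly into the constant $\kappa$ of \eqref{eq:HNThm}, the theorem's constant scales like $M^*$, which in your scheme is $O(1/\eps)$. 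So $\kappa(\eps)+\eps$ may diverge as $\eps\to 0$, and the route you describe for ``resolving the main obstacle'' does not close.

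Fortunately the obstacle does not need to be resolved at all: the corollary only asserts that the conclusion of the theorem holds with \emph{some} finite constant, uniform in $K$ and $T$. It therefore suffices to fix $M^*$ once and for all (say $M^* = 1$), take $\eps = 1/(2M^*)$ fixed, and observe that $0 \le F_K - F_K^\eps \le \eps$ uniformly in $K$ while $F_K^\eps$ satisfies the hypotheses of the theorem with this fixed $M^*$. This yields \eqref{eq:HNThm} for $F_K$ with constant $\kappa(M^*) + \eps$, independent of $K$ and $T$, which is exactly what the paper's one-line proof does. Your $\eps\to 0$ step is both unnecessary for the statement you want, and incorrectly justified.
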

\begin{proof}
  For fixed $M^*$, the function $F_K(t)$ can be approximated uniformly
  from above and below by a suitably smooth function satisfying the
  conditions of Theorem~\ref{thm:HNThm}, independent of $K$. The
  result follows.
\end{proof}

The above result demonstrates that the seemingly strong assumptions on
the function $F(t)$ required in Theorem~\ref{thm:HNThm} are not a big
restriction: by allowing a slightly larger constant, we can consider
the class of functions which can be approximated by such functions
uniformly. As we will see, the exact smoothness requirements on $f(t)$
are that $f(t)$ has a H\"older continuous second derivative (although
we believe that this assumption could be relaxed).

Our arguments rely on the construction of a barrier, and the proof of
optimality described in \cite{CoxWang:11}. We recall some important
definitions here. We suppose that we are given a barrier function
$R_T(x)$ as defined in \eqref{eq:RTdefn}, and consider the geometric
Brownian motion $(\wti{X}_t)$ on this domain, with corresponding
hitting time $\tau_{D_T}$, where $D_T$ is as defined in
\eqref{eq:HNDdefn}. Then we define the function
\begin{equation} \label{eq:M2defn} M(x,t) = \E^{(x,t)}f(\tau_{D_T})
\end{equation}
and observe that (under the assumptions of Theorem~\ref{thm:HNThm}),
we have $M(x,t)$ bounded. Since we consider the case where $\wti{X}$
is geometric Brownian motion, we can assume $\sigma(x) = x$ in the
formulae from \cite{CoxWang:11}.

Now define a function $Z(x)$ by:
\begin{equation} \label{eq:Zdefn} Z(x) = 2 \int_{S_0}^x \int_{S_0}^y
  \frac{M(z,0)}{z^2} \dif z \dif y.
\end{equation}
So in particular, we have $Z''(x) = 2\frac{M(x,0)}{x^2}$ and $Z(x)$ is
a convex function. Define also:
\begin{equation} \label{eq:Gdefn} G(x,t) = \int_0^t M(x,s) \dif s -
  Z(x),
\end{equation}
and
\begin{equation}
  \label{eq:Hdefn}
  H(x) = \int_0^{R_T(x)} (f(s) - M(x,s)) \, \ds + Z(x).
\end{equation}
Then (\citet[Proposition~5.1]{CoxWang:11}) for all $(x,t) \in \R_+
\times \R_+$:
\begin{equation}
  \label{eq:GFineq}
  G(x,t) + H(x) \le F(t)
\end{equation}
with equality when $t = R_T(x)$.  In addition, if for any $T > 0$:
\begin{equation}
  \label{eq:qvbddcond}
  \E \left[ \int_0^T Z'(\wti{X}_s)^2 \sigma(\wti{X}_s)^2 \dif s \right] <
  \infty, \quad \E Z(\wti{X}_0) < \infty,
\end{equation}
then the process
\begin{equation} \label{eq:Gmart} G(\wti{X}_{t\wedge \tau_{D_T}},t
  \wedge \tau_{D_T}) \text{ is a martingale,}
\end{equation}
and
\begin{equation}\label{eq:Gsubmart}
  G(\wti{X}_{t},t) \text{ is a submartingale.}
\end{equation}

We collect some useful properties of these functions in the following
lemma:
\begin{lem} \label{lem:PropofFn} Under the assumptions of
  Theorem~\ref{thm:HNThm}, the functions $Z(x), H(x)$ and $G(x,t)$ as
  defined above have the following properties:
  \begin{enumerate}
  \item\label{item:2} $|\pd{G}{t}(x,t)| \le M^*$ for all $(x,t) \in
    \R_+\times \R_+$.

  \item \label{item:3} $G(\wti{X}_t,t)$ is a submartingale, with
    decomposition:
    \begin{align*}
      G(\wti{X}_t,t) & = G(\wti{X}_0,0) + \int_0^t \wti{X}_s
      \left(\int_0^{R_T(\wti{X}_s)\wedge t}\pd{M}{x}(\wti{X}_s,r) \,
        \dif r
        -Z'(\wti{X}_s)\right) \, \dif \wti{X}_s \\
      & \quad \quad {} - \int_0^t \gamma(\wti{X}_s)
      \indic{s>R_T(\wti{X}_s)} \, \ds,
    \end{align*}
    where
    \begin{equation*}
      \gamma(x) = f(R_T(x))-\half x^2 \pd{M}{x}(x,R_T(x)-)
      R_T'(x) \ge 0
    \end{equation*}
    is a bounded function.
  \end{enumerate}
    
\end{lem}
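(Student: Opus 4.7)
The two parts are approached in sequence: part (i) follows from a direct bound on $M$, and part (ii) from an It\^o-type calculation that handles the continuation region $D_T = \{t < R_T(x)\}$ and its complement separately, since $G$ is $C^{1,2}$ in the interior of each but only $C^{0,0}$ across the free boundary $\partial D_T$.

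For part (i), the convexity of $F$ together with $F(0) = 0$ and the increasing hypothesis give $f \ge 0$, and combined with $f \le M^*$ this yields $M(x,t) = \E^{(x,t)}[f(\tau_{D_T})] \in [0, M^*]$. Since $\pd{G}{t}(x,t) = M(x,t)$ by the definition of $G$, the bound $|\pd{G}{t}| \le M^*$ is immediate.

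For part (ii), in $D_T$ the Markov property yields the backward heat equation
\begin{equation*}
\pd{M}{t}(x,s) + \half x^2 \pdd{2}{M}{x}(x,s) = 0
\end{equation*}
with matching condition $M(x, R_T(x)) = f(R_T(x))$ at $\partial D_T$. Combined with $Z''(x) = 2M(x,0)/x^2$, this makes the drift of $G(\wti{X}_t, t)$ vanish in $D_T$, leaving the martingale integrand $\wti{X}_s\,\pd{G}{x}(\wti{X}_s, s) = \wti{X}_s \left(\int_0^{R_T(\wti{X}_s) \wedge s} \pd{M}{x}(\wti{X}_s, r)\,\dif r - Z'(\wti{X}_s)\right)$; since $\pd{M}{x}(x,r) = 0$ for $r > R_T(x)$ this coincides with the stated expression. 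On $D_T^c$, where $M(x,s) = f(s)$, we use the representation $G(x,t) = K(x) + F(t) - F(R_T(x))$ with $K(x) = \int_0^{R_T(x)} M(x,s)\,\dif s - Z(x)$; applying It\^o to this representation and using the identity $\int_0^{R_T(x)} \half x^2 \pdd{2}{M}{x}(x,s)\,\dif s = M(x,0) - f(R_T(x))$ (from the heat equation) together with the boundary-derivative term $\half x^2 \pd{M}{x}(x, R_T(x)-) R_T'(x)$ emerging from $K''$, one recovers the finite-variation contribution built out of $\gamma$. Crucially, the matching $M(x, R_T(x)) = f(R_T(x))$ makes $\pd{G}{x}$ continuous across $\partial D_T$, so no local-time term appears in a Meyer--It\^o (or Peskir) change-of-variable formula, and the stochastic integrals in the two regions concatenate into the single one in the statement.

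Non-negativity and boundedness of $\gamma$ are obtained from the explicit Heston--Nandi barrier: since $R_T(x) = (\kappa/\xi + \half)^{-1}[\log(S_0/x) + (\kappa\theta/\xi)T]$, the quantity $xR_T'(x) = -(\kappa/\xi + \half)^{-1}$ is a negative constant, so boundedness of $\gamma$ reduces to a uniform bound on $x\,\pd{M}{x}(x, R_T(x)-)$, which follows from classical parabolic gradient estimates for $M$ with Lipschitz boundary data (Lipschitz since $f' \le M^*$). The sign $\gamma \ge 0$ then follows from the negativity of $R_T'$ together with a coupling argument on the residual hitting time that pins down the sign of $\pd{M}{x}$ at the boundary. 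The main obstacle in the argument is the careful handling of the free boundary $\partial D_T$ in the It\^o decomposition --- establishing the continuity of $\pd{G}{x}$ there and justifying the change-of-variable formula for a function that is only $C^1$ across the curve --- together with the parabolic regularity estimate needed to bound $x\,\pd{M}{x}$ at the boundary uniformly in $x$ and~$T$.
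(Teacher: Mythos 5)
Your part~(i) argument and the overall structure of part~(ii) (heat equation in $D_T$ cancelling against $\half x^2 Z'' = M(\cdot,0)$, a separate computation on $D_T^c$, continuity of $\partial G/\partial x$ across $\partial D_T$ so that no local-time term appears, and an It\^o-type / Peskir change-of-variable formula to concatenate the two regions) match the paper's proof closely. The paper is somewhat more explicit about the regularity it needs --- it invokes a Schauder-type result (Lieberman, Thm.~5.14) to ensure $M$ has H\"older first and second spatial derivatives up to the boundary, then verifies the hypotheses of Peskir's change-of-variable theorem --- but your computation of the integrand and the finite-variation term is essentially the same.

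Where you diverge from the paper, and where I think you have a genuine gap, is in the boundedness of $\gamma$. You propose that the uniform bound on $x\,\pd{M}{x}(x,R_T(x)-)$ ``follows from classical parabolic gradient estimates for $M$ with Lipschitz boundary data.'' The paper instead uses an elementary algebraic identity: differentiate the free-boundary relation $M(x,R_T(x)) = f(R_T(x))$ in $x$ to obtain
\begin{equation*}
  \pd{M}{x}(x,R_T(x)-) = R_T'(x)\bigl(f'(R_T(x)) - \pd{M}{t}(x,R_T(x)-)\bigr),
\end{equation*}
so that, since $x R_T'(x) = -(\kappa/\xi + 1/2)^{-1}$ is a constant, the quantity $\half x^2 R_T'(x)\,\pd{M}{x}(x,R_T(x)-) = \half\bigl(xR_T'(x)\bigr)^2\bigl(f' - \pd{M}{t}\bigr)$ is controlled once one shows $\pd{M}{t}\le M^*$. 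That last bound is obtained by a simple coupling (shifting the start time pushes the barrier ``closer'', so $\pd{M}{t}\le\sup f' \le M^*$). This gives a bound that is \emph{manifestly} uniform in both $x$ and $T$, with constants depending only on $M^*$ and the Heston--Nandi parameters, which is exactly what Theorem~\ref{thm:HNThm} requires. Your appeal to interior/boundary gradient estimates would need a careful argument that the implied constant does not degrade as $T\to\infty$ (the domain $D_T$ grows with $T$), and one must also track the $x$-dependence of the local ellipticity of $\half x^2\partial_x^2$; it is not obvious these estimates yield the required uniformity without further work. You should replace that step with the differentiated boundary identity plus the coupling bound on $\pd{M}{t}$ --- it is shorter, sharper, and removes any doubt about $T$-uniformity.

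One further small point: your coupling argument for the sign of $\pd{M}{x}$ is a reasonable replacement for the paper's (rather terse) assertion that $R_T$ decreasing forces the needed sign; both reach the same conclusion, but neither is fully spelled out, and this is a place worth being careful, because the sign of $\pd{M}{x}$ at the boundary feeds directly into $\gamma\ge 0$.
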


\begin{proof}[Proof of Theorem~\ref{thm:HNThm}]
  Let $v_t$ be the squared volatility process for the Heston-Nandi
  price process $S_t$, and suppose we fix $T>0$ (although we will want
  our constants to be independent of $T$). Define the time-change
  process $\tau_T = \int_0^T v_t \, \dt$, and let $A_t$ be the
  right-inverse of $\tau_t$. In particular, if we define as usual
  $\wti{X_t} = \me^{-rA_t} S_{A_t}$, then $\wti{X}_t$ is a geometric
  Brownian motion with fixed law $\mu_T$ at time $\tau_T$. Using
  \eqref{eq:HNProc} in \eqref{eq:RTdefn} we get:
  \begin{equation}\label{eq:RTformula}
    R_T(\wti{X}_{\tau_T}) = \int_0^T v_s \, \ds +\frac{1}{\xi} \left( \frac{\kappa}{\xi}
      + \half\right)^{-1} (v_T-v_0).
  \end{equation}
  Since the variance process $v_s$ is mean reverting $\E|v_T-v_0|$,
  can be bounded uniformly for all $T$ by some constant depending only
  on the parameters of the model and so in particular, there exists a
  constant $\kappa_1$ such that
  \begin{equation*}
    \E |R_T(\wti{X}_{\tau_T}) - \tau_T| < \kappa_1.
  \end{equation*}
  From the bound on $f(t)$, it then follows that:
  \begin{equation} \label{eq:Fbound}
    \E[|F(\tau_T)-F(R_T(\wti{X}_{\tau_T}))|] \le M^* \kappa_1.
  \end{equation}
  Similarly, using Lemma~\ref{lem:PropofFn}.\ref{item:2}, we get
  \begin{equation*}
    \E[|G(\wti{X}_{\tau_T},\tau_T)-G(\wti{X}_{\tau_T}, R_T(\wti{X}_{\tau_T}))|] \le M^* \kappa_1.
  \end{equation*}
  
  In addition, using the decomposition from
  Lemma~\ref{lem:PropofFn}.\ref{item:3}, and noting that $\gamma(x)$
  is bounded above by a constant, $\kappa_2$ say, we have:
  \begin{equation*}
    \E G(\wti{X}_{\tau_T},\tau_T) \le  G(\wti{X}_0,0) +
    \kappa_2\E\left[\int_0^{\tau_T} \indic{s > R_T(\wti{X}_s)} \, \ds\right].
  \end{equation*}
  Observe from the definition of $\tau_t$, \eqref{eq:RTdefn} and
  \eqref{eq:HNProc} evaluated at a general time $t = A_s$:
  \begin{align*}
    \{ R_T(\wti{X}_s) < s \} & = \left\{ \left( \frac{\kappa}{\xi} +
        \half\right)^{-1} \left[ \frac{\kappa\theta}{\xi} (T-A_s) +
        \frac{1}{\xi} \left(v_{A_s}-v_0\right)\right] \le 0 \right\} \\
    & = \left\{A_s \ge \frac{\kappa\theta T + v_{A_s} -
        v_0}{\kappa\theta}\right\} \subseteq \left\{A_s \ge
      \frac{\kappa\theta T - v_0}{\kappa\theta}\right\}.
  \end{align*}
  Hence
  \begin{align*}
    \E\left[\int_0^{\tau_T} \indic{s > R_T(\wti{X}_s)} \, \ds\right] &
    \le \E \left[ \int_0^{\tau_T} \indic{A_s \ge
        T-\frac{v_0}{\kappa\theta}} \, \ds \right] = \E \left[
      \int_{\tau_{T-\frac{v_0}{\kappa\theta}}}^{\tau_T} \, \ds
    \right] \\
    & = \E \left[ \tau_T - \tau_{T-\frac{v_0}{\kappa\theta}} \right] =
    \E \left[ \int_{T-\frac{v_0}{\kappa\theta}}^{T} v_s \, \ds
    \right].
  \end{align*}
  Again, since $v_s$ is mean reverting, the right-hand-side can be
  bounded independently of $T$, and so
  \begin{equation} \label{eq:Gbound} \Ep{G(\wti{X}_{\tau_T},\tau_T)}
    \le G(\wti{X}_0,0) + \kappa_2
  \end{equation}
  for some constant $\kappa_2$.

  Now, using \eqref{eq:Fbound} and the fact that \eqref{eq:GFineq}
  holds with equality when $t=R_T(x)$, we have:
  \begin{align*}
    \E \left[ F(\tau_T) \right] & \le \E\left[
      F(R_T(\wti{X}_{\tau_T}))\right] +
    M^*\kappa_1\\
    & = \E\left[G(\wti{X}_{\tau_T},R_T(\wti{X}_{\tau_T}))\right] +
    \Ep{H(\wti{X}_{\tau_T})} +
    M^*\kappa_1\\
    & = \E\left[G(\wti{X}_{\tau_T},\tau_T)\right] +
    \Ep{H(\wti{X}_{\tau_T})} +
    2M^*\kappa_1\\
    & \le G(\wti{X}_0,0) + \Ep{H(\wti{X}_{\tau_T})} +
    2M^*\kappa_1+\kappa_2
  \end{align*}
  
  It remains for us to show that $\Ep{F(\wti{X}_\sigma)} \ge
  G(\wti{X}_0,0) + \Ep{H(\wti{X}_{\tau_T})} = G(\wti{X}_0,0) +
  \Ep{H(\wti{X}_{\sigma})}$ for any stopping time $\sigma$ with
  $\wti{X}_\sigma \sim \wti{X}_{\tau_T}$.

  We consider a localising sequence, $\sigma_N \uparrow \sigma$, and
  note that we then have: $\Ep{G(\wti{X}_{\sigma_N},\sigma_N)} \ge
  G(\wti{X}_0,0)$, since $G$ is a submartingale, and in addition,
  $\Ep{F(\sigma_N)} \uparrow \Ep{F(\sigma)}$ since $F(\cdot)$ is
  increasing. On account of \eqref{eq:GFineq}, it remains only to show
  $\Ep{H(\wti{X}_{\sigma})} \le \Ep{H(\wti{X}_{\sigma_N})}$. We first
  observe that $f$ is an increasing and bounded function, and if $f(t)
  = f(\infty)$ for all $t \ge t_0$, for some $t_0\in \R_+$, $M(x,t) =
  f(t)$ for all $t \ge t_0$. Since also $Z(x) \ge 0$ ($Z(x)$ is convex
  with $Z(S_0) = Z'(S_0) = 0$), we must have $H(x)$ bounded below. We
  can therefore apply Fatou's Lemma to deduce
  $\Ep{H(\wti{X}_{\sigma})} \le \Ep{H(\wti{X}_{\sigma_N})}$. To remove
  the assumption on $f(t)$, we observe that, by smoothly truncating
  $f$, we can approximate $F$ from below by an increasing sequence
  $F_N$ of functions which each have constant derivative, and such
  that $\Ep{F_N(\wti{X}_{\tau_T})} \uparrow
  \Ep{F(\wti{X}_{\tau_T})}$. Since each approximation satisfies the
  bound, the same must be true in the limit.
  

\end{proof}

\begin{proof}[Proof of Lemma~\ref{lem:PropofFn}]
  We first show \ref{item:2}. Observe that $M(x,R_T(x)) = f(R_T(x))$,
  so in particular, $M$ is continuous, and $f(t) \le M(x,t) \le M^*$
  since $M(x,t)$ is increasing in $t$. It follows immediately that
  $\pd{G}{t}(x,t) = M(x,t)$ is continuous and bounded, and in fact, is
  non-negative.

  For \ref{item:3} we aim to use \citet[Theorem~3.1]{Peskir:05}. We
  note that $M(x,t)$ is $C^{2,1}$ in ${D_T}$ since it is a martingale
  (in particular, $M$ is the (unique bounded) solution to a parabolic
  initial-value boundary problem). In fact, by
  \citet[Theorem~5.14]{Lieberman:96}, if we assume that $f''(t)$ is
  bounded in a H\"older norm, it follows that $M(x,t)$ has
  H\"older-bounded first and second spatial derivatives, and first
  time derivative. It is easy to check that $G(x,t) = \int_0^t M(x,s)
  \, \ds - Z(x)$ is also $C^{2,1}$ in ${D_T}$ as a
  consequence. Moreover, computing explicitly, we see that
  $\pd{G}{x}(x,t) = \int_0^t \pd{M}{x}(x,s) \, \ds - Z'(x)$ is also
  continuous on $\bar{D}_T$. If we write $C = \{(x,t) : R_T(x) <
  t\}$, so $\bar{C} \cup \bar{D}_T = \R_+ \times \R_+$ with boundary
  $R_T(x) = t$, we see that on $C$,
  \begin{equation*}
    G(x,t) = \int_0^{R_T(x)} M(x,s) \, \ds + \int_{R_T(x)}^t f(s) \, \ds - Z(x)
  \end{equation*}
  and again, $G$ is $C^{2,1}$ in $C$, and $\pd{G}{x}(x,t)$ is
  continuous on $\bar{C}$.

  Considering the conditions required for Theorem~3.1 of
  \cite{Peskir:05}, we observe that (3.18), (3.19), (3.26), (3.30) and
  (3.33) of \cite{Peskir:05} have now been shown, and so the theorem
  holds. Moreover, since the first spatial derivative of $G$ is
  continuous across the boundary, we do not get a local-time term on
  the boundary. Computing $\pd{G}{t}(x,t) + \half \sigma(x)^2
  \pd{^2G}{x^2}(x,t)$ results in the expression for $\gamma(x)$
  stated, and we observe that the boundary function $R_T(x)$ is a
  decreasing function of $x$, which implies in turn that
  $\pd{M}{x}(x,t)$ is positive at the boundary (since $f$ is
  increasing), so $\gamma(x)\ge0$.

  We finally show that $\gamma(x)$ is bounded. By assumption, $f$ is
  bounded, so we need only consider the second term. We have
  $M(x,R_T(x)) = f(R_T(x))$, and differentiating (recall that the
  derivatives of $M(x,t)$ on ${D_T}$ are H\"older continuous, and so
  extend continuously to the boundary) and rearranging we get:
  \begin{equation*}
    \pd{M}{x}(x,R_T(x)) =  R_T'(x)
    \left(
      f'(R_T(x))- \pd{M}{t}(x,R_T(x))
    \right)
  \end{equation*}
  Observing that (via a standard coupling argument)
  \[\Pr^{(x,t)}(\tau_{D_T} > t) \ge \Pr^{(x,t+\delta t)}(\tau_{D_T} >
  t+\delta t)\] whenever $(x,t) \in {D_T}$ (the later path sees a
  `bigger' stopping region). It follows that $\E^{(x,t+\delta t)}
  \left[ f(\tau_{D_T}) \right] \le \E^{(x,t)} \left[
    f(\tau_{D_T}+\delta t) \right]$, and therefore that
  $\pd{M}{t}(x,t) \le \sup_{t \in \R_+} f'(t) \le M^*$.

  Recalling finally that $\sigma(x) = x$, and observing that $R_T'(x)
  = - ( \kappa/\xi + 1/2)^{-1} x^{-1}$, we conclude that
  $\gamma(x)$ is indeed bounded.
\end{proof}

\begin{figure}[htb]
  \centering
  \includegraphics[width=.68\textwidth]{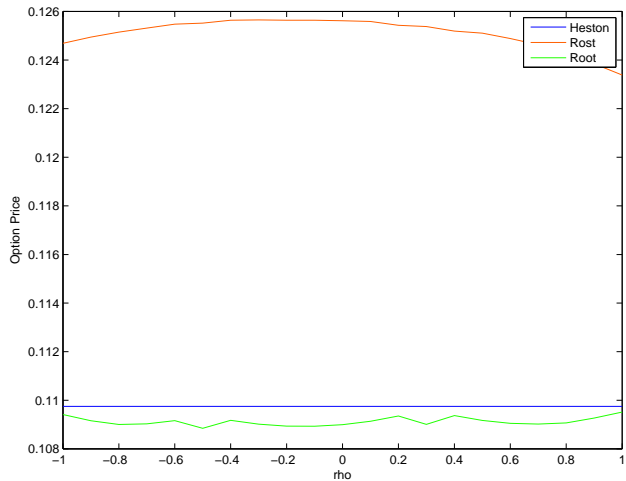}
  \caption{The upper and lower model-independent bounds on the price
    of a variance call, plotted as functions of the correlation
    between the asset and volatility processes. The constant line
    represents the price of the variance call under the Heston model
    --- this is constant, since the price is unaffected by the choice
    of $\rho$.}
  \label{fig:RhoBehaviour}
\end{figure}
We finish with some numerical evidence to support our
conjecture. Figure~\ref{fig:RhoBehaviour} shows the upper and lower bounds on
the price of a variance call, as determined using the numerical methods of
Section~\ref{sec:numerical-results}, seen as a function of the parameter
$\rho$. In this example, we use the same parameters as before, but with
$T=4$. It is notable that the lower bound and the price arising from the Heston
model are certainly close. It is also interesting to observe that it is not
only the Heston-Nandi model that seems to be close to extremal; rather this
seems to be a more general property of the Heston model. A good explanation of
this fact eludes us, but better understanding of this behaviour would appear to
be both practically relevant, and theoretically interesting.




\renewbibmacro*{in:}{} 

\printbibliography

\end{document}
